\declaretheorem{theorem}
\declaretheorem[sibling=theorem]{lemma}
\declaretheorem[sibling=theorem]{corollary}
\declaretheorem{remark}
\newcommand{\E}{\mathbf{E}}
\renewcommand{\Pr}{\mathbf{Pr}}
\newcommand{\alg}{A}
\newcommand{\opt}{\textsc{Opt}}
\newcommand{\pot}{\Phi}
\newcommand{\Sref}[1]{\hyperref[#1]{\S\ref*{#1}}}
\newcommand{\R}{\mathbb{R}}
\newcommand{\suggestedmatch}{\textsc{Suggested Matching}\xspace}
\newcommand{\multisuggestedmatch}{\textsc{Multistage Suggested Matching}\xspace}
\newcommand{\balance}{\textsc{Balance}\xspace}
\newcommand{\ranking}{\textsc{Ranking}\xspace}
\newcommand{\feldman}{\textsc{Feldman et al}\xspace}
\newcommand{\bahmani}{\textsc{Bahmani Kapralov}\xspace}
\newcommand{\manshadi}{\textsc{Manshadi et al}\xspace}
\newcommand{\jailletlu}{\textsc{Jaillet Lu (Integral)}\xspace}
\newcommand{\jlnonint}{\textsc{Jaillet Lu}\xspace}
\newcommand{\brubach}{\textsc{Brubach et al}\xspace}
\newcommand{\mindegree}{\textsc{Min Degree}\xspace}
\newcommand{\balanceocs}{\textsc{Balance OCS}\xspace}
\newcommand{\poissonocs}{\textsc{Poisson OCS}\xspace}
\newcommand{\tophalf}{\textsc{Top Half Sampling}\xspace}
\newcommand{\swor}{\textsc{Stochastic SWOR}\xspace}
\newcommand{\reggreedy}{\textsc{Regularized Greedy}\xspace}
\newcommand{\balanceswor}{\textsc{Balance SWOR}\xspace}
\newcommand{\correlated}{\textsc{Correlated Sampling}\xspace}
\newcommand{\haeupler}{\textsc{Haeupler et al}\xspace}
\title{
	Online Matching Meets Sampling Without Replacement
}
\author{
	Zhiyi Huang\thanks{The University of Hong Kong. \texttt{zhiyi@cs.hku.hk}} \and
	Chui Shan Lee\thanks{The University of Hong Kong. \texttt{cslee@cs.hku.hk}} \and
	Jianqiao Lu\thanks{The University of Hong Kong. \texttt{jqlu@cs.hku.hk}} \and
	Xinkai Shu\thanks{The University of Hong Kong. \texttt{xkshu@cs.hku.hk}}
}
\date{October 2024}
\begin{document}
	
	\begin{titlepage}
		\thispagestyle{empty}
		\maketitle
		\begin{abstract}
			\thispagestyle{empty}
			Sampling without replacement is a natural online rounding strategy for converting fractional bipartite matching into an integral one.
In Online Bipartite Matching, we can use the Balance algorithm to fractionally match each online vertex, and then sample an unmatched offline neighbor with probability proportional to the fractional matching.
In Online Stochastic Matching, we can take the solution to a linear program relaxation as a reference, and then match each online vertex to an unmatched offline neighbor with probability proportional to the fractional matching of the online vertex's type.
On the one hand, we find empirical evidence that online matching algorithms based on sampling without replacement outperform existing algorithms.
On the other hand, the literature offers little theoretical understanding of the power of sampling without replacement in online matching problems.

This paper fills the gap in the literature by giving the first non-trivial competitive analyses of sampling without replacement for online matching problems.
In Online Stochastic Matching, we develop a potential function analysis framework to show that sampling without replacement is at least $0.707$-competitive.
The new analysis framework further allows us to derandomize the algorithm to obtain the first polynomial-time deterministic algorithm that breaks the $1-\frac{1}{e}$ barrier.
In Online Bipartite Matching, we show that sampling without replacement provides provable online correlated selection guarantees when the selection probabilities correspond to the fractional matching chosen by the Balance algorithm.
As a result, we prove that sampling without replacement is at least $0.513$-competitive for Online Bipartite Matching.

		\end{abstract}
	\end{titlepage}

	\section{Introduction}
\label{sec:introduction}

Matching problems have been a central topic in theoretical computer science for decades.
They arise in a wide range of applications.
Online advertising platforms match users and advertisers.
Ride-hailing services assign drivers to riders.
Kidney exchange programs pair patients and donors.
In many of these scenarios, the goal is to match a set of agents or objects to another set of agents or objects, subject to some constraints or preferences.
The success of these applications heavily relies on the efficiency and effectiveness of the algorithms used to solve matching problems.

With the rise of online platforms and their need for real-time decision-making, online matching algorithms have gained much attention in recent years.
A prominent model is the Online Bipartite Matching problem proposed by \citet{Karp:STOC:1990}.
It considers a bipartite graph with online vertices on the left and offline vertices on the right.
The online vertices arrive one at a time, and the online algorithms need to decide how to match each online vertex as it arrives without information about future online vertices and their edges.

Researchers have developed two approaches to design algorithms for Online Bipartite Matching.
On the one hand, \citet{Karp:STOC:1990} proposed the \ranking algorithm and proved that it achieves the optimal $1-\frac{1}{e}$ competitive ratio.
It randomly ranks the offline vertices at the beginning, and then greedily matches each online vertex to the highest-ranked unmatched neighbor.

On the other hand, \citet{Kalyanasundaram:TCS:2000} considered the \balance algorithm (a.k.a., \textsc{Water-Filling} or \textsc{Water-Level}).
It may be viewed as a fractional online matching algorithm that splits one unit of each online vertex among its neighbors while balancing the total amounts of online vertices allocated to different offline vertices.
Moreover, there have been recent studies on online rounding algorithms that convert \balance and its variants into integral algorithms for Online Bipartite Matching.
For instance, \citet{Gao:FOCS:2021} designed the \balanceocs algorithm building on the online correlated selection technique, which was first proposed by \citet*{Fahrbach:2022:JACM}, then further developed in a series of works \cite{Huang:FOCS:2020, Shin:ARXIV:2021, Gao:FOCS:2021, Blanc:FOCS:2022}.
\citet*{BuchbinderNW:SODA:2023} and \citet*{Naor:SW:arXiv:2023} also studied online rounding algorithms for matching and how to use negative correlation to improve  rounding efficiency.

Despite these efforts, there has been no theoretical understanding of arguably the most natural rounding algorithm, which samples an unmatched neighbor for each online vertex with probability proportional to the fractional allocation of \balance.
Since this approach effectively picks an offline vertex via sampling without replacement (SWOR) with sampling masses given by \balance, we call the resulting online matching algorithm \balanceswor.

We next consider a related problem known as Online Stochastic Matching, which was first introduced by \citet*{FeldmanMMM:FOCS:2009}.
It is also about online matching on bipartite graphs but the online vertices therein are generated by a random process that is known to the algorithms from the beginning.

There is a rich literature on Online Stochastic Matching, in which all algorithms follow a two-step framework.
First, algorithms find a fractional bipartite matching between the offline vertices and the online vertex types, which represents the likelihood that they are matched in the optimal matching of the randomly generated bipartite graph.
This fractional matching is often computed by solving a linear program (LP) constructed based on the stochastic process that generates online vertices.
Then, when an online vertex arrives, algorithms use different online rounding strategies to make a matching decision, based on the current set of unmatched offline vertices and the fractional matching of the online vertex's type.

For example, the \suggestedmatch algorithm~\cite{FeldmanMMM:FOCS:2009} first samples an offline neighbor (regardless of whether it is matched) with probabilities equal to the fractional matching of the online vertex's type.
It then matches the online vertex to this neighbor if it is still unmatched.
This simple approach is shown to be $(1-\frac{1}{e})$-competitive for Online Stochastic Matching.

Since then, many efforts \cite{FeldmanMMM:FOCS:2009, Bahmani:ESA:2010, Haeupler:WINE:2011, ManshadiOS:MOR:2012, Jaillet:MOR:2014, Brubach:ESA:2016, HuangS:STOC:2021, Yan:arXiv:2022} have been devoted to improving the online rounding strategy by sampling a second offline neighbor, usually correlated to the sampling result of the first neighbor, so that algorithms could try the second neighbor when the first is already matched.
This is known as the power of two choices in Online Stochastic Matching, resulting in online algorithms with competitive ratios strictly better than $1-\frac{1}{e}$.
\citet*{HuangSY:STOC:2022} and \citet*{TangWW:STOC:2022} explored the power of multiple choices in Online Stochastic Matching by adopting online correlated selection algorithms as their rounding strategies.
The \poissonocs algorithm by \citet*{HuangSY:STOC:2022} achieves the state-of-the-art competitive ratio of $0.716$.

Once again, despite the long line of research, we still have little understanding on arguably the most natural rounding algorithm, which samples an unmatched neighbor with probability proportional to the fractional matching of the online vertex's type.
We will refer to this algorithm as \swor, where SWOR stands for sampling without replacement.

\subsection{Our Contributions}

We first examine the mentioned \balanceswor and \swor algorithms through empirical experiments to determine whether it is worthwhile to further study them theoretically.
Our experiments compare them with existing online matching algorithms on six real-world datasets that were used in a recent empirical study by \citet*{Borodin:JEA:2020}.

To our surprise, \balanceswor and \swor outperform existing algorithms that admit non-trivial theoretical guarantees.
Among algorithms for Online Bipartite Matching, \balanceswor achieves slightly better results compared to both \balanceocs and \ranking.
Among algorithms for Online Stochastic Matching, \swor and a derandomized version called \reggreedy, which we will explain shortly, perform slightly better than \poissonocs, which in turn shows substantially better results than other algorithms from the Online Stochastic Matching literature.
See Table~\ref{tab:intro-experiment} for some of our empirical results, and Appendix~\ref{sec:experiment} for more details.

\begin{table}[htbp]
	\caption{Empirical results for the algorithms considered in this paper (in bold) and top-performing existing algorithms with known theoretical guarantees. See Appendix~\ref{sec:experiment} for the full results.}
	\label{tab:intro-experiment}
	\begin{center}
		\begin{tabular}{lcccccc}
			\toprule
			Algorithm &Caltech36 &Reed98 &CE-GN &CE-PG &beause &mbeaflw\\
			\midrule
			\textbf{Regularized Greedy} & 0.928 & 0.929 & 0.984 & 0.990 & 0.962 & 0.966\\
			\textbf{Stochastic SWOR} & 0.929 & 0.927 & 0.958 & 0.962 & 0.959 & 0.975\\
			Poisson OCS & 0.929 & 0.926 & 0.957 & 0.960 & 0.958 & 0.974\\[1ex]
			\textbf{Balance SWOR} & 0.874 & 0.873 & 0.943 & 0.950 & 0.943 & 0.971\\
			Balance OCS & 0.871 & 0.870 & 0.942 & 0.949 & 0.942 & 0.970\\
			Ranking & 0.859 & 0.859 & 0.934 & 0.944 & 0.936 & 0.966\\
			\bottomrule
		\end{tabular}
	\end{center}
\end{table}

\vspace{-1.5em}
Conceptually, the empirical experiments suggest that a simple rounding strategy like sampling without replacement may be sufficient in the context of online matching problems, and may even be slightly superior compared to more involved techniques like online correlated selection.
The rest of the paper is therefore devoted to providing non-trivial competitive analyses of \balanceswor and \swor.

\paragraph{Competitive Analysis of Stochastic SWOR.}
This paper gives the first non-trivial analysis of \swor, showing that it is at least $0.707$-competitive for Online Stochastic Matching.
Although it is worse than the state-of-the-art $0.716$ ratio \cite{HuangSY:STOC:2022}, it surpasses a natural barrier $0.706$ suffered by all algorithms prior to the recent introduction of the Natural LP by \cite{HuangS:STOC:2021, HuangSY:STOC:2022}.

Our analysis develops a \emph{potential function analysis framework} that is novel in the context of Online Stochastic Matching.
Up to normalization, we may assume without loss of generality that online vertices arrive within the time horizon $0$ to $1$.
We design a potential function $\pot(t)$ for any time $0 \le t \le 1$, which depends on the LP-based fractional matching and the current set of unmatched offline vertices.
Intuitively, $\pot(t)$ is a lower bound for the expected number of matches the online algorithm would get after time $t$.
To substantiate this intuition, $\pot(t)$ shall satisfy three conditions.
The first one is a boundary condition at the beginning such that $\pot(0) \ge \Gamma \cdot \opt$, where $\Gamma$ is the competitive ratio and $\opt$ is the optimal objective.
The second one is a boundary condition at the end, namely, $\pot(1) = 0$.
Finally, we will show that the online algorithm's expected match rate weakly dominates the potential function's expected decrease rate, i.e., $\E\,\big[ \alg(t) + \pot(t) \big]$ is non-decreasing in $0 \le t \le 1$ where $\alg(t)$ denotes the size of algorithm's matching up to time $t$.

As natural as it may seem, this potential function analysis framework has never been employed in the literature of Online Stochastic Matching.
We argue that it might be the right approach.
In fact, the folklore yet computationally intractable optimal algorithm for Online Stochastic Matching fits into this framework.
Here we let $\pot(t)$ be the best expected matching size achievable by online algorithms after time $t$, which can be computed in exponential time by backward induction.
When an online vertex arrives, the optimal online algorithm matches it to an unmatched offline neighbor such that the potential function decreases the least.

The challenge is therefore to design a computationally tractable potential function for the stated competitive ratio.
Our potential computes the contribution of each online vertex type separately and then sums them up.
Consider for simplicity an online type $i$ with unit arrival rate.
Suppose that the fractional matching assigns an $x_i$ fraction of $i$ to \emph{its currently unmatched offline neighbors}.
If this is at the beginning, type $i$'s contribution would be mainly proportional to $x_i$, because the LP-based fractional matching serves as a reference that implicitly accounts for the likelihood that all $i$'s neighbors are already matched when a type-$i$ vertex comes.
As time proceeds, type $i$'s contribution to the potential would smoothly transition to be proportional to $\min \{ \frac{x_i}{\theta}, 1\}$ for some parameter $0 \le \theta \le \frac{1}{2}$.
This is because as we get closer to the end, it is less likely that all $i$'s neighbors would get matched within the remaining time.
In the extreme case when it is already at the end, type $i$'s contribution should simply be proportional to its unit arrival rate.
The mentioned functional form interpolates between the two extremes, and is inspired by the analysis of the \tophalf algorithm by \citet{HuangSY:STOC:2022}.
Finally, we optimize $\theta$ to get the $0.707$ competitive ratio.

\paragraph{Deterministic Online Stochastic Matching Algorithm.}
Due to the potential function analysis, we can further derandomize the algorithm by matching each online vertex to an unmatched offline neighbor for which the potential function decreases the least.
The algorithm-dependent part of the potential's decreased amount may be viewed as a regularization term, which measures how likely an offline neighbor could get matched later if we do not match it now.
We therefore call this algorithm \reggreedy.
By design, it is at least $0.707$-competitive for Online Stochastic Matching.
This is the first polynomial-time deterministic algorithm that breaks the $1-\frac{1}{e}$ barrier.%
\footnote{The simple greedy algorithm is $(1-\frac{1}{e})$-competitive \cite{GoelM:SODA:2008}. The folklore optimal online algorithm based on backward induction is also deterministic, but is computationally intractable.}
Empirical experiments further suggest that the derandomized \reggreedy is slightly better than the randomized \swor on real-world graphs.

\paragraph{Competitive Analysis of Balance SWOR.}
Last but not least, we offer the first non-trivial analysis of the \balanceswor algorithm, proving that it is at least $0.513$-competitive for Online Bipartite Matching.
Our analysis is based on the following observation which is simple in hindsight.
Although sampling without replacement fails to give any worst-case guarantee with respect to (w.r.t.) the online correlated selection problem~\cite{Gao:FOCS:2021}, in \balanceswor it only needs to handle sampling probabilities chosen by the \balance algorithm, which by design balances the total sampling probabilities assigned to different offline vertices.
This observation circumvents the worst-case instance for sampling without replacement.
As a result, we derive a restricted form of online correlated selection guarantees, and the stated competitive ratio for Online Bipartite Matching.

\subsection{Other Related Work}

After \citet{Karp:STOC:1990}, \citet{Aggarwal:SODA:2011} extended the \ranking algorithm and the $(1-\frac{1}{e})$ competitive ratio to vertex-weighted matching.
\citet{FeldmanMMM:FOCS:2009} proved that no online algorithm can achieve a non-trivial worst-case competitive ratio for edge-weighted matching.
They further proposed the \emph{free disposal model} and an optimal $(1-\frac{1}{e})$-competitive algorithm in this model under a large-market assumption.
AdWords is another noteworthy variant of Online Bipartite Matching, which was proposed by \citet*{Mehta:JACM:2007}.
They gave an optimal $(1-\frac{1}{e})$-competitive algorithm for AdWords also under a large-market assumption.
The algorithms by \citet{FeldmanMMM:FOCS:2009} and \citet{Mehta:JACM:2007} may be viewed as generalizations of the \balance algorithm.
Subsequently, these results have been streamlined and integrated within the online primal-dual framework \cite{Buchbinder:ESA:2007, Devanur:SODA:2013, Devanur:TEAC:2016}.
The aforementioned online correlated selection technique contributed to breaking the $\frac{1}{2}$ barrier in edge-weighted matching with free disposal~\cite{Blanc:FOCS:2022,Gao:FOCS:2021,Shin:ARXIV:2021,Fahrbach:2022:JACM} and AdWords~\cite{Huang:FOCS:2020}, without the large-market assumptions.

These extensions of Online Bipartite Matching have also been studied in the stochastic model.
\citet{HuangSY:STOC:2022} proposed \poissonocs and \tophalf which get the state-of-the-art $0.716$ and $0.706$ competitive ratios for vertex-weighted matching and edge-weighted matching with free disposal respectively.
For edge-weighted matching without free disposal, \citet{HuangSY:STOC:2022} proved a $0.703$ upper bound, separating it from the other settings.
\citet{Yan:arXiv:2022} recently proposed the \multisuggestedmatch algorithm which breaks the $1-\frac{1}{e}$ barrier, and this result was further improved by \citet{Feng:arXiv:2023}.
\citet{AouadM:EC:2023} studied edge-weighted online stochastic matching with correlated online vertex arrivals.
\citet{MaXX:WINE:2021} studied individual fairness maximization and group fairness maximization in online stochastic matching with integral arrival rate.

The literature has also studied the random order model of online matching which lies between the worst-case and stochastic models.
For (unweighted) Online Bipartite Matching with random arrival order, \citet*{Karande:STOC:2011} and \citet{Mahdian:STOC:2011} showed that \ranking's competitive ratio is between $0.696$ and $0.727$.
For vertex-weighted matching, \citet*{Huang:TALG:2019} and \citet{Jin:ARXIV:2020} proposed variants of \ranking that break the $1-\frac{1}{e}$ barrier.
For edge-weighted matching without free disposal, \citet*{Kesselheim:ESA:2013} generalized the secretary algorithm to get an optimal $\frac{1}{e}$ competitive ratio.
Finally, \citet{Devanur:EC:2009}, \citet*{KesselheimRTV:SICOMP:2018}, and \citet{AgrawalD:SODA:2014} proposed $1-\epsilon$ competitive algorithms for AdWords under a large-market assumption.

	\section{Preliminaries}
\label{sec:preliminaries}
\subsection{Online Bipartite Matching}

We first define the Online Bipartite Matching problem, which was proposed by \citet*{Karp:STOC:1990}.
Consider an undirected bipartite graph $G = (I, J, E)$, where $I$ and $J$ are the sets of left-hand-side and right-hand-side vertices respectively, and $E$ is the set of edges.
Define $I_j$ to be the set of online neighbors of an offline vertex $j \in J$, and define $J_i$ similarly for $i \in I$.
In the beginning, an online algorithm knows only the right-hand-side vertices, but not the left-hand-side vertices and edges.
Then, the left-hand-side vertices arrive one by one.
When a left-hand-side vertex $i \in I$ arrives, the algorithm observes its incident edges, and must immediately and irrevocably decide how to match $i$.
Based on the distinct nature of left-hand-side and right-hand-side vertices, we will refer to them as online and offline vertices respectively.

We consider maximizing the size of the matching.
Following the standard competitive analysis, we compare the size of an online algorithm's matching, taking expectation over the algorithm's randomness, with the size of the optimal matching in hindsight, i.e., the best achievable objective if we had complete information about the bipartite graph $G = (I, J, E)$. 
The competitive ratio of an online algorithm is the minimum of this ratio among all bipartite graphs and all arrival orders of online vertices.
An algorithm is $\Gamma$-competitive if its competitive ratio is at least $\Gamma$.

\subsection{Online Stochastic Matching}

Next we present the Online Stochastic Matching problem introduced by 
\citet*{FeldmanMMM:FOCS:2009}, and the Natural LP recently developed by \citet{HuangS:STOC:2021}.

Consider an undirected bipartite graph $G = (I, J, E)$, where $I$ is the set of \emph{online vertex types}, $J$ the set of offline vertices, and $E$ the set of edges.
Let $I_j$ and $J_i$ denote the neighborhoods of $j \in J$ and $i \in I$.
We will refer to $G$ as the \emph{type graph}.
Online vertices of each type $i \in I$ arrives from time $0$ to $1$ by a Poisson process with arrival rate $\lambda_i > 0$.%
\footnote{The original model by \citet{FeldmanMMM:FOCS:2009} considered $\Lambda = \sum_{i \in I} \lambda_i$ online vertices each of which samples its type $i$ with probability $\frac{\lambda_i}{\Lambda}$. \citet{HuangS:STOC:2021} argued that Poisson arrivals are more convenient for the independence of different online vertex types, and proved an asymptotic equivalence of the two models.}
An online algorithm knows the type graph and the arrival rates from the beginning, but not the realization of online vertex arrivals.
When an online vertex arrives, the algorithm sees its type $i$ and thus the incident edges, and must immediately and irrevocably decide how to match it.

We still consider maximizing the cardinality of the matching.
On the one hand, we have the expected size of an online algorithm's matching over the algorithm's randomness and the random online vertex arrivals.
On the other hand, we consider the expected size of the optimal matching in hindsight over the random online vertex arrivals, i.e., the best achievable objective in expectation if we always choose the optimal matching w.r.t.\ the realized bipartite graph.
The competitive ratio of an online algorithm is the minimum of this ratio among all possible type graphs and arrival rates.
An algorithm is $\Gamma$-competitive if its competitive ratio is at least $\Gamma$.

\paragraph{Natural Linear Program.}
Consider $x = (x_{ij})_{(i,j) \in E}$ where $x_{ij}$ corresponds to the probability that offline vertex $j$ is matched to an online vertex of type $i$.
We will use the following Natural LP:
\begin{equation}
	\label{eqn:natural-lp}
	\begin{aligned}
		\text{maximize} \quad & \textstyle \sum_{(i,j) \in E} x_{ij} \\[1ex]
		\text{subject to} \quad & \textstyle 
		\sum_{j \in J_i} x_{ij} \le \lambda_i && \forall i \in I \\[1ex]
		& \textstyle 
		\sum_{i \in S} x_{ij} \le 1 - \exp \big( - \sum_{i \in S} \lambda_i \big) && \forall j \in J, \forall S \subseteq I_j \\[1ex]
		& \textstyle x_{ij} \ge 0 && \forall (i,j)\in E
	\end{aligned}
\end{equation}

The Natural LP plays two roles in Online Stochastic Matching.
First, it is a computationally efficient upper bound of the optimal objective, because it can be solved in polynomial time using the ellipsoid method and a polynomial time separation oracle for its second constraint~\cite{HuangS:STOC:2021}.
Further, online algorithms can use the optimal solution of the Natural LP, which we will abuse notation and denote also $x = (x_{ij})_{(i,j) \in E}$, as a reference for making online decisions.

\begin{lemma}[Converse Jensen Inequality, c.f., \citet{HuangS:STOC:2021}]
	\label{lem:converse-jensen}
	For any convex $f : [0, 1] \to \R$ such that $f(0) = 0$, and any offline vertex $j \in J$:
	\[
	\sum_{i \in I} \lambda_i f\Big( \frac{x_{ij}}{\lambda_i} \Big) \le \int_0^{\infty} f\big(e^{-\lambda}\big) d\lambda
	~.
	\]
\end{lemma}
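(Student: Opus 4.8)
The plan is to prove the inequality for each offline vertex $j$ separately. Fix $j$; the terms with $x_{ij}=0$ vanish since $f(0)=0$, so it is enough to bound $\sum_{i \in I_j}\lambda_i f(x_{ij}/\lambda_i)$. The only property of the LP solution I would use is that the second family of constraints in \eqref{eqn:natural-lp} confines $(x_{ij})_{i\in I_j}$ to the set $P_j \defeq \bigl\{ y \in \R_{\ge 0}^{I_j} : \sum_{i\in S} y_i \le g_j(S) \text{ for all } S \subseteq I_j \bigr\}$, where $g_j(S) \defeq 1 - \exp\bigl(-\sum_{i\in S}\lambda_i\bigr)$. Since $g_j$ is a concave increasing function of the modular quantity $\sum_{i\in S}\lambda_i$, it is monotone and submodular, so $P_j$ is a polymatroid; in particular $x_{ij} \le g_j(\{i\}) = 1-e^{-\lambda_i} \le \lambda_i$, so each $t_i \defeq x_{ij}/\lambda_i$ lies in $[0,1]$ and $f(t_i)$ is well defined. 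It therefore suffices to upper bound $\max_{y\in P_j}\sum_{i\in I_j}\lambda_i f(y_i/\lambda_i)$.

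Next I would note that $y \mapsto \sum_i \lambda_i f(y_i/\lambda_i)$ is convex (each summand is $f$ composed with an affine map, scaled by $\lambda_i>0$), so its maximum over the polytope $P_j$ is attained at a vertex; and by the standard description of the vertices of a polymatroid, every vertex of $P_j$ is a greedy point: order some subset of $I_j$ as $i_1,\dots,i_m$ and set $y_{i_k} = g_j(\{i_1,\dots,i_k\}) - g_j(\{i_1,\dots,i_{k-1}\})$ for $k \le m$, with $y_i=0$ otherwise. Relabel this order as $1,\dots,m$ and write $\Lambda_k \defeq \lambda_1+\dots+\lambda_k$. Then $y_k = e^{-\Lambda_{k-1}}-e^{-\Lambda_k} = \int_{\Lambda_{k-1}}^{\Lambda_k} e^{-\lambda}\,d\lambda$, so $y_k/\lambda_k$ is exactly the average of $e^{-\lambda}$ over the length-$\lambda_k$ interval $[\Lambda_{k-1},\Lambda_k]$, and Jensen's inequality for the convex $f$ gives $\lambda_k f(y_k/\lambda_k) \le \int_{\Lambda_{k-1}}^{\Lambda_k} f(e^{-\lambda})\,d\lambda$. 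Summing over $k$ telescopes the intervals into $[0,\Lambda_m]$:
\[
\sum_{i\in I_j}\lambda_i f\bigl(\tfrac{x_{ij}}{\lambda_i}\bigr) \;\le\; \int_0^{\Lambda_m} f(e^{-\lambda})\,d\lambda \;\le\; \int_0^{\infty} f(e^{-\lambda})\,d\lambda ,
\]
where the last step uses $f \ge 0$ on $[0,1]$ (for convex $f$ with $f(0)=0$ this is equivalent to $f$ being non-decreasing, which is the case in all of our applications).

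I expect the reduction to a polymatroid vertex to be the step needing the most care: it says the extremal $(x_{ij})_i$ is the ``nested/greedy'' one whose subset constraints are tight along a chain $\emptyset \subsetneq \{1\} \subsetneq \{1,2\} \subsetneq \cdots$, and it also requires checking that $P_j$ is genuinely a polymatroid (monotonicity and submodularity of $g_j$). Everything after that — identifying $y_k/\lambda_k$ as an average of $e^{-\lambda}$ and applying Jensen — is routine. If one prefers to avoid polymatroid machinery, the same proof can be carried out by hand: sort the types so that $t_1 \ge t_2 \ge \cdots$, use the prefix constraints $\sum_{i\le k}\lambda_i t_i \le 1-e^{-\Lambda_k}$ to greedily carve out disjoint measurable sets $B_k \subseteq [0,\infty)$ with $|B_k| = \lambda_k$ and $\int_{B_k} e^{-\lambda}\,d\lambda = \lambda_k t_k$ (each step is feasible precisely because of the prefix constraint at that step), and then apply Jensen on each $B_k$ and sum; the disjointness bookkeeping there plays the role of the vertex argument.
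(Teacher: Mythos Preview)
The paper does not prove this lemma; it is quoted from \citet{HuangS:STOC:2021} (note the ``c.f.''\ in the lemma header), so there is no in-paper argument to compare against. Your proof is correct and is the standard one for this type of inequality: the second family of LP constraints defines a polymatroid for the submodular rank function $S\mapsto 1-e^{-\lambda(S)}$, the convex objective is maximised at a greedy vertex, at such a vertex each $x_{ij}/\lambda_i$ is the average of $e^{-\lambda}$ over a subinterval of $[0,\infty)$, and Jensen finishes. Your alternative ``carve out disjoint sets $B_k$'' description is the same proof said without polymatroid vocabulary.

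Your caveat about needing $f\ge 0$ (equivalently $f$ non-decreasing) is genuine, not over-caution: with $f(x)=-x$ the left-hand side equals $-\sum_i x_{ij}>-1$ while the right-hand side equals $-1$, so the inequality as literally stated can fail. The statement as printed is therefore slightly imprecise, but every use in this paper (in particular Corollary~\ref{cor:converse-jensen-theta}, with $f(x)=\tfrac{1}{\theta}(x-(1-\theta))^+$) takes a non-negative $f$, so nothing downstream is affected and your proof covers all the cases that matter.
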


Denote function $\max \{x, 0\}$ as $x^+$.
For any $0 \le \theta \le \frac{1}{2}$, consider $f(x) = \frac{1}{\theta} \big(x - (1-\theta)\big)^+$.
We have the next corollary of the Converse Jensen Inequality.

\begin{corollary}
	\label{cor:converse-jensen-theta}
	For any $0 \le \theta \le \frac{1}{2}$ and any offline vertex $j \in J$:
	\[
	\frac{1}{\theta} \sum_{i \in I} \big(  x_{ij} - (1-\theta)\lambda_i \big)^+ \le 1 + \frac{1-\theta}{\theta} \ln(1-\theta)
	~.
	\]
\end{corollary}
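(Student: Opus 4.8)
The plan is to apply Lemma~\ref{lem:converse-jensen} directly with $f(x) = \frac{1}{\theta}\big(x-(1-\theta)\big)^+$ and then evaluate both sides explicitly; the statement is essentially a specialization, so the work is purely in checking hypotheses and computing an integral. First I would verify that $f$ is admissible in Lemma~\ref{lem:converse-jensen}: the map $x \mapsto \big(x-(1-\theta)\big)^+$ is the pointwise maximum of the affine functions $0$ and $x-(1-\theta)$, hence convex on $[0,1]$, and scaling by the positive constant $\frac1\theta$ preserves convexity; moreover $1-\theta \ge \frac12 > 0$, so $f(0) = \frac1\theta\big({-(1-\theta)}\big)^+ = 0$. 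The arguments that appear on the left-hand side lie in the domain: the LP constraint $\sum_{j \in J_i} x_{ij} \le \lambda_i$ together with $x_{ij} \ge 0$ forces $\frac{x_{ij}}{\lambda_i} \in [0,1]$.

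Next I would rewrite both sides of the conclusion of Lemma~\ref{lem:converse-jensen}. Since $\lambda_i > 0$ and the positive part is positively homogeneous, $\lambda_i\, f\big(\tfrac{x_{ij}}{\lambda_i}\big) = \tfrac1\theta\big(x_{ij} - (1-\theta)\lambda_i\big)^+$, so summing over $i \in I$ gives exactly the left-hand side $\frac1\theta \sum_{i \in I}\big(x_{ij} - (1-\theta)\lambda_i\big)^+$ of the corollary. For the right-hand side, $\int_0^\infty f(e^{-\lambda})\,d\lambda = \frac1\theta \int_0^\infty \big(e^{-\lambda} - (1-\theta)\big)^+ d\lambda$; since $e^{-\lambda} \ge 1-\theta$ exactly when $\lambda \le -\ln(1-\theta)$, and $-\ln(1-\theta) \ge 0$ because $\theta \ge 0$, this equals $\frac1\theta \int_0^{-\ln(1-\theta)} \big(e^{-\lambda} - (1-\theta)\big)\,d\lambda$. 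Evaluating the two elementary integrals, $\int_0^{-\ln(1-\theta)} e^{-\lambda}\,d\lambda = 1-(1-\theta) = \theta$ and $\int_0^{-\ln(1-\theta)} (1-\theta)\,d\lambda = -(1-\theta)\ln(1-\theta)$, so the bound becomes $\frac1\theta\big(\theta + (1-\theta)\ln(1-\theta)\big) = 1 + \frac{1-\theta}{\theta}\ln(1-\theta)$, which is precisely the claimed right-hand side.

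There is no genuine obstacle here: the only things requiring care are confirming the convexity and $f(0)=0$ hypotheses of Lemma~\ref{lem:converse-jensen}, correctly identifying the cutoff $-\ln(1-\theta)$ where the positive part becomes active, and keeping track of the sign of $\ln(1-\theta) \le 0$ when simplifying. The boundary case $\theta = 0$ (where $f$ is only defined by a limit) is handled by continuity, both sides tending to $0$ as $\theta \to 0^+$.
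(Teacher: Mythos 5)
Your proposal is correct and is exactly the route the paper takes (it states the choice $f(x)=\frac{1}{\theta}(x-(1-\theta))^+$ and leaves the verification implicit); your homogeneity rewriting of the left-hand side and the evaluation of $\int_0^{-\ln(1-\theta)}(e^{-\lambda}-(1-\theta))\,d\lambda=\theta+(1-\theta)\ln(1-\theta)$ are both accurate.
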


	\section{Stochastic Sampling Without Replacement}
\label{sec:stochasticSWOR}

This section presents the first nontrivial competitive analysis of the \swor algorithm.
We further derandomize it and get a deterministic algorithm called \reggreedy which achieves the same competitive ratio.
We first formally define the \swor algorithm.

\begin{tcolorbox}[beforeafter skip=10pt]
	\textbf{Stochastic Sampling Without Replacement}\\[1ex]
	\emph{Input at the beginning:}
	\begin{itemize}[itemsep=0pt, topsep=4pt]
		\item Type graph $G = (I, J, E)$;
		\item Arrival rates $(\lambda_i)_{i \in I}$;
		\item Solution of the Natural LP $(x_{ij})_{(i,j) \in E}$.
	\end{itemize}
	\smallskip
	\emph{When an online vertex of type $i \in I$ arrives at time $0 \le t \le 1$:}
	\begin{itemize}[itemsep=0pt, topsep=4pt]
		\item Match it to an unmatched offline neighbor $j$ with probability proportional to $x_{ij}$.
	\end{itemize}
\end{tcolorbox}

\smallskip

\swor is at least weakly better than \suggestedmatch, which samples an offline vertex $j$ regardless of whether it is already matched.
Hence, \swor is at least $1-\frac{1}{e} \approx 0.632$-competitive.
Our main result is a substantially better ratio.

\begin{theorem}
	\label{thm:swor}
	\swor is $0.707$-competitive for Online Stochastic Matching.
\end{theorem}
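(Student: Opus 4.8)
The plan is to carry out the potential function analysis sketched in the introduction. For a time $t \in [0,1]$ and a realization of the arrival process, let $U(t) \subseteq J$ be the set of offline vertices still unmatched at time $t$, and for each online type $i$ set $x_i(t) \defeq \frac{1}{\lambda_i}\sum_{j \in J_i \cap U(t)} x_{ij} \in [0,1]$, the fraction of type $i$'s LP mass directed to its currently available neighbors; this quantity is non-increasing in $t$, and $x_i(0) = \frac{1}{\lambda_i}\sum_{j \in J_i} x_{ij} \le 1$ by the degree constraint of the Natural LP. I would define the (random) potential as a sum of per-type contributions, $\pot(t) = \sum_{i \in I} \lambda_i \cdot P_\theta\big(t, x_i(t)\big)$, where $P_\theta : [0,1]^2 \to \R_{\ge 0}$ is a carefully chosen function, increasing in its second argument, with a free parameter $\theta \in [0,\tfrac12]$ optimized at the end. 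The design goal for $P_\theta$ is that $P_\theta(t,x)$ should behave like $\Gamma \cdot x$ near $t = 0$ — so that the LP objective, hence $\opt$, is captured — while as $t \to 1$ its rate of change should match the ``future match rate'' $\min\{x/\theta, 1\}$ suggested by the \tophalf analysis of \citet{HuangSY:STOC:2022}, with $P_\theta(1,\cdot) \equiv 0$ enforced so no potential remains at the horizon.

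Granting such a $P_\theta$, the proof reduces to three properties. First, the start boundary: since $\sum_{i} \lambda_i x_i(0) = \sum_{(i,j) \in E} x_{ij}$ is the Natural LP value, which upper bounds $\opt$ by \citet{HuangS:STOC:2021}, and since $P_\theta(0,x) \ge \Gamma x$ on $[0,1]$ by construction, we get $\pot(0) \ge \Gamma \cdot \opt$. Second, the end boundary: $P_\theta(1,\cdot) \equiv 0$ gives $\pot(1) = 0$. Third, the heart of the argument, the drift condition: $\E\big[\alg(t) + \pot(t)\big]$ is non-decreasing on $[0,1]$. Combining the three and using $\alg(0)=0$, we obtain $\Gamma \cdot \opt \le \pot(0) = \E[\alg(0)+\pot(0)] \le \E[\alg(1)+\pot(1)] = \E[\alg(1)]$, the expected size of \swor's matching; optimizing $\theta$ to make $\Gamma \ge 0.707$ finishes the proof.

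To establish the drift condition I would differentiate $\E[\alg(t) + \pot(t)]$ in $t$ and verify non-negativity state by state. Three terms appear: the instantaneous match rate $\sum_{k} \lambda_k \cdot \mathbf{1}\{x_k(t) > 0\}$, since a type-$k$ vertex arrives at rate $\lambda_k$ and is matched whenever it has an available neighbor; the explicit time derivative $\sum_i \lambda_i\, \partial_t P_\theta(t, x_i(t))$; and the expected potential jump from matchings — when a type-$k$ arrival is matched to $j$ (with probability proportional to $x_{kj}$ among $j \in J_k \cap U(t)$), every type $i \in I_j$ sees $x_i$ drop by $x_{ij}/\lambda_i$, so $\pot$ drops by $\sum_{i \in I_j} \lambda_i\big(P_\theta(t, x_i) - P_\theta(t, x_i - x_{ij}/\lambda_i)\big)$. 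I would bound each such drop via a convexity/monotonicity property of $P_\theta(t,\cdot)$ to pass to a first-order estimate, then reorganize the double sum so the loss is accounted \emph{per offline vertex} $j$: its contribution at $j$ is controlled by a quantity of the form $\frac{1}{\theta}\sum_{i \in I_j}\big(x_{ij} - (1-\theta)\lambda_i\big)^+$, which Corollary~\ref{cor:converse-jensen-theta} caps by the constant $1 + \frac{1-\theta}{\theta}\ln(1-\theta) \le 1$. Matching this per-vertex cap against the match rate and the $\partial_t$ budget reduces the whole drift inequality to a one-dimensional inequality in $(t,x)$ that pins down $P_\theta$ and, after optimizing $\theta \in [0,\tfrac12]$, the ratio $0.707$.

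The main obstacle I anticipate is this third step. (i) The updates to $x_i$ at an arrival are discrete, not infinitesimal — a single match can erase a non-negligible fraction of a type's mass — so $P_\theta(t,\cdot)$ must have exactly the right curvature for the finite-difference-to-derivative passage to go in the correct direction. (ii) Matching $j$ perturbs \emph{all} types incident to $j$ at once, so the single unit of algorithmic gain together with the $\partial_t$ term must be shown to jointly pay for a coupled, multi-type potential loss; this is precisely where the per-offline-vertex reorganization and Corollary~\ref{cor:converse-jensen-theta} do the work. (iii) Actually exhibiting $P_\theta$: it must satisfy both boundary constraints and the drift inequality across all of $[0,1]^2$, and I expect to build it by interpolating the two regimes — linear in $x$ near $t=0$, $\min\{x/\theta,1\}$-driven near $t=1$ — and then verifying the inequality by a case analysis according to whether $x$ lies above or below the threshold $(1-\theta)$.
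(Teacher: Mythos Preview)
Your architecture is exactly the paper's: once unpacked, a per-type potential $P_\theta(t,x)$ that interpolates ``linear in $x$'' with ``$\min\{x/\theta,1\}$-driven'' is precisely $\alpha(t)\,x + \beta(t)\,p(x)$ with $p(x)=\min\{x/\theta,1\}$, and summing over types gives the paper's $\pot(t)=\alpha(t)\sum_{(i,j)}x_{ij}(t)+\beta(t)\sum_i\lambda_i\,p(\rho_i(t))$. The three conditions and the role of Corollary~\ref{cor:converse-jensen-theta} are all correctly identified.

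The gap is in the drift step, and it is not a detail. Your plan is to upper-bound the potential drop at a matched vertex $j$ by a first-order estimate and then cap it per-$j$ by the Converse-Jensen quantity $\tfrac{1}{\theta}\sum_{i\in I_j}(x_{ij}-(1-\theta)\lambda_i)^+$. That inequality is simply false in general: when all $\rho_i<\theta$ the drop $\sum_{i\in I_j}\lambda_i\big(p(\rho_i)-p(\rho_i-\rho_{ij})\big)$ equals $x_j/\theta$, while your cap can be zero. What actually makes the drift close is a step you do not mention: because the coefficient multiplying $y_j(t)$ in the drift is nonnegative (this is where $\alpha+\tfrac{1}{\theta}\beta\le 1$ is used), one may \emph{artificially down-scale} each match rate from $y_{ij}$ to $y_{ij}\,p(\rho_i)$. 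This converts the gain term from $\sum_i\lambda_i\mathbf{1}\{\rho_i>0\}$ into $\sum_i\lambda_i\,p(\rho_i)$, which \emph{is} a term already present in the potential. Only after that scaling does the remaining quadratic term $\sum_j\big(\sum_i y_{ij}p(\rho_i)\big)\cdot(\text{drop at }j)$ admit a bound that is linear in $\sum_i\lambda_i p(\rho_i)$ and $\sum_{(i,j)}x_{ij}$ (Lemma~\ref{lem:SWOR-diff-ineq}); the paper explicitly remarks that without the scaling the bound produces a $\sum_i\lambda_i$ term that matches nothing in $\pot$, so the ODE system for $(\alpha,\beta)$ cannot close. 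Moreover, even with the scaling, proving Lemma~\ref{lem:SWOR-diff-ineq} is not a one-line application of Corollary~\ref{cor:converse-jensen-theta}: it goes through three structural reductions on the $\rho_{ij}$'s and an amortization of the slack $y_j-x_j$ across offline vertices, with the corollary entering only at the very end. Your sketch lands on the right potential and the right overall plan, but it skips precisely this hinge.
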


\subsection{Potential Function Analysis of Online Stochastic Matching Algorithms}
\label{subsec:analysis-framework}

\paragraph{Potential Function Analysis Framework.}
For any time $0 \le t \le 1$, let $\alg(t)$ denote the number of edges matched by the online algorithm.
We keep track of the unmatched portion of fractional matching $(x_{ij})_{(i,j)\in E}$ using the following notations:
\begin{equation}
	\label{eqn:swor-x}
	x_{ij}(t) =
	\begin{cases}
		x_{ij} & \mbox{if $j$ is unmatched} \\
		0 & \mbox{otherwise}
	\end{cases}
	~,\quad
	x_j(t) = \sum_{i \in I_j} x_{ij}(t)
	~,\quad
	x(t) = \big\{ x_{ij}(t) \big\}_{(i,j) \in E}
	~.
\end{equation}

Further, we define auxiliary variables for match rates, i.e., the values of $x_{ij}(t)$ divided by the arrival rate $\lambda_i$ of online vertex type $i$:
\begin{equation}
	\label{eqn:swor-rho}
	\rho_{ij}(t) = \frac{x_{ij}(t)}{\lambda_i}
	~,\quad
	\rho_i(t) = \sum_{j \in J_i} \rho_{ij}(t)
	~.
\end{equation}

We will design a potential function $\pot(t)$, which depends on $x(t)$, i.e., the unmatched portion of the fractional matching, and the current time $0 \le t \le 1$.
To prove that the competitive ratio is at least $\Gamma$, our potential function needs to satisfy that:
\begin{enumerate}[label=(\Alph*)]
	\item $\pot(0) \ge \Gamma \cdot \opt$~; \label{con:start}
	\item $\pot(1) = 0$~; and \label{con:end}
	\item $\E\, \big[ \alg(t) + \pot(t) \big]$ is non-decreasing in $0 \le t \le 1$. \label{con:monotone}
\end{enumerate}

Since $\alg(0) = 0$ by definition, combining these conditions gives $\E\,\alg(1) \ge \Phi(0) \ge \Gamma \cdot \opt$.

\begin{remark} 
	The value of this analysis framework lies beyond the online unweighted stochastic matching problem. It can also be extended to online edge-weighted stochastic matching with free disposal. Using this framework, we propose a deterministic online algorithm for this problem that achieves the same competitive ratio as the state-of-the-art \tophalf by \citet{HuangSY:STOC:2022}. See Appendix \ref{sec:extension} for details.
\end{remark}

\paragraph{Design of Potential Function.}
Next, for some parameter $0 \le \theta \le \frac{1}{2}$ to be determined, define function $p : [0, 1] \to [0, 1]$ as:
\[
p(\rho) = \min \Big\{ \frac{\rho}{\theta}, 1 \Big\} ~.
\]

Initially, each online type $i$'s contribution to our potential function is mainly proportional to $\lambda_i \rho_i(t)$.
In other words, we directly take the LP-based fractional matching as our reference, since that implicitly takes into account the possibility that all offline neighbors are already matched when a type-$i$ online vertex arrives.
As time proceeds, an online type $i$'s contribution will smoothly transition to be proportional to $\lambda_i p(\rho_i(t))$.
This is because with less time remaining, it is less likely that all offline neighbors would get matched after time $t$.
We can therefore expect an online type $i$'s contribution to be closer to its arrival rate $\lambda_i$ even if $\rho_i(t)$ is small.

To formalize this transition, we further define functions $\alpha, \beta: [0, 1] \to [0, 1]$ as follows:
\begin{align*}
	\alpha(t) & = 
	1-\frac{\frac{1}{\theta} e^{-(1-\ln(1-\theta))(1-t)} -(1-\ln(1-\theta)) e^{-\frac{1}{\theta}(1-t)}}{\frac{1}{\theta} -1+\ln(1-\theta)}    
	~, \\ %
	\beta(t) & = 
	\frac{e^{-(1-\ln(1-\theta))(1-t)} -e^{-\frac{1}{\theta}(1-t)}}{\frac{1}{\theta} -1+\ln(1-\theta)}
	~, %
\end{align*}
derived by solving the second-order differential equations in the next lemma.

Our potential function is:
\[
\pot(t) = \alpha(t) \sum_{(i,j) \in E} x_{ij}(t) + \beta(t) \sum_{i \in I} \lambda_i p \big( \rho_i(t) \big)
~.
\]

We next summarize the properties that characterize functions $\alpha(t)$ and $\beta(t)$ as a lemma, which we will use in our analysis.
The proofs are basic calculus and hence omitted.
Interested readers may also plot both functions for $0 \le t \le 1$, e.g., with $\theta = \frac{1}{2}$, to verify that $\alpha(t)$ is larger than $\beta(t)$ for small $t$, but $\beta(t)$ becomes the dominant term when $t$ tends to $1$.

\begin{lemma}
	\label{lem:alpha-beta}
	Functions $\alpha(t)$ and $\beta(t)$ satisfy the following properties:
	\begin{enumerate}
		\item $\alpha(0)+\beta(0)
		=1-\dfrac{\left(\frac{1}{\theta } -1\right) e^{-(1-\ln(1-\theta))} +\ln(1-\theta) e^{-\frac{1}{\theta }}}{\frac{1}{\theta} -1+\ln(1-\theta)}~.$
		\label{prop: start}
		\item $\alpha(1)=\beta(1)=0~.$
		\label{prop: end}
		\item $\alpha(t)+\frac{1}{\theta}\beta(t)\le 1$ for any $0\le t \le 1~.$
		\label{prop: sum to less than 1}
		\item $\frac{d}{dt} \alpha (t) = -\frac{1-\ln(1-\theta )}{\theta } \beta (t)~.$
		\label{prop: derivative of alpha}
		\item $\frac{d}{dt} \beta (t)
		= \alpha (t) +\left(\frac{1}{\theta} +1-\ln(1-\theta)\right) \beta (t) -1~.$
		\label{prop: derivative of beta}
	\end{enumerate}
\end{lemma}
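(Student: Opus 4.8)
The plan is to verify the five properties directly from the closed-form expressions for $\alpha(t)$ and $\beta(t)$. Write $a \defeq 1 - \ln(1-\theta)$ and $b \defeq \frac{1}{\theta}$, so that $b - a = \frac{1}{\theta} - 1 + \ln(1-\theta)$ is the common denominator appearing in both formulas, and
\[
\alpha(t) = 1 - \frac{b\,e^{-a(1-t)} - a\,e^{-b(1-t)}}{b-a}~,
\qquad
\beta(t) = \frac{e^{-a(1-t)} - e^{-b(1-t)}}{b-a}~.
\]
Property~\ref{prop: start} is then immediate: set $t=0$, add the two expressions, and simplify the numerator $b - \frac{b\,e^{-a}-a\,e^{-b}}{b-a} + \frac{e^{-a}-e^{-b}}{b-a}$; collecting the $e^{-a}$ and $e^{-b}$ terms gives $1 - \frac{(b-1)e^{-a} + (1-a)\,\cdot(-1)\cdot\ldots}{b-a}$, which one matches against the stated formula after substituting back $a = 1-\ln(1-\theta)$ (note $1 - a = \ln(1-\theta)$ and $b - 1 = \frac{1}{\theta}-1$). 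Property~\ref{prop: end} is even easier: at $t=1$ both exponentials equal $1$, so $\beta(1) = \frac{1-1}{b-a} = 0$ and $\alpha(1) = 1 - \frac{b-a}{b-a} = 0$.

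For the differential identities, I would differentiate term by term. Each exponential $e^{-a(1-t)}$ has derivative $a\,e^{-a(1-t)}$ in $t$, and similarly $e^{-b(1-t)}$ has derivative $b\,e^{-b(1-t)}$. Hence
\[
\frac{d}{dt}\alpha(t) = -\frac{ab\,e^{-a(1-t)} - ab\,e^{-b(1-t)}}{b-a} = -ab\cdot\frac{e^{-a(1-t)}-e^{-b(1-t)}}{b-a} = -ab\,\beta(t)~,
\]
which is exactly property~\ref{prop: derivative of alpha} since $ab = \frac{1-\ln(1-\theta)}{\theta}$. For property~\ref{prop: derivative of beta}, differentiating $\beta$ gives $\frac{d}{dt}\beta(t) = \frac{a\,e^{-a(1-t)} - b\,e^{-b(1-t)}}{b-a}$; the bookkeeping step is to express the right-hand side $\alpha(t) + (b + a)\beta(t) - 1$ in terms of the same two exponentials and check the coefficients of $e^{-a(1-t)}$ and $e^{-b(1-t)}$ match (the constant terms $1 - 1$ cancel). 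Explicitly $\alpha(t) - 1 + (a+b)\beta(t) = \frac{-b\,e^{-a(1-t)}+a\,e^{-b(1-t)} + (a+b)(e^{-a(1-t)}-e^{-b(1-t)})}{b-a} = \frac{a\,e^{-a(1-t)} - b\,e^{-b(1-t)}}{b-a}$, as required; here I have used $b + 1 - \ln(1-\theta) = b + a$.

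Property~\ref{prop: sum to less than 1} is the one genuinely requiring an inequality argument rather than pure algebra, so I expect it to be the main obstacle. Define $g(t) \defeq \alpha(t) + b\,\beta(t)$; I want $g(t) \le 1$ on $[0,1]$. One clean route: combine the two derivative identities to get $g'(t) = \frac{d}{dt}\alpha(t) + b\frac{d}{dt}\beta(t) = -ab\,\beta(t) + b\bigl(\alpha(t) + (a+b)\beta(t) - 1\bigr) = b\bigl(\alpha(t) + b\,\beta(t) - 1\bigr) = b\,(g(t) - 1)$. Thus $g$ satisfies the linear ODE $g'(t) = b\,(g(t)-1)$, whose solution is $g(t) = 1 + (g(1)-1)e^{-b(1-t)}$; by property~\ref{prop: end}, $g(1) = \alpha(1) + b\beta(1) = 0$, so $g(t) = 1 - e^{-b(1-t)} \le 1$ for all $t \le 1$, with equality only at $t = 1$. (As a sanity check one can also confirm directly from the closed forms that $\alpha(t) + \frac1\theta\beta(t) = 1 - e^{-\frac1\theta(1-t)}$.) Finally I would note $\theta \le \frac12$ guarantees $b - a = \frac1\theta - 1 + \ln(1-\theta) > 0$, so all denominators are positive and the formulas are well-defined; this is the only place the hypothesis $0 \le \theta \le \frac12$ enters.
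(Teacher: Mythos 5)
Your verification is correct, and it is exactly the ``basic calculus'' the paper omits: properties \ref{prop: start}, \ref{prop: end}, \ref{prop: derivative of alpha}, and \ref{prop: derivative of beta} follow by direct substitution and differentiation (your intermediate expression for $\alpha(0)+\beta(0)$ is slightly garbled but the final coefficients $b-1=\frac{1}{\theta}-1$ and $1-a=\ln(1-\theta)$ are right), and your handling of property \ref{prop: sum to less than 1} via the identity $\alpha(t)+\frac{1}{\theta}\beta(t)=1-e^{-\frac{1}{\theta}(1-t)}$ (whether derived from the ODE $g'=b(g-1)$ with $g(1)=0$ or checked directly) is clean and complete. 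Your closing remark that $0<\theta\le\frac12$ guarantees $\frac{1}{\theta}-1+\ln(1-\theta)>0$ is also the right sanity check on the denominators.
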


\subsection{Proof of Theorem \ref{thm:swor}}

We next verify the aforementioned sufficient conditions for competitive ratio:
\begin{equation}
	\label{eqn:swor-ratio}
	\Gamma = 1-\frac{\left(\frac{1}{\theta } -1\right) e^{-(1-\ln(1-\theta))} +\ln(1-\theta) e^{-\frac{1}{\theta }}}{\frac{1}{\theta} -1+\ln(1-\theta)} ~.
\end{equation}
Theorem \ref{thm:swor} then follows by choosing $\theta = 0.4254$ which gives $\Gamma\approx 0.7078 > 0.707$.

\paragraph{Condition \ref{con:start}.}
Since $p(\rho) \ge \rho$ for any $0 \le \rho \le 1$, we have:
\[
\sum_{i \in I}  \lambda_i p \big( \rho_i(0) \big) \ge \sum_{i \in I} \lambda_i \rho_i(0) = \sum_{(i,j) \in E} x_{ij}(0)=\opt
~.
\]
Therefore, we get that:
\begin{align*}
	\pot(0) 
	= \alpha(0) \sum_{(i,j) \in E} x_{ij}(0) + \beta(0) \sum_{i \in I} \lambda_i p \big( \rho_i(0) \big)
	\ge \big(\alpha(0) + \beta(0)\big) \cdot \opt ~.
\end{align*}
Further by Property \ref{prop: start} of Lemma \ref{lem:alpha-beta} and by the definition of $\Gamma$ in Eqn.~\eqref{eqn:swor-ratio}, we have $\alpha(0) + \beta(0) = \Gamma$.

\paragraph{Condition \ref{con:end}.}
It follows by Property \ref{prop: end} of Lemma \ref{lem:alpha-beta}.

\paragraph{Condition \ref{con:monotone}.}
For any time $0\le t \le 1$, and conditioned on any set of matched offline vertices at time $t$ and the corresponding $x(t)$, we next show that the derivative of $\E\, [ \alg(t) + \pot(t) \mid x(t) ]$ is nonnegative.
The derivative can be expressed as:
\begin{equation}
	\label{eqn:SWOR-derivative-of-A-plus-Phi}
	\begin{multlined}
		\frac{d}{dt} \E\, \big[ \alg(t) \mid x(t) \big]
		+
		\alpha(t) \frac{d}{dt} \E\, \Big[ \sum_{(i,j) \in E} x_{ij}(t) \mid x(t) \Big]
		+
		\beta(t) \frac{d}{dt} \E\, \Big[ \sum_{i \in I} \lambda_i p\big(\rho_i(t)\big) \mid x(t) \Big]
		\\		
		+\Big( \frac{d}{dt} \alpha(t) \Big) \sum_{(i,j) \in E} x_{ij}(t)
		+
		\Big( \frac{d}{dt} \beta(t) \Big) \sum_{i \in I} \lambda_i p\big(\rho_i(t)\big) ~.
	\end{multlined}
\end{equation}

Define the match rate of edge $(i,j)$ at time $t$ as:
\begin{equation*}
	y_{ij}(t) = \lambda_i \cdot \frac{\rho_{ij}(t)}{\rho_i(t)}
	~.
\end{equation*}
where $\lambda_i$ is the arrival rate of $i$, and $\frac{\rho_{ij}(t)}{\rho_i(t)}$ is the probability of sampling $j$ conditioned on $i$'s arrival by the definition of sampling without replacement.

Further let $y_{j}(t)=\sum_{i\in I_j} y_{ij}(t)$ be the total match rate of offline vertex $j$ at time $t$.
The first line of Eqn.~\eqref{eqn:SWOR-derivative-of-A-plus-Phi} equals:
\begin{equation}
	\label{eqn:SWOR-derivative-first-line}
	\sum_{j \in J} y_{j}(t) \Big( 1 - \alpha(t) x_j(t) - \beta(t) \sum_{i \in I_j} \lambda_{i} \big( p(\rho_{i}(t)) - p(\rho_{i}(t) - \rho_{ij}(t))\big) \Big)
	~.
\end{equation}
The three terms inside the big parentheses correspond to the respective changes in
$\E\, \big[ \alg(t) \mid x(t) \big]$,
$\alpha(t) \E\, \big[ \sum_{(i,j) \in E} x_{ij}(t) \mid x(t) \big]$
and
$\beta(t) \E\, \big[ \sum_{i \in I} \lambda_i p\big(\rho_i(t)\big) \mid x(t) \big]$
when $j$ gets matched.

The coefficient of $y_j(t)$ in Eqn.~\eqref{eqn:SWOR-derivative-first-line} is at least $1-\alpha(t)-\frac{1}{\theta}\beta(t)$, because the derivative of function $p$ is at most $\frac{1}{\theta}$ and $\sum_{i \in I_j} \lambda_i \rho_{ij}(t) = x_j(t) \le 1$.
It is therefore nonnegative by Property \ref{prop: sum to less than 1} of Lemma \ref{lem:alpha-beta}.
Hence, replacing $y_{ij}(t)$ with $y_{ij}(t)p(\rho_i(t))$ makes Eqn.~\eqref{eqn:SWOR-derivative-of-A-plus-Phi} weakly smaller.
Further, since $x_j(t)\leq 1$, replacing $\alpha(t)x_j(t)$ with $\alpha(t)$ also makes Eqn.~\eqref{eqn:SWOR-derivative-of-A-plus-Phi} weakly smaller.
Hence, we conclude that:
\begin{align}
	\eqref{eqn:SWOR-derivative-of-A-plus-Phi}
	\ge
	&
	\big(1 - \alpha(t) \big)
	\sum_{i \in I}
	\lambda_i p\big(\rho_i(t)\big)
	-\beta(t)
	\sum_{j \in J} \Big(\sum_{i\in I_j} y_{ij}(t) p(\rho_i(t))\Big)
	\Big(\sum_{i \in I_j} \lambda_{i} \big( p(\rho_{i}(t)) - p(\rho_{i}(t) - \rho_{ij}(t))\big) \Big) \notag \\
	& \quad
	+\Big( \frac{d}{dt} \alpha(t) \Big) \sum_{(i,j) \in E} x_{ij}(t)
	+\Big( \frac{d}{dt} \beta(t) \Big) \sum_{i \in I}
	\lambda_i p\big(\rho_i(t)\big)
	~.
	\label{eqn:SWOR-derivative-of-A-plus-Phi-tighten}
\end{align}

If we treat functions $\alpha$, $\beta$, and their derivatives as coefficients, then the second term on the right-hand-side of Eqn.~\eqref{eqn:SWOR-derivative-of-A-plus-Phi-tighten} is quadratic, while the other terms are linear.
The next lemma bounds this term by a linear combination of two linear terms $\sum_{i\in I} \lambda_i p(\rho_i(t))$
and $\sum_{(i,j) \in E} x_{ij}(t)$ in this inequality.
Its proof is deferred to the end of the section.
\begin{lemma}
	\label{lem:SWOR-diff-ineq}
	For any $0\le t \le 1$, the following inequality holds:
	\begin{equation}
		\begin{multlined}
			\label{eqn:SWOR-diff-ineq}
			\sum_{j\in J} \Big(\sum_{i\in I_j} y_{ij}(t) p(\rho_i(t))\Big) \Big(\sum_{i \in I_j} \lambda_{i} \big( p(\rho_{i}(t)) - p(\rho_{i}(t) - \rho_{ij}(t))\big) \Big)\\
			\le \left(\frac{1}{\theta} +1-\ln(1-\theta)\right) \sum_{i\in I} \lambda_i p(\rho_i(t))
			-\frac{1-\ln(1-\theta)}{\theta}\sum_{(i,j) \in E} x_{ij}(t)~.
		\end{multlined}
	\end{equation}
\end{lemma}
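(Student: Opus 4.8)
The plan is to fix the time $t$, suppress it from the notation, and bound the left-hand side of~\eqref{eqn:SWOR-diff-ineq} offline vertex by offline vertex. Write $A_j \defeq \sum_{i\in I_j} y_{ij}\,p(\rho_i)$ and $B_j \defeq \sum_{i\in I_j} \lambda_i\big(p(\rho_i) - p(\rho_i - \rho_{ij})\big)$, so the left-hand side is $\sum_{j\in J} A_j B_j$ whereas the right-hand side is a sum over online types. The bridge between a per-$j$ and a per-$i$ accounting will be the concavity of $p$ together with $p(0)=0$: the chord bound $p(\rho_i-\rho_{ij})\ge\frac{\rho_i-\rho_{ij}}{\rho_i}p(\rho_i)$ gives $\lambda_i\big(p(\rho_i)-p(\rho_i-\rho_{ij})\big)\le y_{ij}\,p(\rho_i)$, hence $B_j\le A_j$ and, summing over $j\in J_i$, the telescoping-type estimate $\sum_{j\in J_i}\big(p(\rho_i)-p(\rho_i-\rho_{ij})\big)\le p(\rho_i)$. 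Also, since $p(\rho)/\rho\le\frac{1}{\theta}$ and $x_j\le 1$, we get the a~priori bound $A_j\le\frac{x_j}{\theta}\le\frac{1}{\theta}$.

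Next I would use the explicit form $p(\rho)=\min\{\rho/\theta,1\}$ to reparametrize. Splitting the neighbors of $j$ into the \emph{saturated} ones $H_j\defeq\{i\in I_j:\rho_i\ge\theta\}$ (where $p(\rho_i)=1$) and the rest, a direct computation collapses $A_j$ and $B_j$ into the closed forms $\theta A_j=x_j-S_j$ and $\theta B_j=x_j-T_j$, where $S_j\defeq\sum_{i\in H_j}\lambda_i\rho_{ij}\frac{\rho_i-\theta}{\rho_i}$ and $T_j\defeq\sum_{i\in H_j}\lambda_i\min\{\rho_{ij},\rho_i-\theta\}$ satisfy $0\le S_j\le T_j\le x_j\le 1$. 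The virtue of this substitution is the global identity $\sum_{j\in J}S_j=\sum_{i:\rho_i\ge\theta}\lambda_i(\rho_i-\theta)=\sum_{(i,j)\in E}x_{ij}-\theta\sum_{i\in I}\lambda_i\,p(\rho_i)$, which is exactly the combination appearing on the right-hand side of~\eqref{eqn:SWOR-diff-ineq} up to the constants. Linearizing the product $(x_j-S_j)(x_j-T_j)$ using $x_j\le 1$ (or, if a finer estimate is needed, its sharper LP form $x_j\le 1-\exp(-\sum_{i\in I_j}\lambda_i)$), the claim reduces to an inequality of the shape $\sum_{j\in J}\big(x_jS_j+x_jT_j-S_jT_j\big)\ge\big(1+\theta(1-\ln(1-\theta))\big)\sum_{j\in J}S_j$.

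This last inequality is the heart of the proof and the step I expect to be the main obstacle, because it is \emph{false} term by term: when an online type is heavily used, $x_j$, $S_j$, $T_j$ can all be comparably small. The missing slack has to come from the Natural LP, whose second constraint pushes $x_j$ strictly below $1$ by an amount quantified precisely by Corollary~\ref{cor:converse-jensen-theta}, i.e.\ $\frac{1}{\theta}\sum_{i\in I_j}\big(x_{ij}-(1-\theta)\lambda_i\big)^{+}\le 1+\frac{1-\theta}{\theta}\ln(1-\theta)$. That bound is calibrated to the target through the identity $\frac{1}{\theta}+\big(1-\ln(1-\theta)\big)-\frac{1-\ln(1-\theta)}{\theta}=1+\frac{1-\theta}{\theta}\ln(1-\theta)$, which is also why the coefficients $\frac{1}{\theta}+1-\ln(1-\theta)$ and $\frac{1-\ln(1-\theta)}{\theta}$ appear in the statement. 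Accordingly, I would carry out a per-$j$ estimate that, on top of the a~priori bounds above, feeds in Corollary~\ref{cor:converse-jensen-theta} for the neighborhood of $j$, and then sum over $j\in J$, re-summing the type-dependent quantities by the telescoping estimate $\sum_{j\in J_i}\big(p(\rho_i)-p(\rho_i-\rho_{ij})\big)\le p(\rho_i)$ and the identity for $\sum_j S_j$ above; a final bookkeeping of terms matches the constants. The delicate point throughout is keeping the genuinely quadratic cross term $\sum_j S_jT_j$ (equivalently $\theta^2\sum_jA_jB_j$ minus its linear part) under control simultaneously with the LP-derived slack, since the naive linearization is tight in the fully unsaturated regime and wasteful exactly where the saturated types — the interesting case — live.
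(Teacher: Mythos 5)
Your algebraic setup is correct and clean: the closed forms $\theta A_j = x_j - S_j$ and $\theta B_j = x_j - T_j$, the inequality $S_j\le T_j\le x_j$, and the global identity $\sum_j S_j=\sum_{(i,j)\in E}x_{ij}-\theta\sum_i\lambda_i p(\rho_i)$ all check out, and they do convert the lemma into the equivalent statement $\sum_j\big[x_j-x_j^2+x_j(S_j+T_j)-S_jT_j\big]\ge c\sum_jS_j$ with $c=1+\theta(1-\ln(1-\theta))$. But the linearization step breaks the proof: discarding $\sum_j x_j(1-x_j)$ via $x_j^2\le x_j$ yields a \emph{false} sufficient condition. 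Take a single online type $i$ with $\lambda_i=1$ and $\rho_i=1$ spread evenly over $n\ge 2$ offline vertices, each having only $i$ as a neighbor (a legitimate optimal Natural LP solution at $t=0$). Then $x_j=T_j=\tfrac1n$, $S_j=\tfrac{1-\theta}{n}$, so $\sum_j\big(x_jS_j+x_jT_j-S_jT_j\big)=\tfrac1n$ while $c\sum_jS_j=c(1-\theta)\approx 0.95$ at $\theta=0.4254$; the inequality you reduce to fails by an unbounded factor. The discarded term $\sum_jx_j(1-x_j)=\tfrac{n-1}{n}$ is exactly what saves the true statement here, so it cannot be dropped — precisely in the ``thinly spread'' regime that your own per-$j$ counterexample identifies, all the slack lives in $x_j-x_j^2$, not in the LP constraint.

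Beyond that, the genuinely hard part is never carried out: you state that one should ``feed in Corollary~\ref{cor:converse-jensen-theta} per $j$ and do bookkeeping,'' but you also correctly observe that no per-$j$ estimate can work, and you do not resolve that tension. The paper's proof needs considerably more structure to make a per-$j$ argument legitimate: it first performs three mass-splitting reductions (rescaling all types to $\rho_i\ge\theta$, collapsing types with $\rho_i\le 1-\theta$ onto a single neighbor, and redistributing types with $\rho_i>1-\theta$ into pairs via an LP-feasibility/max-flow argument), then amortizes the global quantity $\sum_j(y_j-x_j)=\sum_i\lambda_i(1-\rho_i)$ by attributing each type to the one offline vertex it ``mostly'' matches, and only then proves two per-$j$ inequalities in which the Converse Jensen slack $\Delta_j$ appears. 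Your reformulation in terms of $S_j,T_j$ is a reasonable starting point and your identification of the coefficient calibration $\frac1\theta+1-\ln(1-\theta)-\frac{1-\ln(1-\theta)}{\theta}=1+\frac{1-\theta}{\theta}\ln(1-\theta)$ matches the paper's, but as written the proposal both relies on a false intermediate inequality and leaves the essential global-versus-local argument unproven.
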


We remark that a similar lemma holds if on the left we use the original match rates $y_{ij}(t)$ instead of the scaled $y_{ij}(t) p(\rho_i(t))$, but the first term on the right would be $\sum_{i \in I} \lambda_i$.
This is not an existing term in Eqn.~\eqref{eqn:SWOR-derivative-of-A-plus-Phi-tighten}, and therefore we would not have the cancellation in our subsequent analysis.
This is why we artificially scale the match rates in the previous step of our analysis.

Lemma~\ref{lem:SWOR-diff-ineq} is partly inspired by a corresponding lemma in \citet{HuangSY:STOC:2022}. %
The main differences are threefold.
Firstly and most importantly, our approach decouples the (lower bounds of) match rates of online vertices in the lemma with the actual match rates of the online algorithm.
Following the approach of \citet{HuangSY:STOC:2022}, the \emph{algorithm} would need to artificially decrease the match rate of each online vertex $i$ by a $p(\rho_i(t))$ factor, and we would not be able to analyze \swor.
Second, we distribute the match rate of each online vertex $i$ to its offline neighbors proportional to the fractional matching $x(t)$, while their approach considered a distorted distribution of the match rate, which once again disagrees with \swor.
Last but not least, our approach extends to any $0 \le \theta \le \frac{1}{2}$ while their approach only allows $\theta = \frac{1}{2}$.

Applying Lemma \ref{lem:SWOR-diff-ineq} to Eqn.~\eqref{eqn:SWOR-derivative-of-A-plus-Phi-tighten}, we have:
\begin{equation*}
	\begin{multlined}
		\eqref{eqn:SWOR-derivative-of-A-plus-Phi}
		\ge \left( 1 - \alpha(t)
		- \beta(t) \Big(\frac{1}{\theta} +1-\ln(1-\theta)\Big)
		+\frac{d}{dt} \beta (t)\right)
		\sum_{i\in I} \lambda_i p(\rho_i(t))  \\
		+ \left(\beta(t) \frac{1-\ln(1-\theta)}{\theta} +\frac{d}{dt} \alpha (t) \right)\sum_{(i,j) \in E} x_{ij}(t)
		~.
	\end{multlined}
\end{equation*}

This equals zero by Properties \ref{prop: derivative of alpha} and \ref{prop: derivative of beta} of Lemma \ref{lem:alpha-beta}.
Hence, we complete the verification of Condition~\ref{con:monotone} and thus the proof of Theorem~\ref{thm:swor}.

\subsection{Regularized Greedy}

This subsection derandomizes \swor and introduces a deterministic algorithm called \reggreedy which is also $0.707$-competitive.

Observe that Conditions~\ref{con:start} and \ref{con:end} always hold, independent of the online algorithm's decisions.
Therefore, the competitive analysis from the last subsection applies to any algorithm that satisfies Condition~\ref{con:monotone}, i.e., the monotonicity of $\E\,\big[ \alg(t) + \pot(t) \big]$ for $0 \le t \le 1$.
Conditioned on the set of matched offline vertices and the corresponding $x(t)$ at time $t$, it suffices to ensure that the derivative $\frac{d}{dt} \E\, \big[ \alg(t) + \pot(t) \mid x(t) \big]$ is nonnegative, which can be written as follows:
\begin{equation}
	\label{eqn:reg-greedy-derivative}
	\begin{multlined}
		\overbrace{\vphantom{\Bigg|} \frac{d}{dt} \E\, \big[ \alg(t) \mid x(t) \big]
			+
			\alpha(t) \frac{d}{dt} \E\, \Big[ \sum_{(i,j) \in E} x_{ij}(t) \mid x(t) \Big]
			+
			\beta(t) \frac{d}{dt} \E\, \Big[ \sum_{i \in I} \lambda_i p\big(\rho_i(t)\big) \mid x(t) \Big]}^{\mbox{\small depends on algorithm's decisions}} \\[1ex]
		+
		\underbrace{\Big( \frac{d}{dt} \alpha(t) \Big) \sum_{(i,j) \in E} x_{ij}(t)
			+
			\Big( \frac{d}{dt} \beta(t) \Big) \sum_{i \in I} \lambda_i p\big(\rho_i(t)\big)}_{\mbox{\small independent of algorithm's decisions}}
		~.
	\end{multlined}
\end{equation}

Given any online algorithm $\alg$, we can write the first line above, i.e., the part that depends on the algorithm's decisions, as follows:
\[
\sum_{i \in I} \lambda_i \sum_{j \in J_i} \Pr \big[ \mbox{$\alg$ matches $i$ to $j$} \mid x(t) \big] \bigg( 1 - \underbrace{\alpha(t) x_j(t) - \beta(t) \sum_{i' \in I_j} \lambda_{i'} \Big( p \big(\rho_{i'}(t)\big) - p\big(\rho_{i'}(t) - \rho_{i'j}(t)\big) \Big)}_{\mbox{\small regularization term}} \bigg)~.
\]

Intuitively, this regularization term is our estimated likelihood that offline vertex $j$ would be matched in the remaining time, and thus the cost of matching it right now.

The analysis of \swor may be viewed as a constructive proof of the \emph{existence of online algorithms} which ensure nonnegativity of the derivative in Eqn.~\eqref{eqn:reg-greedy-derivative}.
Instead, we may directly maximize the first line of Eqn.~\eqref{eqn:reg-greedy-derivative}:
conditioned on the arrival of an online vertex of type $i$ at time $t$, match it to an unmatched offline neighbor $j$ with the smallest regularization term, i.e.:
\begin{equation}
	\label{eqn:regularization}
	\alpha(t) x_j(t) + \beta(t) \sum_{i' \in I_j} \lambda_{i'} \Big( p\big(\rho_{i'}(t) \big) - p \big( \rho_{i'}(t) - \rho_{i'j}(t) \big) \Big)
	~.
\end{equation}

\begin{tcolorbox}[beforeafter skip=10pt]
	\textbf{Regularized Greedy}\\[1ex]
	\emph{Input at the beginning:}
	\begin{itemize}[itemsep=0pt, topsep=4pt]
		\item Type graph $G = (I, J, E)$;
		\item Arrival rates $(\lambda_i)_{i \in I}$;
		\item Solution $(x_{ij})_{(i,j) \in E}$ of the Natural LP.
	\end{itemize}
	\smallskip
	\emph{When an online vertex of type $i \in I$ arrives at time $0 \le t \le 1$:}
	\begin{itemize}[itemsep=0pt, topsep=4pt]
		\item Match it to an unmatched offline neighbor $j$ that minimizes Eqn.~\eqref{eqn:regularization}.
	\end{itemize}
\end{tcolorbox}

As a corollary of the definition of \reggreedy and the analysis of \swor, we have the next theorem which makes \reggreedy the first polynomial-time deterministic algorithm that breaks the $1-\frac{1}{e}$ barrier in Online Stochastic Matching.

\begin{theorem}
	\label{thm:regularized-greedy}
	\reggreedy is $0.707$-competitive for Online Stochastic Matching.
\end{theorem}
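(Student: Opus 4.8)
The plan is to derive Theorem~\ref{thm:regularized-greedy} as a corollary of the potential function analysis behind Theorem~\ref{thm:swor}, reusing the same $\pot(t)$, the same $\alpha(t),\beta(t)$, and the same $\Gamma$ from Eqn.~\eqref{eqn:swor-ratio}. First observe that Conditions~\ref{con:start} and \ref{con:end} are properties of $\pot(0)$ and $\pot(1)$ alone: the former follows from $p(\rho)\ge\rho$ together with $\alpha(0)+\beta(0)=\Gamma$, the latter from $\alpha(1)=\beta(1)=0$; neither refers to the online matching decisions, so both hold verbatim for \reggreedy. Hence everything reduces to verifying Condition~\ref{con:monotone}, i.e.\ that conditioned on any matched set of offline vertices at time $t$ (with associated $x(t)$), the derivative $\frac{d}{dt}\E\big[\alg(t)+\pot(t)\mid x(t)\big]$ is nonnegative. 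This derivative is the one displayed in Eqn.~\eqref{eqn:reg-greedy-derivative}, whose second line is independent of the algorithm and whose first line equals $\sum_{i\in I}\lambda_i\sum_{j\in J_i}\Pr[\alg\text{ matches }i\text{ to }j\mid x(t)]\cdot\big(1-R_j(t)\big)$, where $R_j(t)$ denotes the regularization term of Eqn.~\eqref{eqn:regularization}.

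The key step is a per-type comparison of this first line between \reggreedy and \swor. Fix a type $i$. Under \swor the arriving vertex is matched to $j$ with probability $\rho_{ij}(t)/\rho_i(t)$, so \swor's contribution for type $i$ is $\lambda_i$ times a convex combination of the numbers $\{1-R_j(t): j\in J_i,\ x_{ij}(t)>0\}$. Under \reggreedy the vertex is matched to an unmatched neighbor minimizing $R_j(t)$, so its contribution is $\lambda_i$ times the \emph{maximum} of $1-R_j(t)$ over all unmatched neighbors of $i$ --- a set containing the support of \swor's distribution --- which is at least the above convex combination. In the degenerate case $\rho_i(t)=0$, \swor contributes $0$ for type $i$, while each summand $1-R_j(t)$ is nonnegative (this is exactly the bound already shown inside the proof of Theorem~\ref{thm:swor}, namely $1-\alpha(t)x_j(t)-\beta(t)\sum_{i'\in I_j}\lambda_{i'}\big(p(\rho_{i'}(t))-p(\rho_{i'}(t)-\rho_{i'j}(t))\big)\ge 1-\alpha(t)-\tfrac1\theta\beta(t)\ge 0$), so \reggreedy again contributes at least as much. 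Summing over $i$, the first line of Eqn.~\eqref{eqn:reg-greedy-derivative} for \reggreedy is at least the first line for \swor, while the second lines coincide; therefore \reggreedy's derivative is at least \swor's, which the proof of Theorem~\ref{thm:swor} shows equals $0$. This establishes Condition~\ref{con:monotone}, and combining the three conditions (using $\alg(0)=0$) yields $\E\,\alg(1)\ge\pot(0)\ge\Gamma\cdot\opt$ with $\Gamma\approx0.7078>0.707$.

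I expect the only delicate point to be the bookkeeping in the ``maximum dominates the convex combination'' step: one must make sure the set of unmatched neighbors of $i$ over which \reggreedy minimizes genuinely contains the support of \swor's sampling distribution (so the inequality points the right way), and handle $\rho_i(t)=0$ uniformly as above. What makes this proof short is precisely what it does \emph{not} need to redo: Lemma~\ref{lem:SWOR-diff-ineq} and the differential identities of Lemma~\ref{lem:alpha-beta}, which force the \swor derivative down to $0$, are inherited unchanged --- \reggreedy only has to beat \swor term by term at each arrival, not be globally optimal.
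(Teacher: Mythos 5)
Your proposal is correct and follows essentially the same route as the paper: Conditions~\ref{con:start} and \ref{con:end} are algorithm-independent, and since \reggreedy by construction maximizes the algorithm-dependent first line of Eqn.~\eqref{eqn:reg-greedy-derivative} over all feasible matching decisions (of which \swor's randomized choice is one), the nonnegativity of the derivative established for \swor carries over. Your explicit handling of the degenerate case via $1-\alpha(t)x_j(t)-\beta(t)\sum_{i'}\lambda_{i'}\big(p(\rho_{i'}(t))-p(\rho_{i'}(t)-\rho_{i'j}(t))\big)\ge 1-\alpha(t)-\tfrac1\theta\beta(t)\ge 0$ is a detail the paper leaves implicit but is exactly the right justification.
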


\subsection{Proof of Lemma \ref{lem:SWOR-diff-ineq}}
For notational simplicity, we omit $t$ in the variables in this proof.
We first restate Eqn.~\eqref{eqn:SWOR-diff-ineq} below:
\begin{equation*}
	\begin{multlined}
		\sum_{j\in J} \Big(\sum_{i\in I_j} y_{ij} p(\rho_i)\Big) \Big(\sum_{i \in I_j} \lambda_{i} \big( p(\rho_{i}) - p(\rho_{i} - \rho_{ij})\big) \Big)\\
		\le \left(\frac{1}{\theta} +1-\ln(1-\theta)\right) \sum_{i\in I} \lambda_i p(\rho_i)
		-\frac{1-\ln(1-\theta)}{\theta}\sum_{(i,j) \in E} x_{ij}~.
	\end{multlined}
\end{equation*}
where $\rho_i = \sum_{j \in J_i} \rho_{ij}$, $x_{ij} = \lambda_i \rho_{ij}$ and $y_{ij} = \lambda_i \frac{\rho_{ij}}{\rho_i}$.

\bigskip

We will prove a slightly stronger claim that the above inequality holds as long as:
\begin{enumerate}
	\item For any $i \in I$, $0\leq \rho_i\leq 1$;
	\item For any $j \in J$, $\sum_{i\in I_j} x_{ij}\le 1$; and
	\item For any $j \in J$, $\frac{1}{\theta} \sum_{i \in I_j} \big(x_{ij} - (1-\theta)\lambda_i \big)^+ \le 1 + \frac{1-\theta}{\theta} \ln(1-\theta)$.
\end{enumerate}

These three relaxed constraints are satisfied by our variables.
The first two constraints follow by the definitions of the variables.
The third constraint holds because of Corollary~\ref{cor:converse-jensen-theta}.
They are a relaxation because they do not imply, e.g., the second constraint of the Natural LP.

Under these relaxed constraints, it suffices to prove the inequality under three assumptions:
\begin{enumerate}[label=(\Roman*)]
	\item For any $i \in I$, $\rho_i\ge \theta$.
	\label{asm:rho<theta}
	\item For any $i \in I$ with $\theta\le \rho_i\le 1-\theta$, there is an offline neighbor $j \in J_i$ such that $\rho_{ij}=\rho_i$.		\label{asm:theta<=rho<=1-theta}
	\item For any $i \in I$ with $1-\theta < \rho_i \le 1$, there is an offline neighbor $j \in J_i$ such that  $\rho_{ij}\geq 1-\theta$.
	\label{asm:rho>1-theta}
\end{enumerate}

Concretely, we next prove that for any case that violates one of these assumptions, we can construct a new case that fixes the violation while tightening the inequality subject to the above relaxed constraints.
After that we will prove the inequality under these assumptions.
\paragraph{Assumption \ref{asm:rho<theta}.}
Suppose that an online vertex type $i \in I$ has $\rho_{i} < \theta$.
We replace $i$ with a new online vertex type $i'$ with the same set of offline neighbors, where:
\[
\lambda_{i'}=\frac{\lambda_i \rho_i}{\theta}
~,\quad
\rho_{i'}=\theta
~,\quad
\rho_{i'j}=\frac{\theta\rho_{ij}}{\rho_i}
~.
\]

For any offline neighbor $j$, the above parameters ensure that:
\[
x_{i'j}=x_{ij}
~,\quad
y_{i'j}p(\rho_{i'})=y_{ij}p(\rho_i)
~,\quad
\lambda_{i'} \big( p(\rho_{i'}) - p(\rho_{i'} - \rho_{i'j})\big)=\lambda_{i} \big( p(\rho_{i}) - p(\rho_{i} - \rho_{ij})\big)
~.
\]

Hence, Eqn.~\eqref{eqn:SWOR-diff-ineq} stays the same, and the first two constraints still hold.
Finally, the third constraint also holds because both $i$ and $i'$ contribute zero to its left-hand-side.

We remark that under Assumption \ref{asm:rho<theta}, Eqn.~\eqref{eqn:SWOR-diff-ineq} simplifies to:
\begin{equation}
	\label{eqn:SWOR-diff-ineq-restate-with-asm1}
	\sum_{j\in J} y_{j}\sum_{i\in I_j} \lambda_{i}\big(p(\rho_{i}) 
	-p(\rho_{i} -\rho_{ij})\big) 
	\le \left(\frac{1}{\theta} +1-\ln(1-\theta)\right) 
	\sum_{j\in J} y_{j} 
	- \frac{1-\ln(1-\theta)}{\theta}\sum_{j\in J} x_{j}
	~.
\end{equation}

\paragraph{Assumption \ref{asm:theta<=rho<=1-theta}.}
Suppose that an online vertex type $i \in I$ has $\rho_i\le 1-\theta$ and at least two offline neighbors $j$ whose $\rho_{ij}>0$.
We decompose $i$ into finitely many new online vertex types $i_j$, one for each offline vertex $j$ whose $\rho_{ij}>0$, where:
\[
\lambda_{i_j} = \frac{\lambda_i \rho_{ij}}{\rho_{i}}
~,\quad 
\rho_{i_j}=\rho_{{i_j}j}=\rho_{i}
~.
\]

For any offline neighbor $j$ with $\rho_{ij} > 0$, the above parameters ensure that $x_{i_j} = x_{i_j j} = x_{ij}$ and $y_{i_j} = y_{i_j j} = y_{ij}$.
Further by the concavity of $p$, we have:
\begin{equation*}
	\lambda_{i_j}\big(p(\rho_{i_j}) 
	-p(\rho_{i_j} -\rho_{{i_j}j})\big)
	=\frac{\lambda_{i} \rho_{ij}}{\rho_{i}} p(\rho_{i}) \ge \lambda_{i}\big( p(\rho_{i}) -p(\rho_{i} -\rho_{ij})\big).
\end{equation*}

Hence, our decomposition tightens Eqn.~\eqref{eqn:SWOR-diff-ineq-restate-with-asm1}, and still satisfies the first two constraints.
Finally, the third constraint also continues to hold because for any $j$, both $i$ and $i_j$ contribute zero to the left-hand-side of the constraint.

\paragraph{Assumption \ref{asm:rho>1-theta}.}
Suppose that an online vertex type $i \in I$ has $\rho_i > 1-\theta$, but $\rho_{ij} < 1-\theta$ for all offline neighbors $j \in J_i$.
Let $S = \{ j \in J_i : \rho_{ij} > 0 \}$ denote the offline neighbors that are fractionally matched to $i$.
We decompose $i$ into $|S|(|S|-1)$ new online vertex types $(j,k)$, one for each ordered pair $(j,k)$ from $S$ with $j\ne k$,
where:
\[
\rho_{(j,k)}=\rho_i
~,\quad
\rho_{(j,k),j} = 1-\theta
~,\quad
\rho_{(j,k),k}= \rho_i-1+\theta
~.
\]

It remains to choose arrival rates $\lambda_{(j,k)}$'s such that the decomposition tightens the inequality subject to the three constraints.
This reduces to an LP feasibility problem, which we summarize it as the next lemma but defer its proof to Appendix~\ref{app:stochasticSWOR}.

\begin{lemma}
	\label{lem:SWOR-reduc-feasibility-mass-distri}
	There are $\lambda_{(j,k)} \ge 0$ for $j, k \in S$, $j \ne k$, such that 
	for any $j \in S$ we have:
	\begin{equation}
		\label{lem:mass-distri-1}
		\sum_{k\in S\colon k\ne j} \lambda_{(j,k)} (1-\theta)+
		\sum_{k\in S\colon k\ne j} \lambda_{(k,j)} (\rho_i-1+\theta)
		= \lambda_i \rho_{ij}
		~,
	\end{equation}
	and:
	\begin{equation}
		\label{lem:mass-distri-2}
		\sum_{k \in S \colon k \ne j} \lambda_{(j,k)} \big( 1 - p(\rho_i - 1 + \theta) \big) \ge \lambda_i \big( p(\rho_i) - p(\rho_i - \rho_{ij}) \big)
		~.
	\end{equation}
\end{lemma}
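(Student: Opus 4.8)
The plan is to treat the existence of the $\lambda_{(j,k)}$'s as a linear feasibility question and exhibit an explicit solution. Write $x_{ij} = \lambda_i \rho_{ij}$ for $j \in S$, and recall $\sum_{j \in S} \rho_{ij} = \rho_i$. The equality constraints~\eqref{lem:mass-distri-1} say that the total ``mass'' that the new types route to each $j \in S$ matches $x_{ij}$; summing~\eqref{lem:mass-distri-1} over $j \in S$ gives $(1-\theta)\Lambda + (\rho_i - 1 + \theta)\Lambda = \lambda_i \rho_i$ where $\Lambda = \sum_{j \ne k} \lambda_{(j,k)}$, i.e.\ $\rho_i \Lambda = \lambda_i \rho_i$, so necessarily $\Lambda = \lambda_i$. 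This is a useful sanity check and also suggests normalizing: we are really distributing the unit mass $\lambda_i$ over ordered pairs. A natural ansatz is a product form $\lambda_{(j,k)} = \lambda_i \, \mu_j \, \nu_k$ for suitable distributions $\mu, \nu$ on $S$, or more simply to first solve the case $|S| = 2$ by hand and then argue the general case reduces to it by an averaging/superposition argument over pairs. I would first nail down the $|S|=2$ base case explicitly, since there the two equations~\eqref{lem:mass-distri-1} (one redundant, as shown above) pin down $\lambda_{(j,k)}$ and $\lambda_{(k,j)}$ uniquely, and~\eqref{lem:mass-distri-2} becomes a concrete inequality to verify using concavity of $p$ together with $\rho_{ij} < 1-\theta < \rho_i$.

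For the general case, I would look for $\lambda_{(j,k)}$ depending on $j$ through a weight proportional to how much mass $j$ still needs and on $k$ through a weight reflecting the ``overflow'' portion $\rho_i - 1 + \theta$; concretely, try $\lambda_{(j,k)} = \lambda_i \cdot \frac{a_j b_k}{\sum_{j' \ne k} a_{j'} b_k}$-type normalized forms, or directly posit $\lambda_{(j,k)} = c\, \rho_{ij}\, \rho_{ik} + (\text{correction})$ and solve the resulting linear system for the one or two free scalars. The key point making this tractable is that~\eqref{lem:mass-distri-1} is a system of $|S|$ equations (one dependent) in $|S|(|S|-1)$ unknowns, so there is ample slack; the real content is steering that slack so that~\eqref{lem:mass-distri-2} holds simultaneously for every $j$. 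I expect the cleanest route is: (i) produce any nonnegative solution of~\eqref{lem:mass-distri-1} in closed form; (ii) observe that~\eqref{lem:mass-distri-2}, summed over $j \in S$, holds with room to spare — the left side sums to $\lambda_i \sum_{j} \frac{\rho_{ij}}{?}(1 - p(\rho_i - 1 + \theta))$ and the right to $\lambda_i \sum_j (p(\rho_i) - p(\rho_i - \rho_{ij}))$, and concavity plus $\sum_j \rho_{ij} = \rho_i$ gives the global inequality; (iii) then use the freedom in choosing among solutions of~\eqref{lem:mass-distri-1} to balance the per-$j$ inequalities, e.g.\ by a symmetrization that makes the left-hand side of~\eqref{lem:mass-distri-2} proportional to $\rho_{ij}$.

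The main obstacle is step (iii): getting~\eqref{lem:mass-distri-2} to hold \emph{for each individual} $j$ rather than just on average. If the most natural solution of~\eqref{lem:mass-distri-1} puts too little mass on the pairs $(j, \cdot)$ for some $j$ with large $\rho_{ij}$, the inequality~\eqref{lem:mass-distri-2} could fail there even though the aggregate is fine. I would address this by parametrizing the solution set of~\eqref{lem:mass-distri-1} — its general solution is a particular solution plus the kernel of the constraint map — and choosing the free parameters to equalize the ratio (LHS of~\eqref{lem:mass-distri-2})$/\rho_{ij}$ across $j$, after which the per-$j$ inequality follows from the aggregate one. Since the kernel has dimension $|S|(|S|-1) - (|S|-1) = (|S|-1)^2 > 0$ for $|S| \ge 2$, there is enough freedom; making this rigorous is the bulk of the work, and is the part I would expect to occupy the deferred proof in Appendix~\ref{app:stochasticSWOR}.
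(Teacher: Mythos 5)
Your proposal correctly identifies the problem as an LP feasibility question and the sanity check $\sum_{j\ne k}\lambda_{(j,k)}=\lambda_i$ is right, but it stops exactly where the proof has to begin: you never construct the $\lambda_{(j,k)}$'s, and the one concrete idea you offer for the hard step does not work. Your step (iii) proposes to use the kernel of the constraint map to make the left-hand side of Eqn.~\eqref{lem:mass-distri-2} proportional to $\rho_{ij}$ and then deduce the per-$j$ inequality from the aggregate one. This fails for three reasons. First, the right-hand side $\lambda_i\big(p(\rho_i)-p(\rho_i-\rho_{ij})\big)$ is not linear in $\rho_{ij}$ (it is $0$ for $\rho_{ij}\le\rho_i-\theta$ and $\tfrac{\lambda_i}{\theta}(\theta-\rho_i+\rho_{ij})$ above), so ``LHS proportional to $\rho_{ij}$'' plus an aggregate bound does not give the bound for each $j$. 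Second, the aggregate inequality has \emph{no} room to spare: summing Eqn.~\eqref{lem:mass-distri-2} over $j$ gives $\lambda_i\big(1-p(\rho_i-1+\theta)\big)$ on the left, and the right-hand side attains exactly this value at the extreme profile $\rho_{ij_1}=1-\theta$, $\rho_{ij_2}=\rho_i-1+\theta$, so any equalization scheme must be tight. Third, the feasible set is a polytope, not an affine space; the nonnegativity constraints $\lambda_{(j,k)}\ge 0$ may block the kernel directions you need, and your dimension count does not address this. The $|S|=2$ base case is also misdescribed: the two equations in Eqn.~\eqref{lem:mass-distri-1} are generically independent (their matrix has determinant $(1-\theta)^2-(\rho_i-1+\theta)^2>0$), so neither is redundant, and the proposed ``superposition over pairs'' for general $S$ is not developed and is not obviously sound because the pairs share the column constraints.

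For comparison, the paper's proof proceeds in two concrete stages. It first fixes only the row sums $\lambda_{i_j}=\sum_{k\ne j}\lambda_{(j,k)}$, subject to the explicit lower bound
$\lambda_{i_j}\ge\lambda_i\max\big\{\tfrac{p(\rho_i)-p(\rho_i-\rho_{ij})}{p(\rho_i)-p(\rho_i-1+\theta)},\,\tfrac{\rho_{ij}-\rho_i+1-\theta}{2-2\theta-\rho_i}\big\}$;
the first branch is exactly what makes Eqn.~\eqref{lem:mass-distri-2} hold for that $j$, and the feasibility of choosing such row sums with $\sum_j\lambda_{i_j}=\lambda_i$ is shown by a convexity/extreme-point argument (both branches are convex piecewise-linear in $\rho_{ij}$, agree at $\rho_{ij}=1-\theta$, and their sums are maximized at the two-point extreme profile, where they equal $\lambda_i$). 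It then realizes the column constraints Eqn.~\eqref{lem:mass-distri-1} by a max-flow/min-cut argument on a bipartite network whose source capacities are $\lambda_i\rho_{ij}-\lambda_{i_j}(1-\theta)$ and sink capacities are $\lambda_{i_j}(\rho_i-1+\theta)$; the second branch of the lower bound is precisely what makes every min cut have capacity $\lambda_i(\rho_i-1+\theta)$, so the flow saturates all source and sink edges. Some mechanism of this kind---decoupling the row sums (which control Eqn.~\eqref{lem:mass-distri-2}) from the transportation step (which delivers Eqn.~\eqref{lem:mass-distri-1})---is what your outline is missing.
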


By our choice of parameters for the new types, Eqn.~\eqref{lem:mass-distri-1} is equivalent to:
\[
\sum_{k\in S\colon k\ne j} x_{(j,k)j} +
\sum_{k\in S\colon k\ne j} x_{(k,j)j}
= x_{ij}
~.
\]

That is, the new types' combined contribution to $x_j$ equals $x_{ij}$.
This also means that the new types' combined contribution to $y_j$ equals $y_{ij}$, since the contributions to $x_j$ and $y_j$ differ by a fixed $\rho_i$ factor.

Combining this with Eqn.~\eqref{lem:mass-distri-2} shows that the decomposition tightens the inequality while satisfying the first two constraints.
Finally the third constraint still holds because both $i$ and the new types contribute zero to its left-hand-side.

\paragraph{Proof of Lemma~\ref{lem:SWOR-diff-ineq} Under Assumptions.}
Recall that it is sufficient to consider the simplified inequality in Eqn.~\eqref{eqn:SWOR-diff-ineq-restate-with-asm1}, which we rewrite below:
\[
\sum_{j\in J} y_{j}\sum_{i\in I_j} \lambda_{i}\big(p(\rho_{i}) 
-p(\rho_{i} -\rho_{ij})\big) 
\le \left(1+\frac{1-\theta}{\theta}\ln(1-\theta)\right)\sum_{j\in J} y_j
+ \frac{1-\ln(1-\theta)}{\theta}\sum_{j\in J} \big( y_j - x_j \big)
~.
\]

We would like to prove it for each offline vertex $j$ separately, but unfortunately the resulting inequality does not hold in general.
Instead, we will first amortize the last term.

For any offline vertex $j$, let $S_j$ be the set of online vertex types that mostly matches to $j$.
That is, $S_j$ is the set of online vertex types $i$ such that either $\rho_i =\rho _{ij} > 0$ or $\rho_{ij} \ge 1-\theta$.
Let $T_j$ be the set of other online vertex types that fractionally matches to $j$.
By our assumptions, any $i \in T_j$ satisfies $\rho_i > 1-\theta $ and $0< \rho_{ij} \le \rho_{i}-1+\theta$.
We have:
\[
\sum_{j\in J} \big( y_j - x_j \big)
= \sum_{i \in I} \sum_{j \in J_i} \big( y_{ij} - x_{ij} \big)
= \sum_{i \in I} \lambda_i (1 - \rho_i)
= \sum_{j\in J} \sum_{i \in S_j} \lambda_i (1 - \rho_i)
~.
\]

In other words, we attribute all contribution of an online vertex type $i$ to the offline neighbor $j$ that type $i$ mainly matches to.
With this amortization, it remains to prove for each offline vertex $j \in J$ that:
\begin{equation}
	\label{eqn:SWOR-amort}
	y_j\sum _{i\in I_j} \lambda_i\big( p(\rho_i) -p(\rho_i -\rho_{ij})\big)
	\le \left(1+\frac{1-\theta}{\theta}\ln(1-\theta)\right) y_j +\frac{1-\ln( 1-\theta)}{\theta }\sum_{i\in S_j} \lambda_i(1-\rho_i)
	~.
\end{equation}

We next define an auxiliary variable:
\begin{align*}
	\Delta_j
	&
	= 1+\frac{1-\theta}{\theta}\ln(1-\theta) -\frac{1}{\theta}\sum_{i\in S_j}\big(x_{ij} -(1-\theta) \lambda_i\big) \\
	&
	\ge 1+\frac{1-\theta}{\theta}\ln(1-\theta) -\frac{1}{\theta}\sum_{i\in I_j}\big(x_{ij} -(1-\theta) \lambda_i\big)^+ \ge 0
	~.
\end{align*}

For any $i \in I_j$, $p(\rho_i)-p(\rho_i-\rho_{ij})$ equals $\frac{1}{\theta} \big( \theta -\rho_i +\rho_{ij} \big)$ if $i \in S_j$, and is zero if $i \in T_j$.
Hence:
\begin{align*}
	\sum_{i\in I_j} \lambda_i\big( p( \rho_i) -p( \rho_i -\rho_{ij})\big) 
	&=\frac{1}{\theta}\sum_{i\in S_j} \lambda_i( \theta -\rho_i +\rho_{ij})\\
	&=\frac{1}{\theta}\sum_{i\in S_j} \lambda_i( 1-\rho_i) +1+\frac{1-\theta}{\theta}\ln( 1-\theta ) -\Delta_j~,
\end{align*}

Putting it into Eqn.~\eqref{eqn:SWOR-amort}, and merging terms that involve $\sum_{i \in S_j} \lambda_i(1-\rho_i)
$, it becomes:
\[
\sum_{i \in S_j} \lambda_i(1-\rho_i) \left( \frac{1}{\theta} y_j - \frac{1-\ln(1-\theta)}{\theta} \right) \le y_j \Delta_j
~.
\]

Hence, we only need to verify the following two inequalities:
\begin{align}
	\label{eqn:SWOR-amort-1}
	\sum_{i\in S_j} \lambda_i (1-\rho_i) & \le (1-\theta)y_j ~, \\
	\label{eqn:SWOR-amort-2}
	\frac{1}{\theta} y_j - \frac{1-\ln(1-\theta)}{\theta} & \le \frac{1}{1-\theta} \Delta_j ~.
\end{align}

\paragraph{Proof of Eqn.~\eqref{eqn:SWOR-amort-1}.}
It suffices to show for any $i\in S$ that:
\[
\lambda_i (1-\rho_i)\le (1-\theta) y_{ij} = (1-\theta) \frac{\lambda_i\rho_{ij}}{\rho_i}
~.
\]

If $\rho_i = \rho_{ij}$, it reduces to $\rho_i\ge \theta$, which holds by our assumptions.

If $\rho_{ij}\ge 1-\theta$, it reduces to $(1-\rho_i)\le \nicefrac{
	(1-\theta)^2}{\rho_i}$.
This holds because for any $\rho_i \ge 1 - \theta$ and any $0 \le \theta \le \frac{1}{2}$, we have $\rho_i (1-\rho_i) \le \theta(1-\theta) \le (1-\theta)^2$.

\paragraph{Proof of Eqn.~\eqref{eqn:SWOR-amort-2}.}
We can upper bound $y_j$ as follows:
\begin{equation*}
	\begin{aligned}
		y_j
		&
		= \sum_{i\in S_j} y_{ij} +\sum_{i\in T_j} y_{ij} \\
		&
		\le \sum_{i\in S_j} \lambda_{i} + \frac{1}{1-\theta}\sum_{i\in T_j} x_{ij}
		&& \mbox{($\rho_i \ge 1-\theta$ for $i \in T_j$)}
		\\
		&
		= \frac{1}{1-\theta} \bigg( \sum_{i\in S_j} x_{ij} - \sum_{i\in S_j} \big(x_{ij} -(1-\theta) \lambda_i\big) + \sum_{i\in T_j} x_{ij} \bigg)\\ 
		&
		= \frac{1}{1-\theta} x_j - \frac{\theta}{1-\theta} \Big(1+\frac{1-\theta}{\theta}\ln(1-\theta) - \Delta_j \Big) \\[2ex]
		&
		\le 1 - \ln(1-\theta) + \frac{\theta}{1-\theta} \Delta_j 
		~.
		&& \mbox{($x_j \le 1$)}
	\end{aligned}
\end{equation*}

	\section{Balance Sampling Without Replacement}
\label{sec:balanceSWOR}

This section gives the first non-trivial competitive analysis of the \balanceswor algorithm.
We assume without loss of generality that $I = \{1, 2, \dots, n\}$ for some positive integer $n$, where $i \in I$ is the $i$-th online vertex by their arrival order.

\balanceswor consists of two parts.
First, it runs an unbounded version of the \balance algorithm to maintain a fractional allocation of online vertices to offline vertices.
When each online vertex $i \in I$ arrives, the unbounded \balance algorithm allocates an $x_{ij}$ amount of vertex $i$ to each neighbor $j \in J_i$ such that (1) in total one unit of vertex $i$ is allocated to its neighbors, i.e., $\sum_{j \in J_i} x_{ij} = 1$, (2) every neighbor with $x_{ij} > 0$ is allocated with the same total amount of online vertices (including the allocation of $i$), i.e., has the same $\sum_{i' \le i} x_{i'j}$, and (3) every neighbor with $x_{ij} = 0$ must have already been allocated with a weakly larger total amount of online vertices.

Next, it resorts to sampling without replacement to find an integral matching, using the fractional allocation given by unbounded \balance as a reference.
That is, it matches online vertex $i$ to an unmatched offline neighbor $j$ with probability proportional to $x_{ij}$.

\begin{tcolorbox}[beforeafter skip=10pt]
	\label{alg:balanceswor}
	\balanceswor\\[1ex]
	\emph{State variable: (for every offline vertex $j \in J$)}
	\begin{itemize}[itemsep=0pt, topsep=4pt]
		\item $y_j^i = \sum_{i' \le i} x_{ij}$, total amount of online vertices $1$ to $i$ that is allocated to $j$; let $y_j^0 = 0$.
	\end{itemize}
	\smallskip
	\emph{When an online vertex of  $i \in I$ arrives:}
	\begin{itemize}[itemsep=0pt, topsep=4pt]
		\item Find threshold $\bar{y} > 0$ such that $\sum_{(i, j) \in E} (\bar{y} - y_j^{i-1})^{+}=1$.
		
		\item Let $x_{ij} = (\bar{y} - y_j^{i-1})^{+}$ for every neighbor $j$.
		
		\item Match $i$ to an unmatched neighbor $j$ with probability proportional to $x_{ij}$.
	\end{itemize}
\end{tcolorbox}

Trivially, \balanceswor is at least $0.5$-competitive because it is a greedy algorithm that always matches an online vertex whenever there is an unmatched neighbor.
The main result of this section is the first competitive analysis of \balanceswor that is strictly better than $0.5$.

\begin{theorem} 
	\label{thm:balanceswor}
	\balanceswor is at least $0.513$-competitive for Online Bipartite Matching.
\end{theorem}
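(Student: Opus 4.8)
The plan is to follow the standard online primal-dual / charging template for \balance, but with a twist that accounts for the loss incurred by sampling without replacement. I would set up a randomized gain-sharing scheme: when the algorithm matches online vertex $i$ to offline vertex $j$, I distribute one unit of gain between $i$ and $j$ via a \emph{water-filling gain-sharing function} $g(\cdot)$ depending on the water level $\bar{y}$ at which $i$ was matched. The key quantity to control is, for each offline vertex $j$, the probability $q_j$ that $j$ remains unmatched at the end. The heart of \balance analyses is that if $j$ reaches water level $\beta_j \defeq \lim_i y_j^i$, then a $1 - g(\beta_j)$-type charge is guaranteed; the heart of the SWOR complication is that even when $x_{ij} > 0$, vertex $j$ might already be matched, so the ``displacement'' of vertex $i$'s fractional mass onto other neighbors that \balance implicitly assumes does not literally happen — instead it happens only in a probabilistic, correlated sense.

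Concretely, I would first prove a \emph{restricted online correlated selection (OCS) guarantee} for SWOR, exactly as the introduction promises: when the sampling masses $x_{i\cdot}$ at each step are those produced by (unbounded) \balance — hence ``balanced'' in the sense that all positively-allocated neighbors sit at a common water level — the probability that a given offline vertex $j$ is still unmatched after all its online neighbors have arrived is at most some explicit function $h(\beta_j)$ of its final water level $\beta_j = \sum_{i \in I_j} x_{ij}$. The balancedness is what rescues us: in the adversarial OCS instance for SWOR the masses are wildly unequal across rounds, but here the water-filling structure forces, at every arrival that touches $j$, a controlled relationship between $x_{ij}$ and how much mass the other neighbors of $i$ have already absorbed. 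I expect this step to require a careful induction over arrival order, tracking the joint law of ``unmatched'' indicators only through their marginals plus a negative-correlation or FKG-type inequality (or, more likely, a direct potential/supermartingale argument bounding $\Pr[j \text{ unmatched}]$ round by round against the water level increments $x_{ij}$). This is the main obstacle: marginal bounds alone are not enough because the event ``$i$ fails to match into $j$'' is entangled with the histories of all of $i$'s other neighbors, so I would need to find the right monotone statistic that the SWOR dynamics provably improve.

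Given the restricted OCS bound $\Pr[j \text{ unmatched}] \le h(\beta_j)$, the rest is the now-routine \balance primal-dual calculation. I would assign dual prices $u_i$ to online and $v_j$ to offline vertices via $v_j = 1 - h(\beta_j)$ (suitably normalized) and $u_i = $ the complementary share dictated by the water level at which $i$ matches, choose the gain-sharing function $g$ so that \emph{approximate dual feasibility} $u_i + v_j \ge \Gamma$ holds for every edge $(i,j) \in E$ — using that for an unmatched edge, $j$'s final water level is at least the level $i$ reached, which is the classical \balance monotonicity — and verify \emph{approximate complementary slackness} so that $\E[\text{ALG}] \ge \Gamma \cdot \sum_j v_j \ge \Gamma \cdot \text{OPT}$ by LP duality on the fractional matching polytope. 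Optimizing the free function $g$ (equivalently, the threshold at which gain-sharing switches regimes) subject to the feasibility constraint reduces to a one-dimensional optimization, and I would expect the optimum to land at the claimed $0.513$; the numerical constant here is weaker than the $1 - 1/e$ of fractional \balance precisely because $h(\beta_j) > e^{-\beta_j}$ in general — SWOR rounding is lossy — and $0.513$ is presumably what survives. The one genuinely new ingredient, and the only place where novelty beyond bookkeeping is needed, is the restricted OCS lemma of the previous paragraph.
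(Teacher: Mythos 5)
Your two-step architecture coincides with the paper's: a restricted OCS-style bound $\Pr[\,j\text{ unmatched}\,]\le q(y_j)$ that holds only because the sampling masses come from unbounded \balance, followed by a Gao-et-al-style gain-sharing/primal-dual argument converting any non-increasing convex $q$ into the ratio $\int_0^\infty e^{-z}\big(1-q(z)\big)\,dz$ (the paper's Lemma~\ref{lem:ocs-to-ratio}). The second step as you describe it is essentially what the paper does, so no issue there.

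The gap is that the first step --- which you yourself flag as the only genuinely new ingredient --- is left as a conjecture, and the techniques you float (FKG, marginals plus a generic negative-correlation inequality, or a supermartingale on $\Pr[\,j\text{ unmatched}\,]$ alone) are not the right shape. Tracking only the marginal for a single $j$ cannot close the induction: when online vertex $i$ arrives, the probability that $i$ fails to match into $j$ depends on the joint event that $j$ and the other neighbors $k$ of $i$ are simultaneously unmatched, so the recursion forces you to control $\Pr[S\subseteq U^i]$ for arbitrary subsets $S$ of offline vertices. The paper's inductive invariant is exactly $\Pr[S\subseteq U^i]\le \prod_{j\in S}q(y_j^i)$ (Lemma~\ref{lem:balance-swor-ocs-strengthened}), proved by rewriting the survival probability as an expectation of a ratio of the form $\frac{xy}{x+y}$ in sums of unmatched indicators, applying Jensen to that concave function, and only then invoking balancedness: because unbounded \balance places every positively-allocated neighbor at the common threshold $\bar y$, i.e.\ $y_j^{i-1}+x_{ij}=\bar y$ whenever $x_{ij}>0$, the remaining inequality collapses to a two-variable analytic inequality in $\bar y$ and $z=\sum_{j\notin S}x_{ij}$ (Lemma~\ref{lem:balanceswor-condition}). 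The concrete choice $q(y)=e^{-y}$ for $y\le 1$ and $q(y)=e^{-e^{y-1}}$ for $y>1$ is what makes that inequality true and what yields $\int_0^\infty e^{-z}\big(1-q(z)\big)\,dz\approx 0.513$; your plan would still need to discover both the subset-product invariant and this specific $q$ before the one-dimensional optimization you describe has anything to optimize.
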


 \paragraph{Comparison with Balance OCS.}
 \citet{Gao:FOCS:2021} proposed a similar algorithm called \balanceocs, based on their improved online correlated selection (OCS) technique.
 Their algorithm also runs the unbounded \balance algorithm to obtain a fractional allocation $x = (x_{ij})_{(i,j) \in E}$.
 Instead of sampling an unmatched neighbor $j$ with probability proportional to $x_{ij}$ like \balanceswor, their algorithm samples with probability proportional to $w(y_j^{i-1}) \cdot x_{ij}$ with function:
 \[
     w(y) = \exp \Big( y+\frac{1}{2}y^2+\frac{4-2\sqrt{3}}{3} y^3 \Big)
     ~.
 \]

 Although the adjustment to the sampling probabilities made by \citet{Gao:FOCS:2021} is necessary for online correlated selection, our empirical study suggests that it is redundant, if not harmful, in the context of Online Bipartite Matching.
 Theorem~\ref{thm:balanceswor} is the first step towards a theoretical explanation to this empirical observation.
 
 \subsection{Proof of Theorem~\ref{thm:balanceswor}}
 
 We first show that the competitive ratio of \balanceswor depends on the relation between the probability that an offline vertex $j$ is (un)matched and its fractional matched amount $y_j = y_j^n$.
 
 \begin{lemma}
 	\label{lem:ocs-to-ratio}
 	Suppose that for any instance of Online Bipartite Matching and any offline vertex $j$:
 	\[
 	\Pr \big[ \,\mbox{\rm vertex $j$ is unmatched}\, \big] \le q(y_j)
 	~,
 	\]
 	for some non-increasing, convex, and differentiable function $q : [0, \infty) \to [0, 1]$ with $q(0) = 1$.
 	Then, the competitive ratio of $\balanceswor$ is at least:
 	\[
 	\int_{0}^\infty e^{-z} \big( 1 - q(z) \big) dz
 	~.
 	\]
 \end{lemma}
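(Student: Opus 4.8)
The plan is a primal--dual (equivalently, charging) argument run on the deterministic water level profile $(y_j)_{j\in J}$ produced by the unbounded \balance step; note this profile does not depend on the random sampling, only on the graph, so the dual we build will be deterministic. First I would record the two elementary facts the profile obeys. Since $\sum_{j\in J_i} x_{ij}=1$, the water vertex $i$ pours when it arrives is a mass of exactly $1$, sitting at levels $\le \bar y$ (the threshold computed for $i$) with every column ending at level $\bar y$; and for every edge $(i,j)\in E$ we have $y_j\ge \bar y$, because either $x_{ij}>0$ and $j$'s level is raised exactly to $\bar y$, or $x_{ij}=0$ and $j$ already sat at level $\ge \bar y$. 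On the probabilistic side, $\E[\text{matching size}]=\sum_j \Pr[j\text{ matched}]\ge \sum_j\bigl(1-q(y_j)\bigr)$ by the hypothesis.

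Next I would take the gain density to be exactly $g:=-q'$, which is $\ge 0$ (as $q$ is non-increasing), non-increasing (as $q$ is convex), integrable on $[0,\infty)$ (since $\int_0^\infty g = 1-q(\infty)\le 1$), and satisfies $\int_0^y g = 1-q(y)$. With a splitting function $\gamma:[0,\infty)\to[0,1]$ still to be chosen, set $\alpha_i=\sum_{j:\,x_{ij}>0}\int_{y_j^{i-1}}^{\bar y}\gamma(z)g(z)\,dz$ for each online $i$ and $\beta_j=\int_0^{y_j}\bigl(1-\gamma(z)\bigr)g(z)\,dz$ for each offline $j$. Because the intervals $[y_j^{i-1},y_j^{i}]$ tile $[0,y_j]$ as $i$ ranges over $I$, summing telescopes to $\sum_i\alpha_i+\sum_j\beta_j=\sum_j\int_0^{y_j}g=\sum_j(1-q(y_j))\le \E[\text{matching size}]$, and since $M^*$ (a maximum matching, $\opt=|M^*|$) is a matching, $\sum_{(i,j)\in M^*}(\alpha_i+\beta_j)\le \E[\text{matching size}]$. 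Thus the theorem follows once I show $\alpha_i+\beta_j\ge \Gamma:=\int_0^\infty e^{-z}(1-q(z))\,dz$ for every edge $(i,j)\in E$; note $\Gamma=\int_0^\infty e^{-w}g(w)\,dw$ by Fubini, so this is exactly the claimed constant.

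To obtain edge feasibility I would lower-bound the two pieces with the water-filling facts: each column of $i$'s water from $y_j^{i-1}$ to $\bar y$ contributes at least its width times $(\gamma g)(\bar y)$ as long as $\gamma g$ is non-increasing, hence $\alpha_i\ge (\gamma g)(\bar y)\cdot 1$; and since $y_j\ge \bar y$ and $(1-\gamma)g\ge 0$, $\beta_j\ge \int_0^{\bar y}(1-\gamma)g$. So $\alpha_i+\beta_j\ge F(\bar y)$ with $F(\tau):=(\gamma g)(\tau)+\int_0^\tau(1-\gamma)g$, and it suffices to choose $\gamma$ with $F\equiv \Gamma$. Writing $u:=\gamma g$, the condition $F'\equiv 0$ is the linear ODE $u'-u=-g$; with the decay condition $e^{-\tau}u(\tau)\to 0$ (valid because $g$, hence $u\le g$, tends to $0$) the solution is $u(\tau)=e^{\tau}\int_\tau^\infty e^{-w}g(w)\,dw=\int_0^\infty e^{-s}g(\tau+s)\,ds$, which gives $F\equiv u(0)=\int_0^\infty e^{-w}g(w)\,dw=\Gamma$. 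It then remains to check three requirements: $u=\gamma g$ is non-increasing --- immediate from the last expression since $g$ is, and this is the single place convexity of $q$ is essential; $\gamma=u/g\in[0,1]$, where $\ge 0$ is clear and $\le 1$ follows from $\int_\tau^\infty e^{-(w-\tau)}g(w)\,dw\le g(\tau)\int_\tau^\infty e^{-(w-\tau)}\,dw=g(\tau)$; and $(1-\gamma)g=g-u\ge 0$, the same inequality. I expect the bookkeeping to be routine; the one genuinely non-mechanical step is spotting the choice $g=-q'$ (so $\beta_j$ can reach $1-q(y_j)$ with nothing wasted) together with the realization that it is $\gamma g$, not $\gamma$, that must be monotone --- precisely what convexity of $q$ buys, and precisely what makes the threshold lower bound on $\alpha_i$ go through.
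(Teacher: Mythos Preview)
Your proof is correct and is essentially the same primal--dual argument as the paper's, with only cosmetic differences: your $\gamma g$ is exactly the paper's $b(y)=-e^y\int_y^\infty q'(z)e^{-z}\,dz$ and your $(1-\gamma)g$ is the paper's $a(y)=-q'(y)-b(y)$, and you swap the roles of the labels $\alpha$ (offline in the paper, online for you) and $\beta$. The one presentational difference is that you \emph{derive} the splitting via the ODE $F'\equiv 0$ whereas the paper posits $a,b$ and checks $a+b'=0$ afterward; both routes land on the identical functions and use convexity of $q$ at the identical spot (monotonicity of $b=\gamma g$).
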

 
 \citet{Gao:FOCS:2021} proved a similar lemma that requires the rounding procedure to be an online correlated selection algorithm, which informally means that the stated relation must be true even when the underlying fractional matching algorithm is arbitrary.
 By contrast, Lemma~\ref{lem:ocs-to-ratio} only needs the relation to hold when the underlying fractional matching algorithm is unbounded \balance.
 The proof is essentially the same, which we include in Appendix~\ref{app:balanceSWOR} for completeness.
 
 Note that the stated relation is trivially true for $q(y) = e^{-y}$, yet it only gives the trivial $0.5$ competitive ratio.
 The next lemma proves the relation for a function $q$ that decreases faster than $e^{-y}$ when $y > 1$.
 
 \begin{lemma}
 	\label{lem:balance-swor-ocs}
 	For any instance of Online Bipartite Matching and any offline vertex $j$:
 	\[
 	\Pr \big[ \,\mbox{\rm vertex $j$ is unmatched}\, \big] \le q(y_j)
 	~,
 	\]
 	where:
 	\begin{equation}
 		\label{eqn:q}
 		q(y) =
 		\begin{cases}
 			e^{-y} & 0 \le y \le 1~; \\
 			e^{-e^{y-1}} & y > 1~.
 		\end{cases}
 	\end{equation}
 \end{lemma}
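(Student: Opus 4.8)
\textbf{Proof proposal for Lemma~\ref{lem:balance-swor-ocs}.}

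The plan is to track, for a fixed offline vertex $j$, the evolution of the probability that $j$ is unmatched as the online vertices arrive one at a time, and to show that at each step the product form of the ``survival probability'' stays below $q(y_j)$. Write $z_i = y_j^{i-1}$ for the total fractional allocation to $j$ just before online vertex $i$ is processed, so $z_i$ increases to $z_i + x_{ij}$ after step $i$, and $y_j = z_{n+1}$. When online vertex $i$ with $x_{ij} > 0$ arrives and $j$ is still unmatched, it gets matched with probability exactly $x_{ij} / x_j(t_i)$ where $x_j(t_i) = \sum_{(i,j') \in E}(\bar y - y_{j'}^{i-1})^+$ is the total sampling mass at that step — but by the defining property of unbounded \balance, this total is exactly $1$. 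Hence conditioned on $j$ being unmatched before step $i$, it survives step $i$ with probability exactly $1 - x_{ij}$. The subtlety is that $j$'s survival and the allocation $x_{ij}$ are not independent across steps: whether earlier online vertices matched elsewhere affects nothing about the \emph{fractional} allocation (\balance is deterministic), but the event ``$j$ unmatched before step $i$'' is correlated with the randomness of earlier sampling. The key observation, which I would borrow in spirit from \citet{Gao:FOCS:2021}, is that because the fractional allocation is deterministic, we actually get the \emph{clean} bound $\Pr[j \text{ unmatched}] \le \prod_i (1 - x_{ij})$ by a straightforward induction: conditioned on any history in which $j$ is unmatched before step $i$, the survival probability of step $i$ is exactly $1 - x_{ij}$ (when the arriving vertex is $i$), so unconditionally $\Pr[j \text{ unmatched after } i \mid j \text{ unmatched after } i-1] = 1 - x_{ij}$, and the product telescopes.

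Next I would convert the product bound into the claimed closed form. We have $\Pr[j\text{ unmatched}] \le \prod_{i : x_{ij}>0}(1 - x_{ij}) \le \exp\big(-\sum_i x_{ij}\big) = e^{-y_j}$, which already recovers the trivial regime $0 \le y \le 1$. To get the improvement for $y_j > 1$ we must exploit the extra structure that \balance forces on the $x_{ij}$'s: they are not an arbitrary partition of $y_j$ but come from water-filling with total mass $1$ per online vertex. The crucial consequence is that each individual $x_{ij} \le 1$ trivially, but more usefully, \balance ``fills low levels first'', so the increments to $y_j$ above level $1$ can only happen after $j$'s level has already reached the common threshold, and at those steps the threshold $\bar y$ is at least $1$ while the total mass is still $1$ — this limits how large each late increment $x_{ij}$ can be relative to the current level $z_i$. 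Concretely I expect to show $x_{ij} \le \log(z_i + x_{ij}) - \log(z_i) $-type bounds, or more directly, to compare $\prod(1-x_{ij})$ against $\exp(-\int_0^{y_j} r(z)\,dz)$ where $r(z)$ is a ``rate'' that equals $1$ for $z \le 1$ and grows like $e^{z-1}$ for $z > 1$, matching $q(y) = \exp(-\int_0^y r)$. The cleanest route is probably to prove, for each online vertex $i$ contributing to $j$, the inequality $1 - x_{ij} \le \exp\big(-(Q(z_i + x_{ij}) - Q(z_i))\big)$ where $Q(z) = -\ln q(z)$, i.e. $Q(z) = z$ for $z\le 1$ and $Q(z) = e^{z-1}$ for $z > 1$; then the product telescopes to give exactly $q(y_j)$. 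This reduces to the scalar inequality $1 - a \le \exp(-(Q(z+a) - Q(z)))$ for all feasible $(z, a)$, i.e. $Q(z+a) - Q(z) \le -\ln(1-a)$.

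The main obstacle is the last scalar inequality in the regime $z, z+a > 1$: there $Q(z+a) - Q(z) = e^{z-1}(e^a - 1)$, and we need $e^{z-1}(e^a-1) \le -\ln(1-a) = a + a^2/2 + \cdots$. This is \emph{false} for arbitrary $z$ (take $z$ large, $a$ fixed), so the telescoping-with-$Q$ idea as stated cannot be right without an additional constraint relating $a$ to $z$. The resolution must come from a genuine structural property of \balance that I have not yet pinned down: presumably when the water level $z$ is high, the amount $a = x_{ij}$ that a single new online vertex can add to $j$ is correspondingly \emph{small}, because that online vertex spreads its unit mass across many neighbors all sitting near level $z$. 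Making this quantitative — something like ``if $y_j^{i-1} = z$ then $x_{ij} \le c(z)$ with $c(z) \to 0$ as $z \to \infty$, fast enough that $e^{z-1}(e^{c(z)}-1)$ stays bounded by the telescoped budget'' — is where the real work lies, and I would expect to need a careful amortized/potential argument over the offline neighbors of each online vertex rather than a purely local one. An alternative I would keep in reserve is to mimic the differential/continuous analysis used for \swor earlier in the paper: introduce a continuous relaxation of the water-filling process, set up an ODE for the survival probability as a function of the water level, and verify that $q$ as defined dominates its solution; the piecewise definition of $q$ at $y = 1$ strongly suggests the ODE changes character exactly when the level crosses $1$, which is consistent with $q'/q = -1$ below $1$ and $q'/q = -e^{y-1}$ above.
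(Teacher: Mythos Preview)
Your proposal has a genuine gap. The product bound $\Pr[j\text{ unmatched}]\le\prod_i(1-x_{ij})$ is valid (though ``exactly $1-x_{ij}$'' is wrong --- conditioned only on $j$ unmatched, the survival probability is \emph{at most} $1-x_{ij}$, since the SWOR denominator $\sum_{j'\text{ unmatched}}x_{ij'}$ can be strictly less than $1$). But this product is \emph{too weak} to ever yield $q(y_j)$ in the regime $y_j>1$: if the increments $x_{ij}$ are all tiny, $\prod_i(1-x_{ij})\approx e^{-y_j}$, which is strictly larger than $e^{-e^{y_j-1}}$. No structural bound of the form $x_{ij}\le c(z_i)$ can rescue the telescoping argument, because small increments are exactly the worst case for the product, not the best. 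The improvement does not come from the increments being small; it comes from the \emph{denominator} of the SWOR probability being small --- i.e., from the other offline neighbors of $i$ being already matched with positive probability. Your own intuition paragraph points at this, but your proof plan then discards it by passing to the product bound.

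The paper captures this mechanism by proving a strictly stronger statement by induction on $i$: for every \emph{subset} $S$ of offline vertices, $\Pr[S\subseteq U^i]\le\prod_{j\in S}q(y_j^i)$. The point of the subset formulation is that in the inductive step one must control $\E\big[\frac{\sum_{k\notin S}x_{ik}\bar X_k}{\sum_{j\in S}x_{ij}+\sum_{k\notin S}x_{ik}\bar X_k}\big]$ (the probability that SWOR picks outside $S$), and for this one needs the hypothesis applied to the larger sets $S\cup\{k\}$. After Jensen on the concave function $(x,y)\mapsto xy/(x+y)$, the key structural fact from \balance enters: every $k$ with $x_{ik}>0$ has $y_k^{i-1}=\bar y-x_{ik}$ for the common threshold $\bar y$, so when $\bar y>1$ each such $k$ already satisfies $q(y_k^{i-1})<1$. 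This, not any bound on the size of $x_{ij}$, is what drives the double-exponential decay. Everything then reduces to the scalar inequality of Lemma~\ref{lem:balanceswor-condition}. Your single-vertex induction cannot see the other neighbors' matched probabilities, which is precisely the information the proof needs.
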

 
 We will present its proof in the next subsection.
 The intuition is simple.
 Suppose that an offline vertex $j$ has already been fractionally matched to $y_j > 1$. 
 For any subsequent online vertex $i$ that is partly allocated to $j$ by unbounded \balance, we can conclude that any offline vertex $k \ne j$ that also gets a positive amount of $i$ must have $y_k > 0$.
 Therefore, there is nonzero chance that $k$ is already matched when $i$ arrives, which intuitively shall imply a larger probability that sampling without replacement would sample $j$ (if it is not yet matched).
 
 Theorem~\ref{thm:balanceswor} follows as a corollary of Lemmas~\ref{lem:ocs-to-ratio} and \ref{lem:balance-swor-ocs}, and a numerical calculation that:
 \[
 \Gamma = \int_0^\infty e^{-z} \big( 1 - q(z) \big) dz \approx 0.51304 > 0.513
 ~.
 \]
 
 Note that function $q$ in Eqn.~\eqref{eqn:q} is differentiable with derivative $\frac{1}{e}$ at $y = 1$.
 
 \subsection{Proof of Lemma~\ref{lem:balance-swor-ocs}}
 
 Throughout this subsection, $q$ refers to the function defined in Eqn.~\eqref{eqn:q}.
 We first state the following property about function $q$, whose proof is deferred to Appendix~\ref{app:balanceSWOR}.
 
 \begin{lemma}
 	\label{lem:balanceswor-condition}
 	For any $0 \leq z \leq 1$ and any $y \geq 0 $:
 	\[
 	\frac{z q\big( (y-z)^+ \big)}{1-z+z q\big( (y-z)^+ \big)} \leq \exp\Big( (1-z)\frac{q'(y)}{q(y)}\Big)
 	~.
 	\]
 \end{lemma}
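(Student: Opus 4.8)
The plan is to verify the inequality by a reduction followed by a case analysis. Write $\mu(w):=q'(w)/q(w)$ throughout; by \eqref{eqn:q} we have $\mu(w)=-1$ for $w\le 1$ and $\mu(w)=-e^{w-1}$ for $w>1$, so $\mu$ is continuous, non-increasing, and $\mu\le-1$. I will use two elementary observations: (i) the map $t\mapsto\frac{zt}{1-z+zt}$ is non-decreasing in $t\ge0$, so the left-hand side is monotone in $q\big((y-z)^+\big)$; and (ii) for $z\in(0,1]$, clearing denominators makes the claim equivalent to
\[
z\,q\big((y-z)^+\big)\,\Big(e^{-(1-z)\mu(y)}-1\Big)\ \le\ 1-z,
\]
with $z=0$ immediate (left side $0$, right side $e^{\mu(y)}\le e^{-1}$). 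For $y\le1$ we have $(y-z)^+\le1$ and $\mu(y)=-1$, so by (i) the original left-hand side is at most $\frac{z}{1-z+z}=z$ while the right-hand side is $e^{-(1-z)}$; hence it suffices that $ze^{1-z}\le1$, which holds because $\psi(z)=ze^{1-z}$ satisfies $\psi(1)=1$ and $\psi'(z)=(1-z)e^{1-z}\ge0$ on $[0,1]$.

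It remains to treat $y>1$, where $\mu(y)=-e^{y-1}=:-M$ with $M\ge1$, $(y-z)^+=y-z$ since $z\le1<y$, and $q(y)=e^{-M}$. One sub-case is clean: when $z\le\frac12$, concavity of $\ln q$ gives $\ln q(y-z)\le\ln q(y)-z\mu(y)$, i.e.\ $q(y-z)\le q(y)e^{zM}=e^{-(1-z)M}=:R$, so by (i) the left-hand side is at most $\frac{zR}{1-z+zR}$, which is $\le R$ precisely when $2z-1\le zR$ — true for $z\le\frac12$. So all the content lies in $y>1$ and $z\in(\frac12,1)$.

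For that regime I would take logarithms and study
\[
g(z)\ :=\ \ln z+\ln q(y-z)-\ln\!\big(1-z+z\,q(y-z)\big)+(1-z)M,
\]
aiming for $g\le0$ on $(\frac12,1]$. Here $g(1)=0$ and, using $q'=q\mu$, $g'(1)=\frac1{q(y-1)}-M\ge0$, equal to $0$ for $1<y\le2$ and equal to $e^{e^{y-2}}-e\cdot e^{y-2}\ge0$ for $y>2$ (by $e^v\ge ev$ at $v=e^{y-2}\ge1$). This is \emph{not} a monotonicity statement: for large $y$, $g$ rises from $-\infty$, dips, and then climbs steeply back to $g(1)=0$. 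Accordingly the interval should be split into (a) a neighbourhood of $z=1$, where $g'>0$ because $g'(1)\ge0$ and $g'$ is continuous, so $g\le g(1)=0$ there; (b) a bulk region with $y-z>1$ and $z$ away from $1$, where the term $\ln q(y-z)=-e^{y-z-1}$ overwhelms $(1-z)M$ since $e^{-z}>1-z$ (the two combine to $-e^{y-1}\big(e^{-z}-1+z\big)$ plus bounded terms); and (c) the range $y-z\le1$ (so $q(y-z)=e^{z-y}$, occurring only for $1<y<2$), where one bounds $z e^{z-y}\big(e^{(1-z)M}-1\big)$ directly against $1-z$. I expect the main obstacle to be a \emph{uniform-in-$y$} treatment gluing (a), (b), (c): log-concavity of $q$ is too lossy in this regime because $q$ is strongly log-concave, so one must work with the exact (doubly-exponential) form of $q$ and, for each fixed $M$, carefully locate where $g'$ vanishes and bound $g$ at those points — the margins near the interfaces (e.g.\ $y$ close to $2$, or $z$ close to $1$) are thin. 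An equivalent route is to show $h(z):=e^{(1-z)\mu(y)}\big(1-z+z\,q(y-z)\big)-z\,q(y-z)\ge0$ using $h(0)=e^{\mu(y)}>0$, $h(1)=0$, and $h'(1)=M\,q(y-1)-1\le0$; the same corner (large $z$, large $y$) is where the estimate becomes tight.
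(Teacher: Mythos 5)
Your treatment of $y\le 1$ matches the paper's (monotonicity of $t\mapsto \frac{zt}{1-z+zt}$, reduce to $z\le e^{-(1-z)}$), and your handling of $y>1$ with $z\le\frac12$ via log-concavity of $q$ (so $q(y-z)\le e^{-(1-z)M}=R$ and then $\frac{zR}{1-z+zR}\le R$ since $2z-1\le 0$) is correct and is a clean shortcut the paper does not use. However, the proof is incomplete precisely where the lemma is hard: the regime $y>1$, $z\in(\frac12,1)$ is only sketched, and you say so yourself. That regime is not a routine verification. The paper devotes the bulk of its proof to it: after substituting $y'=y-1$, $z'=1-z$ it splits on whether $y-z\le 1$, and in the subcase $y-z>1$ it changes variables to $u=e^{z'e^{y'}}-1$ and applies Young's inequality twice with exponent $\alpha=z'e^{1-z'}$ to reduce to one-variable elementary inequalities. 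Nothing in your outline substitutes for that work.

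Moreover, piece (a) of your proposed gluing is flawed as stated. You argue that near $z=1$ one has $g'>0$ ``because $g'(1)\ge 0$ and $g'$ is continuous.'' But you also compute that $g'(1)=0$ for all $1<y\le 2$, and continuity of $g'$ at a point where it vanishes gives no sign information in a neighbourhood; you would need second-order (or higher) information there. Even when $g'(1)>0$ strictly, ``a neighbourhood of $z=1$'' is not quantified, so it cannot be matched up with the boundary of region (b); and your own description of (b) and (c) concedes that log-concavity is too lossy and that the margins at the interfaces are thin. So the argument for $y>1$, $z>\frac12$ is a plan, not a proof, and the plan as written contains a step that fails. To close the gap you would either have to carry out the explicit analysis of the critical points of $g$ uniformly in $M=e^{y-1}$, or adopt something like the paper's substitution-plus-Young's-inequality route for the subcase $y-z>1$ (your $z\le\frac12$ observation does let you assume $z>\frac12$ there, which may slightly simplify the constants, but it does not remove the need for the argument).
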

 
 We next prove a strengthened version of Lemma~\ref{lem:balance-swor-ocs} by induction.
 The original Lemma~\ref{lem:balance-swor-ocs} is a special case of Lemma~\ref{lem:balance-swor-ocs-strengthened} when $i = n$ and $S$ is a singleton $\{j\}$.
 
 \begin{lemma}
 	\label{lem:balance-swor-ocs-strengthened}
 	For any instance of Online Bipartite Matching, any \emph{subset of offline vertices} $S$, and any online vertex $i \in I$:
 	\[
 	\Pr \big[ \,\mbox{\rm no vertex in $S$ is matched to online vertices $1$ to $i$}\, \big] \le \prod_{j \in S} q(y^i_j)
 	~.
 	\]
 \end{lemma}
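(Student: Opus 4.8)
The plan is to prove Lemma~\ref{lem:balance-swor-ocs-strengthened} by strong induction on the number $i$ of online vertices that have arrived, establishing the bound simultaneously for every subset $S$ of offline vertices. It is convenient to include $i = 0$ as the base case: then $y_j^0 = 0$ and $q(0) = 1$, so the right-hand side is $1$, matching the fact that the event ``no vertex of $S$ is matched to online vertices $1$ to $0$'' has probability $1$.

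For the inductive step I would fix $S$ and look at the newly arrived online vertex $i$. Let $N = \{ j \in J_i : x_{ij} > 0 \}$ be its active neighbors, set $T = S \cap N$, and recall that unbounded \balance raises all active neighbors to a common level: $y_k^i = \bar y$ for every $k \in N$, where $\bar y = y_k^{i-1} + x_{ik}$, whereas $y_j^i = y_j^{i-1}$ for $j \notin N$. If $T = \emptyset$ then vertex $i$ cannot be matched into $S$, both sides are unchanged from step $i-1$, and the inductive hypothesis finishes the case. Otherwise, condition on all the randomness $\mathcal{F}_{i-1}$ of the first $i-1$ steps and let $E_{i-1}$ be the event that no vertex of $S$ is matched among online vertices $1$ to $i-1$. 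On $E_{i-1}$ all of $T$ is still unmatched, so writing $U \subseteq N$ for the random set of unmatched active neighbors of $i$ we have $T \subseteq U$, and sampling without replacement matches $i$ into $T$ with conditional probability $\bigl( \sum_{j \in T} x_{ij} \bigr) / \bigl( \sum_{k \in U} x_{ik} \bigr)$. Hence
\[
\Pr[E_i] = \E\Bigl[ \mathbf{1}[E_{i-1}] \cdot \Bigl( 1 - \frac{\sum_{j \in T} x_{ij}}{\sum_{k \in U} x_{ik}} \Bigr) \Bigr]
~,
\]
and the goal becomes to show this is at most $\prod_{j \in S \setminus T} q(y_j^{i-1}) \cdot q(\bar y)^{|T|}$.

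The crux is to control the joint law of the event $E_{i-1}$ and the unmatched set $U$, and this is precisely what the general-$S$ formulation is for: applying the inductive hypothesis to $S \cup D$ as $D$ ranges over subsets of $N \setminus S$ gives $\Pr[E_{i-1}\text{ and every vertex of }D\text{ unmatched}] \le \prod_{j \in S} q(y_j^{i-1}) \cdot \prod_{k \in D} q(\bar y - x_{ik})$, a family of upper bounds on ``upward-closed'' probabilities that pins down how much mass the joint distribution can place on large unmatched sets. Here I would use the structural fact that every active neighbor $k \in N$ carries level $y_k^{i-1} = \bar y - x_{ik}$ with $\bar y \ge y_j^{i-1} + x_{ij}$ for each $j \in T$: when $\bar y > 1$ this forces $i$'s neighbors to be likely already matched, which is exactly what drives $\sum_{k \in U} x_{ik}$ below $1$ and produces the super-exponential decay of $q$ beyond $y = 1$. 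Combining the displayed identity with these constraints and the monotonicity and convexity of $q$ (for instance, $b \mapsto b/(a+b)$ is concave, which lets one replace $\sum_{k \in U} x_{ik}$ by a conditional expectation), the estimate is engineered to collapse, coordinate by coordinate, to an inequality of exactly the form
\[
\frac{z\, q\bigl((y - z)^+\bigr)}{1 - z + z\, q\bigl((y - z)^+\bigr)} \le \exp\Bigl( (1 - z)\, \frac{q'(y)}{q(y)} \Bigr)
~,
\]
namely Lemma~\ref{lem:balanceswor-condition} with $z = x_{ij}$ and $y = \bar y$ (so $(y-z)^+ = y_j^{i-1}$), which supplies the per-step factor $q(\bar y)/q(\bar y - x_{ij})$ for each $j \in T$ and leaves the other coordinates of $S$ untouched.

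I expect this joint-distribution step to be the main obstacle. Unlike a black-box online correlated selection analysis, the argument cannot pretend the unmatched statuses of $i$'s neighbors are independent, so it must genuinely exploit the nested inductive bounds from the general-$S$ hypothesis. By comparison, the final one-variable inequality (Lemma~\ref{lem:balanceswor-condition}) should be a routine calculus check, handled separately by splitting at $y = 1$ and using the explicit form of $q$; the real work is setting up the reduction to it while preserving the product structure over $S$.
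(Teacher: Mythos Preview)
Your plan is the same induction as the paper's, uses the inductive hypothesis on the enlarged sets $S\cup D$, and terminates at Lemma~\ref{lem:balanceswor-condition}; so the approach is right. Two details need to be corrected, precisely at the ``joint-distribution step'' you flag as the main obstacle.

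First, the concavity trick is not the one-variable map $b\mapsto b/(a+b)$ you propose. That would require conditioning on $E_{i-1}$, and then the inductive hypothesis---which only gives an \emph{upper} bound on $\Pr[E_{i-1}]$---goes the wrong way when you divide by it. The paper instead absorbs the indicator algebraically: writing $A=\bigl(\sum_{j\in S}x_{ij}\bigr)\bar X_S$ and $B=\sum_{k\notin S}x_{ik}\bar X_{S\cup\{k\}}$, one has the exact identity
\[
\Pr[E_i]=\frac{1}{\sum_{j\in S}x_{ij}}\,\E\!\left[\frac{AB}{A+B}\right],
\]
and now Jensen applies to the \emph{jointly} concave function $(x,y)\mapsto xy/(x+y)$, after which monotonicity in both coordinates lets you plug in the inductive upper bounds for $\E[A]$ and $\E[B]$ directly. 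This is the step that makes the whole argument close without conditioning.

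Second, Lemma~\ref{lem:balanceswor-condition} is not invoked ``coordinate by coordinate with $z=x_{ij}$''. After the Jensen step and using $y^{i-1}_k=\bar y-x_{ik}$ for every active neighbor, the inequality to establish becomes
\[
\frac{\sum_{k\notin S}x_{ik}\,q(\bar y-x_{ik})}{\sum_{j\in S}x_{ij}+\sum_{k\notin S}x_{ik}\,q(\bar y-x_{ik})}
\;\le\;
\prod_{j\in S}\frac{q(\bar y)}{q(\bar y-x_{ij})}.
\]
One then sets $z=\sum_{k\notin S}x_{ik}$ (so $1-z=\sum_{j\in S}x_{ij}$), upper-bounds the left side by monotonicity of $q$ (replacing each $x_{ik}\le z$), lower-bounds the right side via concavity of $\ln q$ (the product becomes $\exp\bigl((1-z)\,q'(\bar y)/q(\bar y)\bigr)$), and applies Lemma~\ref{lem:balanceswor-condition} \emph{once} with this aggregate $z$ and $y=\bar y$.
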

 
 \begin{proof}
 	The first half of our proof follows the same framework as the online correlated selection analysis by \citet{Gao:FOCS:2021}.
 	For any $i \in I$, let $U^i$ denote the subset of offline vertices that are \emph{not} matched to online vertices $1$ to $i$;
 	define $U^0 = J$.
 	The lemma can be restated as follows.
 	We will prove by induction on $0 \le i \le n$ that for any subset of offline vertices $S$:
 	\begin{equation}
 		\label{eqn:ocs-induction}
 		\Pr \big[ S \subseteq U^i \big] \le \prod_{j \in S} q(y^i_j)
 		~.
 	\end{equation}
 	
 	The base case when $i = 0$ holds trivially since both sides equal $1$.
 	
 	Next suppose that Eqn.~\eqref{eqn:ocs-induction} holds for $i-1$.
 	Consider this inequality for $i$, and any subset of offline vertices $S$.
 	For any offline vertex $j \in J$, let $\bar{X}_j$ be the indicator that $j$ is not matched to vertices $1$ to $i-1$.
 	Further, for any subset of offline vertices $T \subseteq J$, let $\bar{X}_T = \prod_{j \in T} \bar{X}_j$ be the indicator that no vertex in $T$ is matched to online vertices $1$ to $i-1$.
 	The probability of $S \subseteq U^i$ is:
 	\begin{align*}
 		\Pr \big[ S \subseteq U^i \big]
 		&
 		= \E \bigg[ \bar{X}_S \bigg( 1 - \frac{\sum_{j \in S} x_{ij} \bar{X}_j}{\sum_{j \in J} x_{ij} \bar{X}_j} \bigg) \bigg] \\
 		&
 		= \E \bigg[ \bar{X}_S \bigg( 1 - \frac{\sum_{j \in S} x_{ij} \bar{X}_S}{\sum_{j \in S} x_{ij} \bar{X}_S + \sum_{j \notin S} x_{ij} \bar{X}_{S \cup \{j\}}} \bigg) \bigg] \\
 		&
 		= \frac{1}{\sum_{j \in S} x_{ij}} \E \bigg[ \frac{\big( \sum_{j \in S} x_{ij} \bar{X}_S \big) \big( \sum_{j \notin S} x_{ij} \bar{X}_{S \cup \{j\}} \big)}{\sum_{j \in S} x_{ij} \bar{X}_S + \sum_{j \notin S} x_{ij} \bar{X}_{S \cup \{j\}}} \bigg]
 		~,
 	\end{align*}
 	where the second equality holds because $\bar{X}_S \bar{X}_j = \bar{X}_S$ for any $j \in S$.
 	
 	By applying Jensen's inequality to concave function $\frac{xy}{x+y}$, and using the inductive hypothesis that $\E\, \bar{X}_T \le \prod_{j \in T} q(y^{i-1}_j)$, we conclude that:
 	\begin{align*}
 		\Pr \big[ S \subseteq U^i \big]
 		&
 		\le \frac{1}{\sum_{j \in S} x_{ij}} \cdot \frac{\big( \sum_{j \in S} x_{ij} \prod_{k \in S} q(y^{i-1}_k) \big) \big( \sum_{j \notin S} x_{ij} \prod_{k \in S \cup \{j\}} q(y^{i-1}_k) \big)}{\sum_{j \in S} x_{ij} \prod_{k \in S} q(y^{i-1}_k) + \sum_{j \notin S} x_{ij} \prod_{k \in S \cup \{j\}} q(y^{i-1}_k)} \\
 		&
 		= \prod_{k \in S} q(y^{i-1}_k) \cdot \frac{\sum_{j \notin S} x_{ij} q(y^{i-1}_j)}{\sum_{j \in S} x_{ij} + \sum_{j \notin S} x_{ij} q(y^{i-1}_j)} ~.
 	\end{align*}
 	
 	We remark that this is where our analysis diverges from the online correlated selection analysis by \citet{Gao:FOCS:2021}.
 	Their online correlated selection algorithm modifies the sampling weight of each vertex $j$ from $x_{ij}$ to $\nicefrac{x_{ij}}{q(y^{i-1}_j)}$ (for the counterpart of $q$ therein with a different functional form).
 	This adjustment cancels out many terms after the application of Jensen's inequality and leads to a different bound for $\Pr \big[ S \subset U^i \big]$.
 	
 	In any case, to prove the lemma for $i$, it remains to show that:
 	\[
 	\frac{\sum_{j \notin S} x_{ij} q(y^{i-1}_j)}{\sum_{j \in S} x_{ij} + \sum_{j \notin S} x_{ij} q(y^{i-1}_j)} \le \prod_{k \in S} \frac{q(y^{i-1}_k + x_{ik})}{q(y^{i-1}_k)}
 	~.
 	\]
 	
 	If an offline vertex $j$ has $x_{ij} = 0$, it plays no role in the above inequality:
 	on the left it contributes $0$ to both the numerator and the denominator, while on the right it corresponds to a multiplicative factor of $1$.
 	We will therefore exclude such vertices from the ranges of sums and products hereafter.
 	
 	We next crucially use the fact that $x_{ij}$'s are chosen by the unbounded \balance algorithm.
 	If $x_{ij} > 0$, the definition of unbounded \balance implies that $y^{i-1}_j + x_{ij} = \bar{y}$ for a common threshold $\bar{y} > 0$.
 	Further let $z = \sum_{j \notin S} x_{ij}$.
 	We can rewrite and bound the left-hand-side as:
 	\begin{align*}
 		\frac{\sum_{j \notin S} x_{ij} q(\bar{y} - x_{ij})}{\sum_{j \in S} x_{ij} + \sum_{j \notin S} x_{ij} q(\bar{y} - x_{ij})}
 		&
 		\le
 		\frac{\sum_{j \notin S} x_{ij} q\big((\bar{y} - z)^+\big)}{\sum_{j \in S} x_{ij} + \sum_{j \notin S} x_{ij} q\big((\bar{y} - z)^+\big)} \\
 		&
 		=
 		\frac{z q\big((\bar{y}-z)^+\big)}{1 - z + z q\big((\bar{y}-z)^+\big)}
 		~.
 	\end{align*}
 	
 	By the concavity of $\ln q(y)$, the right-hand-side can be rewritten and bounded as:
 	\begin{align*}
 		\prod_{k \in S} \frac{q(\bar{y})}{q(\bar{y}-x_{ik})}
 		&
 		= \exp \bigg( \sum_{k \in S} \big( \ln q(\bar{y}) - \ln q(\bar{y}-x_{ik}) \big) \bigg) \\
 		&
 		\ge \exp \bigg( \sum_{k \in S} x_{ik} \cdot \frac{q'(\bar{y})}{q(\bar{y})} \bigg) %
 		= \exp \bigg( (1-z) \cdot \frac{q'(\bar{y})}{q(\bar{y})} \bigg)
 		~.
 	\end{align*}
 	
 	Hence, it suffices to prove that:
 	\[
 	\frac{z q\big((\bar{y}-z)^+\big)}{1 - z + z q\big((\bar{y}-z)^+\big)} \le \exp \bigg( (1-z) \cdot \frac{q'(\bar{y})}{q(\bar{y})} \bigg)
 	~,
 	\]
 	which follows by Lemma~\ref{lem:balanceswor-condition}.
 \end{proof}

	\section{Discussion}
\label{sec:discussion}

We conclude the paper with a brief discussion on (1) our intuition of why rounding by sampling without replacement may be better than online correlated selection for online matching problems, and (2) the relevance of deterministic online algorithms.

Online correlated selection, by design, leads to a stronger notion of competitiveness:
for every offline vertex, the probability that it is matched at the end is at least $\Gamma$ times how much it is fractionally matched.
Sampling without replacement is only $1-\frac{1}{e}$ competitive w.r.t.\ this stronger notion.
To see this, consider the following example from \citet{Gao:FOCS:2021}.
Fix an offline vertex $j$.
Suppose that in each round $i = 1, 2, \dots, n$, vertex $j$ is paired with a fresh online vertex $j_i$ such that the fractional matching matches $j$ by $\frac{1}{n}$ and $j_i$ by $1-\frac{1}{n}$.
Hence, vertex $j$ is matched at the end with probability only $1 - (1-\frac{1}{n})^n \approx 1-\frac{1}{e}$.
Online correlated selection algorithms adjust the weights to slighter favor vertex $j$ in later rounds if it is still unmatched, in order to be competitive for every vertex.
We observe that, however, the overall matching is in fact extremely good in this example.
Since we have a fresh offline vertex in every round, every online vertex gets matched with certainty.
Therefore, we believe that the adjustments to sampling weights may be redundant;
we just need an appropriate analysis framework that considers the overall matching quality like the potential analysis introduced in this paper, instead of considering each offline vertex separately.

Regarding deterministic versus randomized online algorithms, apparently different readers may have different opinions on the significance of derandomizing online algorithms.
On the one hand, there is a long history of research on derandomizing algorithms for different problems in theoretical computer science.
On the other hand, due to the practical relevance of online matching algorithms, we argue that deterministic algorithms are more desirable for their explainability.
Every decision made by \reggreedy is driven by the estimated costs of matching different offline vertices (i.e., the regularization term).
If someone believes that \reggreedy has made a mistake, she/he can inspect the design of the regularization term and make changes accordingly.
By contrast, decisions made by randomized online algorithms can be ``merely random coin flips''.

We hope that the results and techniques from this paper could serve as building blocks for further research on sampling without replacement as an online rounding strategy, and on the design and analysis of deterministic online algorithms for online problems under stochastic models.

	\bibliographystyle{plainnat}

	\appendix
	\section{Experiments}
\label{sec:experiment}

This section compares the empirical performance of \swor and \balanceswor with existing algorithms in previous theoretical results, on real-world datasets.

The graphs from these real-world datasets are not bipartite.
We follow an existing practice in the literature (e.g., \citet*{Borodin:JEA:2020}) and transform them into bipartite graphs by the duplicating method.
In other words, let there be two copies of each vertex in the dataset, one on each side of the bipartite graph.
Put an edge between two vertices from the two sides if there is an edge between the corresponding vertices in the dataset.
We treat the resulting bipartite graphs as type graphs of Online Stochastic Matching in Experiments 1 and 2, and as input graphs of Online Bipartite Matching in Experiment 3.

Below we list the algorithms to be evaluated in our experiments, classified into three categories.
For readers' reference, we also include the best-known competitive ratio in parentheses, rounded down to the third digit after the decimal point.
\begin{enumerate}
	\item Algorithms for Online Bipartite Matching:
	\begin{enumerate}[itemsep=0pt, topsep=0pt, parsep=0pt]
		\item \ranking ($0.632$), due to \citet*{Karp:STOC:1990}.\label{item:ranking}
		\item \mindegree ($0.5$), folklore, c.f., \citet*{Borodin:JEA:2020}.
		\item \balanceocs ($0.593$), due to \citet{Gao:FOCS:2021}.
		\item \balanceswor ($0.513$), from Section \ref{sec:balanceSWOR}.\label{item:balanceswor}
	\end{enumerate}
	
	\item Algorithms for Online Stochastic Matching with integral arrival rates (i.e., $\lambda_i = 1$ for all $i$):
	\begin{enumerate}[itemsep=0pt, topsep=0pt, parsep=0pt]
		\item \feldman ($0.670$), due to \citet*{FeldmanMMM:FOCS:2009}.
		\item \bahmani ($0.699$), due to \citet*{Bahmani:ESA:2010}.
		\item \haeupler ($0.703$), due to \citet*{Haeupler:WINE:2011}.
		\item \jailletlu ($0.729$), due to \citet*{Jaillet:MOR:2014} (Section 3).
		\item \brubach ($0.729$), due to \citet*{Brubach:ESA:2016}.
	\end{enumerate}
	
	\item Algorithms for Online Stochastic Matching:%
	\footnote{\citet{TangWW:STOC:2022}'s algorithm (0.704) is not included here since it requires the values of $\Theta(|I|^2)$ windowed estimators in one run, which cannot be efficiently computed under the experiment setting.}
	\begin{enumerate}[itemsep=0pt, topsep=0pt, parsep=0pt]
		\item \manshadi ($0.702$), due to \citet*{ManshadiOS:MOR:2012}; the same algorithm is $0.705$-competitive assuming integral arrival rates.
		\item \jlnonint ($0.706$), due to \citet*{Jaillet:MOR:2014} (Section 5).
		\item \correlated ($0.711$), due to \citet*{HuangS:STOC:2021}. \label{item:correlated-sampling}
		\item \tophalf ($0.706$), due to \citet*{HuangSY:STOC:2022} (Section 3).
		\item \poissonocs ($0.716$), due to \citet*{HuangSY:STOC:2022} (Section 4).
		\item \swor ($0.707$), from Section \ref{sec:stochasticSWOR}.
		\item \reggreedy ($0.707$), from Section \ref{sec:stochasticSWOR}. \label{item:regularized-greedy}
	\end{enumerate}
\end{enumerate}

The original definitions of Algorithms \ref{item:correlated-sampling}--\ref{item:regularized-greedy} require receiving as input the solution $(x_{ij})_{(ij) \in E}$ of the Natural LP (or an even more expressive LP such as the second-level LP in the Poisson LP hierarchy \cite{HuangSY:STOC:2022}).
The Natural LP has exponentially many constraints. 
Solving it in polynomial time using the ellipsoid method as suggested by \citet{HuangS:STOC:2021} is impractical for the real-world graphs that we consider.
Hence, in Experiment 1 we use Monte Carlo simulation (10000 times) to compute $x_{ij}$ as the fraction of simulations whose optimal matching matches offline vertex $j$ to an online vertex with type $i$.
Experiment 2 will provide supporting evidence of this alternative approach, by considering smaller graphs and comparing these algorithms' performance when we use the Natural LP solution with their performance when we use the Monte Carlo method.

\paragraph{Technical Details.} Our experiments are performed on a desktop computer with Intel Core i7-8700 clocked at 3.2GHz and 16GB RAM. The computer is running Windows 10 64-bit Pro edition. All algorithms are implemented in C++\footnote{\url{https://github.com/online-stochastic-matching-library/online-stochastic-matching-library}.} and compiled with MinGW-w64 10.0.0 + GCC 12.2.0\footnote{Available at \url{https://winlibs.com/}.}, with GNU Linear Programming Kit (GLPK)\footnote{GLPK for MinGW-w64 available at \url{https://packages.msys2.org/base/mingw-w64-glpk}.} for linear program computation, under the following compiler options:
\begin{lstlisting}[xleftmargin=\parindent]
	g++.exe stochastic_matching_library.cpp -static -O2 -lglpk
	-std=c++11 -Wall -o stochastic_matching_library.exe
\end{lstlisting}

\paragraph{Experiment 1.}
Our main experiment evaluates the performance of all algorithms listed above in Online Stochastic Matching, using the same $6$ datasets considered by \citet*{Borodin:JEA:2020}, and assuming integral arrival rates $\lambda_i = 1$ for all online vertex type $i$.
The raw graph data are extracted from the Network Data Repository \cite{Ryan:AAAI:2015} and the information is listed in Table \ref{tab:reallifegraphstatistics}.

\begin{table}[htbp]
	\centering
	\caption{Real-World Graph Statistics}
	\label{tab:reallifegraphstatistics}
	\begin{tabular}{llllll}
		\toprule
		Datasets & Application & Nodes & Edges & Max Degree & Avg. Degree   \\
		\midrule
		SOCFB-CALTECH36 & social network & 769 & 17k & 248 & 43  \\
		SOCFB-REED98 & social network & 962 & 19k & 313 & 39 \\
		BIO-CE-GN &  biological network  & 2k  &  54k  &  242  & 48 \\
		BIO-CE-PG & biological network  & 2k  & 48k   & 913  & 51 \\
		ECON-MBEAFLW & economic network & 496 & 50k & 681& 201 \\
		ECON-BEAUSE & economic network & 507 & 44k & 766 & 175 \\
		\bottomrule
	\end{tabular}
\end{table}

This experiment generates $10,000$ random bipartite graphs from each type graph, and then for each algorithm evaluates the ratio of the average size of the algorithm's matching to the average size of the offline optimal matching.
The results are in Table \ref{tab:experiment-1-result}, and are accurate up to $\pm 0.001$ for a $95\%$ confidence level.

\begin{table}[htbp]
        \centering
		\caption{Results of Experiment 1 -- Online Stochastic Matching.}
		\label{tab:experiment-1-result}
		\begin{tabular}{lcccccc}
			\toprule
			Algorithm &Caltech36 &Reed98 &CE-GN &CE-PG &beause &mbeaflw\\
			\midrule
			\textbf{Regularized Greedy} & 0.928 & 0.929 & 0.984 & 0.990 & 0.962 & 0.966\\
			\textbf{Stochastic SWOR} & 0.929 & 0.927 & 0.958 & 0.962 & 0.959 & 0.975\\
			Poisson OCS & 0.929 & 0.926 & 0.957 & 0.960 & 0.958 & 0.974\\
			Min Degree & 0.879 & 0.873 & 0.948 & 0.955 & 0.952 & 0.975\\
			\textbf{Balance SWOR} & 0.874 & 0.873 & 0.943 & 0.950 & 0.943 & 0.971\\
			Balance OCS & 0.871 & 0.870 & 0.942 & 0.949 & 0.942 & 0.970\\
			Ranking & 0.859 & 0.859 & 0.934 & 0.944 & 0.936 & 0.966\\
			Heaupler et al & 0.829 & 0.829 & 0.837 & 0.860 & 0.809 & 0.797\\
			Top Half Sampling & 0.800 & 0.799 & 0.817 & 0.838 & 0.810 & 0.811\\
			Correlated Sampling & 0.786 & 0.785 & 0.809 & 0.836 & 0.780 & 0.771\\
			Manshadi et al & 0.786 & 0.785 & 0.809 & 0.836 & 0.780 & 0.771\\
			Brubach et al & 0.795 & 0.795 & 0.797 & 0.822 & 0.775 & 0.764\\
			Jaillet Lu (Integral) & 0.795 & 0.796 & 0.798 & 0.822 & 0.768 & 0.758\\
			Jaillet Lu & 0.773 & 0.773 & 0.789 & 0.816 & 0.763 & 0.754\\
			Bahmani Kapralov & 0.772 & 0.771 & 0.773 & 0.803 & 0.746 & 0.737\\
			Feldman et al & 0.766 & 0.766 & 0.771 & 0.801 & 0.742 & 0.737 \\
			\bottomrule
		\end{tabular}
\end{table}

\reggreedy and \swor are the best performing algorithms, and \poissonocs is a close third place.
Their results are noticeably better than the rest. 
Among algorithms that do not require stochastic information, \mindegree and \balanceswor are the top two algorithms whose results are slightly better than the results of \balanceocs and \ranking.
Our theoretical result (Section~\ref{sec:balanceSWOR}) shows that \balanceswor has nontrivial theoretical guarantees.
By contrast, \mindegree's worst-case competitive ratio is the trivial $0.5$.
It is an interesting direction to prove non-trivial competitive ratio for the \mindegree algorithm on some notion of ``natural instances''.

\paragraph{Experiment 2.}

This experiment consider several small real-world graphs for which we can solve the Natural LP using the ellipsoid method.
The raw graph data of these small graphs are also extracted from the Network Data Repository \cite{Ryan:AAAI:2015}, and their information is presented in Table \ref{tab:smallreallifegraphstatistics}.

\begin{table}[htbp]
	\centering
	\caption{Small Real-World Graph Statistics}
	\label{tab:smallreallifegraphstatistics}
	\begin{tabular}{l@{\hskip 24pt}l@{\hskip 6pt}l@{\hskip 6pt}l@{\hskip 6pt}l@{\hskip 6pt}l@{\hskip 6pt}l}
		\toprule
		Datasets & Application & Nodes & Edges & Max Degree & Avg. Degree   \\
		\midrule
		SOC-FIRM-HI-TECH & social network & 36 & 147 & 28 & 8  \\
		SOC-PHYSICIANS & social network & 241 & 1098 & 34 & 9 \\
		GENT-113 &  miscellaneous network  & 113  &  655  &  28  & 12 \\
		LP-BLEND & miscellaneous  network  & 114 & 522 & 31  & 9 \\
		\bottomrule
	\end{tabular}
\end{table}

We compare the performance of algorithms \ref{item:correlated-sampling}--\ref{item:regularized-greedy} when we use the Natural LP solution as the reference fractional matching $(x_{ij})_{(ij) \in E}$, with their performance when we use the Monte Carlo method instead.
This experiment generates $10,000$ random graphs for each type graph, and then evaluates the ratio of the average size of the algorithm's matching to the average size of the offline optimal matching.
The results are presented in Table \ref{tab:experiment-2-result}, and are accurate up to $\pm 0.001$ for a $95\%$ confidence level.

\begin{table}[htbp]
	\begin{center}
		\caption{Results of Experiment 2 -- Monte Carlo simulation (red color on the left) vs. Natural LP (blue color on the right).}
		\label{tab:experiment-2-result}
		\begin{tabular}{lcccc}
			\toprule
			Algorithm & hi-tech & physicians & gent113 & lp-blend\\
			\midrule
			\textbf{Regularized Greedy} & \textcolor{red}{0.955}/\textcolor{blue}{0.938} & \textcolor{red}{0.947}/\textcolor{blue}{0.937} & \textcolor{red}{0.957}/\textcolor{blue}{0.956} & \textcolor{red}{0.966}/\textcolor{blue}{0.976} \\
			\textbf{Stochastic SWOR} & \textcolor{red}{0.929}/\textcolor{blue}{0.930} & \textcolor{red}{0.927}/\textcolor{blue}{0.921} & \textcolor{red}{0.927}/\textcolor{blue}{0.920} & \textcolor{red}{0.948}/\textcolor{blue}{0.941} \\
			Poisson OCS & \textcolor{red}{0.928}/\textcolor{blue}{0.929} & \textcolor{red}{0.925}/\textcolor{blue}{0.920} & \textcolor{red}{0.926}/\textcolor{blue}{0.918} & \textcolor{red}{0.946}/\textcolor{blue}{0.937} \\
			Top Half Sampling & \textcolor{red}{0.836}/\textcolor{blue}{0.827} & \textcolor{red}{0.831}/\textcolor{blue}{0.804} & \textcolor{red}{0.836}/\textcolor{blue}{0.825} & \textcolor{red}{0.836}/\textcolor{blue}{0.867} \\
			Correlated Sampling & \textcolor{red}{0.831}/\textcolor{blue}{0.823} & \textcolor{red}{0.818}/\textcolor{blue}{0.798} & \textcolor{red}{0.833}/\textcolor{blue}{0.812} & \textcolor{red}{0.837}/\textcolor{blue}{0.845} \\
			\bottomrule
		\end{tabular}
	\end{center}
\end{table}

Generally speaking, these algorithms show similar performance with both methods.
We also observe that the Monte Carlo method yields slightly better overall results.
A possible explanation is that the exact probabilities of matching each offline vertex $j$ to each online type $i$ in the optimal matching satisfy constraints beyond the ones in the Natural LP, e.g., those from higher-level LPs in the Poisson LP Hierarchy~\cite{HuangSY:STOC:2022}.
The Monte Carlo method gives unbiased estimators of these exact probabilities and hence partially enjoys the power of these additional constraints.
It is another interesting research direction to improve our theoretical analysis by using these additional constraints.
Regardless of the method used, \reggreedy, \swor, and \poissonocs consistently outperform the other algorithms, and their relative order matches the results in Experiment 1 (Table \ref{tab:experiment-1-result}).

\paragraph{Experiment 3.}
Algorithms \ref{item:ranking}--\ref{item:balanceswor} are designed for Online Bipartite Matching, which considers an arbitrary arrival order.
Therefore, our last experiment considers the $6$ real-world graphs shown in Table \ref{tab:reallifegraphstatistics}, and evaluates the algorithms' worst performance w.r.t.\ each real-world graph out of $1000$ randomly generated arrival orders.
For each algorithm and each arrival order, we execute the algorithm $100$ times and consider its average performance.
The results are in Table \ref{tab:experiment-3-result}.

\begin{table}[htbp]
	\begin{center}
		\caption{Results of Experiment 3 -- Online Bipartite Matching.}
		\label{tab:experiment-3-result}
		\begin{tabular}{lcccccc}
			\toprule
			Algorithm &Caltech36 &Reed98 &CE-GN &CE-PG &beause &mbeaflw\\
			\midrule
			\textbf{Balance SWOR} & 0.840 & 0.830 & 0.924 & 0.924 & 0.919 & 0.956 \\
			Balance OCS & 0.835 & 0.828 & 0.922 & 0.923 & 0.918 & 0.956 \\
			Min Degree & 0.835 & 0.831 & 0.921 & 0.923 & 0.915 & 0.951 \\
			Ranking & 0.824 & 0.818 & 0.916 & 0.920 & 0.914 & 0.953 \\
			\bottomrule
		\end{tabular}
	\end{center}
\end{table}

\balanceswor performs the best in all $6$ graphs; \balanceocs is a close second place.
Recall that the only difference between \balanceswor and \balanceocs is that the latter complicates the matching probabilities of different unmatched offline vertices, in order to fit the existing analysis framework~\cite{Gao:FOCS:2021}.
The results of Experiment 3 suggest that in the context of online matching, such complications are likely a byproduct of the current analytical tools rather than a useful algorithmic technique.
The theoretical analysis in Section~\ref{sec:balanceSWOR} takes the first step toward understanding the simpler and empirically better \balanceswor algorithm.

	\section{Missing Proofs from Section~\ref{sec:stochasticSWOR}}
\label{app:stochasticSWOR}

\subsection{Proof of Lemma \ref{lem:SWOR-reduc-feasibility-mass-distri}}
\label{subsec:app-swor-feasibility-proof}

\begin{proof}
	It suffices to prove the feasibility of the following LP problem:
	
	\begin{lemma}
		\label{lem:SWOR-reduc-feasibility-mass-distri-with-fewer-new-types}
		There exist nonnegative $\big\{\lambda_{i_j}, \rho_{{i_j}k}\big\}_{j, k\in S}$
		such that
		$\sum_{j\in S}\lambda_{i_j}=\lambda_i$ and
		for each $j\in S$:
		\begin{equation*}
			\rho_{i_j} = \rho_i~, \quad
			\rho_{{i_j}j}=1-\theta~, \quad
			\sum_{k\in S} \lambda_{i_k}\rho_{{i_k}j} = \lambda_i\rho_{ij}~,
		\end{equation*}
		and:
		\begin{equation}
			\label{eqn:SWOR-app-lower-bound-lambda_i_j}
			\lambda_{i_j}\ge
			\lambda_i\max\left\{\frac{p(\rho_i)-p(\rho_i-\rho_{ij})}{p(\rho_i)-p(\rho_i-1+\theta)},
			\frac{\rho_{ij}-\rho_i+1-\theta}{2-2\theta-\rho_i}\right\}~.
		\end{equation}
	\end{lemma}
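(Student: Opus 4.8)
The plan is to recast the existence claim as a transportation (network‑flow) feasibility problem and then check that the resulting system is always feasible under the standing assumptions. I would normalize by $\mu_j := \lambda_{i_j}/\lambda_i$, so the requirements become $\mu_j\ge 0$, $\sum_{j\in S}\mu_j=1$, $\rho_{{i_j}j}=1-\theta$, $\sum_{k\in S}\rho_{{i_j}k}=\rho_i$ and $\sum_{k\in S}\mu_k\rho_{{i_k}j}=\rho_{ij}$ for every $j\in S$, together with the lower bound $\mu_j\ge\max\{A_j,B_j\}$, where $A_j:=\frac{p(\rho_i)-p(\rho_i-\rho_{ij})}{p(\rho_i)-p(\rho_i-1+\theta)}$ is the first term of~\eqref{eqn:SWOR-app-lower-bound-lambda_i_j} (precisely what~\eqref{lem:mass-distri-2} asks of $\sum_k\lambda_{(j,k)}=\lambda_{i_j}$, since $p(\rho_i)=1$) and $B_j:=\frac{\rho_{ij}-\rho_i+1-\theta}{2-2\theta-\rho_i}$ is the second. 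Introducing off‑diagonal ``flow'' variables $f_{jk}:=\mu_j\rho_{{i_j}k}$ for $j\ne k$ linearizes the system: with $\rho_{{i_j}j}$ pinned to $1-\theta$, the row constraint reads $\sum_{k\ne j}f_{jk}=\mu_j(\rho_i-1+\theta)=:r_j$ and the column constraint reads $\sum_{k\ne j}f_{kj}=\rho_{ij}-\mu_j(1-\theta)=:c_j$. So it suffices to choose $\mu=(\mu_j)_{j\in S}$ with $\sum_j\mu_j=1$ and $\mu_j\ge\max\{A_j,B_j\}$ for which a nonnegative matrix $(f_{jk})_{j\ne k}$ with row sums $r_j$ and column sums $c_j$ exists. (Here $|S|\ge 2$, because Assumption~\ref{asm:rho>1-theta}'s sub‑case applies only when every neighbor has $\rho_{ij}<1-\theta$, which precludes $\rho_{ij}=\rho_i>1-\theta$.)

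For the flow I would invoke the standard feasibility criterion for a transportation problem on the complete bipartite graph on $S\sqcup S$ with the diagonal perfect matching removed (equivalently, a short max‑flow/min‑cut computation): given $\sum_j r_j=\sum_j c_j$, a nonnegative $(f_{jk})_{j\ne k}$ with these row and column sums exists iff $r_j\ge 0$ and $c_j\ge 0$ for all $j$ and $r_j+c_j\le\sum_k r_k$ for all $j$. In our instance $\sum_j r_j=\rho_i-1+\theta=\sum_j c_j$ (from $\sum_j\mu_j=1$, $\sum_j\rho_{ij}=\rho_i$); $r_j\ge 0$ since $\rho_i>1-\theta$; $c_j\ge 0$ is equivalent to $\mu_j\le\rho_{ij}/(1-\theta)$; and an elementary rearrangement turns $r_j+c_j\le\sum_k r_k$ into $\mu_j(2-2\theta-\rho_i)\ge\rho_{ij}-\rho_i+1-\theta$, i.e.\ $\mu_j\ge B_j$ when $2-2\theta-\rho_i>0$, while in the degenerate case $2-2\theta-\rho_i=0$ (forcing $\theta=\frac12$, $\rho_i=1$, and $B_j=-\infty$) it is the vacuous $\rho_{ij}\le 1-\theta$. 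Hence the claim reduces to finding $\mu$ with $\sum_j\mu_j=1$ and $\max\{A_j,B_j\}\le\mu_j\le\rho_{ij}/(1-\theta)$ for every $j$.

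Such a $\mu$ exists as long as the box $\prod_j\big[\max\{A_j,B_j\},\,\rho_{ij}/(1-\theta)\big]$ is nonempty coordinatewise and its lower bounds sum to $\le 1$ while its upper bounds sum to $\ge 1$. The upper‑bound sum is $\rho_i/(1-\theta)\ge 1$ by $\rho_i\ge 1-\theta$; coordinatewise nonemptiness, $\max\{A_j,B_j\}\le\rho_{ij}/(1-\theta)$, follows by clearing positive denominators, each inequality reducing to $\rho_{ij}\le 1-\theta$ (for $A_j$ after a one‑line split on whether $\rho_i-\rho_{ij}$ exceeds $\theta$, using $\rho_i-\theta\ge 1-2\theta\ge 0$). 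The crux is $\sum_j\max\{A_j,B_j\}\le 1$. Viewed as affine functions of $\rho_{ij}$ on $[0,1-\theta]$, $A_j$ and $B_j$ agree at $\rho_{ij}=1-\theta$ (value $1$), $A_j$ vanishes on $[0,\rho_i-\theta]$, and on $[\rho_i-\theta,1-\theta]$ it has the larger slope (as $1-\rho_i\le 2-2\theta-\rho_i$) with $A_j(\rho_i-\theta)=0\le B_j(\rho_i-\theta)$; hence $A_j\le(B_j)^+$ throughout, so $\max\{A_j,B_j\}=(B_j)^+$. With $c:=\rho_i-1+\theta$ (so $0<c\le\theta$ and $(1-\theta)-c=2-2\theta-\rho_i\ge 0$) one has $\sum_j\max\{A_j,B_j\}=\frac{1}{2-2\theta-\rho_i}\sum_j(\rho_{ij}-c)^+$; and, writing $N=|\{j:\rho_{ij}>c\}|$, the sum $\sum_j(\rho_{ij}-c)^+=\sum_{j:\rho_{ij}>c}(\rho_{ij}-c)$ is at most $\rho_i-Nc$ (the $\rho_{ij}$'s sum to $\rho_i$) and at most $N\big((1-\theta)-c\big)$ (each is $\le 1-\theta$); the first bound covers $N\ge 2$ via $\rho_i-Nc\le\rho_i-2c=2-2\theta-\rho_i$, the second covers $N\le 1$, so $\sum_j(\rho_{ij}-c)^+\le 2-2\theta-\rho_i$ and $\sum_j\max\{A_j,B_j\}\le 1$. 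An intermediate‑value argument along the hyperplane $\sum_j\mu_j=1$ then produces the desired $\mu$. (In the degenerate cases $\rho_i=1$ or $\theta=\frac12$ some displayed ratios are $\frac00$ or $-\infty$; there the relevant constraint is vacuous --- e.g.\ $A_j$ multiplies out of~\eqref{lem:mass-distri-2} --- and is read as $0$, and $\sum_j\max\{A_j,B_j\}\le 1$ is immediate.)

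The step I expect to be the main obstacle is the inequality $\sum_j\max\{A_j,B_j\}\le 1$: recognizing that $\max\{A_j,B_j\}$ collapses to $(B_j)^+$ through the affine comparison, and then bounding $\sum_j(\rho_{ij}-c)^+$ using only $\rho_{ij}\le 1-\theta$ and $\sum_j\rho_{ij}=\rho_i>1-\theta$, is where the real content lies; everything else is routine bookkeeping. Secondary nuisances are a clean treatment of the boundary cases $\rho_i=1$ and $\theta=\frac12$, and --- for a self‑contained writeup --- spelling out the short max‑flow/min‑cut justification of the transportation feasibility criterion rather than quoting it.
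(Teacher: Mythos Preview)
Your proposal is correct and takes essentially the same approach as the paper—first bound $\sum_j \max\{A_j,B_j\}\le 1$ to choose the $\mu_j$'s, then cast the determination of the $\rho_{i_j k}$'s as a max-flow/min-cut problem and verify feasibility—with only cosmetic differences (you invoke the transportation feasibility criterion abstractly and show $\max\{A_j,B_j\}=(B_j)^+$ directly, whereas the paper enumerates the four cut types and appeals to convexity). Your treatment is in fact slightly more careful in one respect: you explicitly impose and verify the upper bound $\mu_j\le\rho_{ij}/(1-\theta)$ (equivalently $c_j\ge 0$), which the paper needs for its edge capacities $\lambda_i\rho_{ij}-\lambda_{i_j}(1-\theta)$ to be nonnegative but does not check.
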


	Then, by fixing a feasible solution $\big\{\lambda_{i_j}, \rho_{{i_j}k}\big\}_{j, k\in S}$ to the above LP and defining for any $j,k\in S, j\ne k$:
	\[\lambda_{(j,k)}=\frac{\lambda_{i_j} \rho_{{i_j}k}}{\rho_i-1+\theta}~,
	\]
	the following constraints on 
	$\big\{\lambda_{(j,k)}\big\}_{j,k\in S\colon j\ne k}$ 
	hold for any $j\in S$:
	\[
	\sum_{k\in S\colon k\ne j} \lambda_{(j,k)} (1-\theta)+
	\sum_{k\in S\colon k\ne j} \lambda_{(k,j)} (\rho_i-1+\theta)
	= \sum_{k\in S} \lambda_{i_k}\rho_{{i_k}j}
	= \lambda_i \rho_{ij}~,
	\]
	and
	\[
	\sum_{k \in S \colon k \ne j} \lambda_{(j,k)} \big( 1 - p(\rho_i - 1 + \theta) \big) 
	= \lambda_{i_j}\big(1-p(\rho_i-1+\theta)\big)
	\ge \lambda_i \big( p(\rho_i) - p(\rho_i - \rho_{ij}) \big)~.
	\]
\end{proof}

\subsection{Proof of Lemma \ref{lem:SWOR-reduc-feasibility-mass-distri-with-fewer-new-types}}
\begin{proof}
	We will show that the sum of the right-hand-side of Eqn.~\eqref{eqn:SWOR-app-lower-bound-lambda_i_j} over all $j\in S$ is at most $\lambda_i$. Then, there exists nonnegative $\big\{\lambda_{i_j}\}_{j\in S}$ that satisfies $\sum_{j\in S}\lambda_{i_j}=\lambda_i$ and Eqn.~\eqref{eqn:SWOR-app-lower-bound-lambda_i_j}.
	
	Observe that the right-hand side of Eqn.~\eqref{eqn:SWOR-app-lower-bound-lambda_i_j} equals: 
	\[\lambda_i\max\left\{
	\max\Big\{
	\frac{\rho_{ij}-\rho_i+\theta}{1-\rho_i},0
	\Big\},
	\max\Big\{
	\frac{\rho_{ij}-\rho_i+1-\theta}{2-2\theta-\rho_i},0
	\Big\}\right\}~.\]
	For any fixed value of $\rho_{i}$, the two functions within the outer max function are both of the form $\max\{m\rho_{ij}+c,0\}$, where $m>0$ and $c<0$. Moreover, they agree at $\rho_{ij}=1-\theta$. Hence, one of the two functions is always larger than or equal to the other one for any $\rho_{ij}\in [0,1-\theta]$.
	Further, since both functions are convex in $\rho_{ij}$, the sums $\sum_{j\in S}
	\lambda_i\max\big\{
	\frac{\rho_{ij}-\rho_i+\theta}
	{1-\rho_i},0
	\big\}$ 
	and
	$\lambda_i\sum_{j\in S}
	\max\big\{
	\frac{\rho_{ij}-\rho_i+1-\theta}{2-2\theta-\rho_i},0
	\big\}$
	are maximized when there are $j_1,j_2$ such that $\rho_{ij_1}=1-\theta$ and $\rho_{ij_2}=\rho_{i}-1+\theta$, both with maximum value $\lambda_i$.
	Therefore, the result follows.
	
	Fix a $\big\{\lambda_{i_j}\}_{j\in S}$ that satisfies $\sum_{j\in S}\lambda_{i_j}=\lambda_i$. 
	Recall that for any $j\in S$,
	$\rho_{i_j}=\rho_i$ and $\rho_{{i_j}j}=1-\theta$ are fixed.
	We next determine the values of $\rho_{{i_j}k}$ for any $j,k\in S, j\ne k$ by solving the following max flow problem:
	
	Construct a network $G=(V,E)$, with $s,t\in V$ being the source and the sink respectively, according to the following steps:
	\begin{itemize}
		\item For every $j\in S$, add node $j$ to $V$ and then edge $(s,j)$ to $E$ with capacity $\lambda_i\rho_{ij}-\lambda_{i_j}(1-\theta)$.
		\item For any $j\in S$, add node $i_j$ to $V$ and then edge $(i_j,t)$ to $E$ with capacity $\lambda_{i_j}(\rho_i-1+\theta)$.
		\item For any $j,k\in S$ with $j\ne k$, add edge $(j,i_k)$ to $E$ with infinite capacity.
	\end{itemize}
	
	We shall prove that the max flow has value $\lambda_i (\rho_i-1+\theta)$.
	For simplicity, we define $T=\{i_j\colon j\in S\}$.
	Further, let $(U,\bar{U})$ be a min cut of this network. Note that it must satisfy exactly one of the following conditions, otherwise the cut would have infinite capacity:
	\begin{itemize}
		\item $U=\{s\}$
		\item $\bar{U}=\{t\}$
		\item $U=\{s,j\}\cup T\setminus \{i_j\}$ for some $j\in S$
		\item $\bar{U}=\{t,i_j\}\cup S\setminus \{j\}$ for some $j\in S$
	\end{itemize}
	
	For $U=\{s\}$, the capacity of the cut is $\sum_{j\in S} \lambda_i\rho_{ij}-\lambda_{i_j}(1-\theta)
	=\lambda_i(\rho_i-1+\theta)$.
	
	For $\bar{U}=\{t\}$, the capacity of the cut is
	$\sum_{j\in S}\lambda_{i_j}(\rho_i-1+\theta)
	=\lambda_i(\rho_i-1+\theta)$.
	
	Now, fix an offline vertex $j$.
	Notice that the cut set for the case with $U=\{s,j\}\cup T\setminus \{i_j\}$ and the case with $\bar{U}=\{t,i_j\}\cup S\setminus \{j\}$ are both $\{(s,k), (i_k,t)\colon k\in S\setminus\{j\}\}$.
	Therefore, the capacities of both cuts are
	\begin{equation*}
		\begin{aligned}
			&\hphantom{=}\sum_{k\in S\colon k\ne j} \lambda_i\rho_{ik}-\lambda_{i_k}(1-\theta)
			+\sum_{k\in S\colon k\ne j} \lambda_{i_k}(\rho_i-1+\theta) \\
			&=	\lambda_i(\rho_i-\rho_{ij})-(\lambda_i-\lambda_{i_j})(1-\theta)+(\lambda_i-\lambda_{i_j})(\rho_i-1+\theta)\\
			&=\lambda_i\big(2(\rho_i-1+\theta)-\rho_{ij}\big)+\lambda_{i_j}\big(2-2\theta-\rho_i\big)~,
		\end{aligned}
	\end{equation*}
	which is at least $\lambda_i(\rho_i-1+\theta)$ by Eqn.~\eqref{eqn:SWOR-app-lower-bound-lambda_i_j}.
	As a result, the min cut capacity of network $G$, and thus the max flow value, is exactly $\lambda_i(\rho_i-1+\theta)$.
	
	Let $f\colon E\rightarrow \mathbb{R}$ be a max flow of the network $G$. The above argument implies that for any $j\in S$, the flow $f$ from $s$ to $j$ and the flow from $i_j$ to $t$ are both at full capacity. By defining $\rho_{{i_j}k}:=\frac{f(k,i_j)}{\lambda_{i_j}}$ for any $j,k\in S$ with $j\ne k$,
	it is easy to check that
	$\rho_{i_j}:=\sum_{k\in S} \rho_{{i_j}k} = \rho_i$ and 
	$\sum_{k\in S} \lambda_{i_k}\rho_{{i_k}j} = \lambda_i\rho_{ij}$ for any $j\in S$.	
\end{proof}

	\section{Missing Proofs from Section~\ref{sec:balanceSWOR}}
\label{app:balanceSWOR}

 \subsection{Proof of Lemma \ref{lem:ocs-to-ratio}}
\label{sec:proof-ocs-to-ratio}

\begin{proof}
	The expected size of matching is at least:
	\[
	\sum_{j \in J} (1 - q(y_j)).
	\]
	
	Similar to \citet{Gao:FOCS:2021}, we distribute the contribution of each online vertex to $i$ and all its offline neighbors, denoted by $\beta_i$ and $\alpha_j$ for each $i \in I, j \in J$.
	According to the algorithm, when $i \in I$ arrives, let threshold $\bar{y} > 0$ such that $\sum_{(i, j) \in E} (\bar{y} - y_j^{i-1})^{+}=1$. The algorithm allocates $x_{ij} = (\bar{y} - y_j^{i-1})^{+}$ for every neighbor $j$. Further define $\Gamma$ and function $a(y), b(y)$ as follows:
	\begin{align*}
		\Gamma = & \int_0^{\infty} e^{-z} (1 - q(y)) \mathrm{d}z ; & \qquad &\\
		b(y) = & -e^y \int_y^{\infty} q'(z) e^{-z} \mathrm{d}z & \forall y \geq 0;\\
		a(y) = & -q'(y)-b(y) & \forall y \geq 0.
	\end{align*}
	
	For each offline neighbor $j$, increase $\alpha_j$ by:
	\[
	\int_{y_j^{i-1}}^{y_j^i} a(z) \mathrm{d} z,
	\]
	and set $\beta_i$ to:
	\[
	\sum_{(i, j) \in E}\int_{y_j^{i-1}}^{y_j^i} b(z) \mathrm{d} z.
	\]
	
	$b(y)$ is non-increasing because:
	\begin{align*}
		b'(y) = & ~ - e^y \int_y^{\infty} q'(z)e^{-z} \mathrm{d}z + q'(y) \\
		= & ~ e^y \int_y^{\infty} (q'(y) - q'(z)) e^{-z} \mathrm{d}z\\
		\leq & ~ 0, \tag{Convexity of $q$}
	\end{align*}
	hence $\beta_i \geq b(\bar{y}) \cdot \sum_{(i, j) \in E} (\bar{y} - y_j^{i-1})^{+} = b(\bar{y})$, at least  $b(y_j)$ for any $(i, j) \in E$.
	
	Note that, the gain upon arrival of $i \in I$ is:
	\[
	\sum_{(i, j) \in E} \int_{y_j^{i-1}}^{y_j^i} \big( a(z) + b(z) \big) \mathrm{d} z = \sum_{(i, j) \in E} \int_{y_j^{i-1}}^{y_j^i} -q'(z) \mathrm{d} z = \sum_{(i, j) \in E} \big(  (1 - q(y_j^i)) - (1 - q(y_j^{i-1})) \big).
	\]
	
	Summing up the total distribution of all vertices, it is at most:
	\[
	\sum_{j \in J} (1 - q(y_j)),
	\]
	no more than the expected size of matching.
 
	On the other hand, the distribution satisfies an approximate equilibrium condition in the sense that for any edge $(i, j)$, the total distribution of $i$ and $j$ is at least:
	\[
	\int_{0}^{y_j} a(z) \mathrm{d} z + b(y_j) = ~ b(0) + \int_{0}^{y_j} (a(z) + b'(z)) \mathrm{d} z \geq \Gamma, 
	\]
	since 
	\begin{align*}
		a(y) + b'(y) & = ~ a(y) - e^y \int_y^{\infty} q'(z) e^{-z} \mathrm{d}z + q'(y) \\
		& = ~ a(y) + b(y) + q'(y) \\
		& = ~ 0
	\end{align*}
	and
	\[
	b(0) = \int_0^{\infty} q'(z) e^{-z} \mathrm{d}z = \int_0^{\infty} e^{-z} (1 - q(z)) \mathrm{d}z = \Gamma
	\]
	hold from integration by parts.
	
	Therefore, for any optimal matching $M$, the competitive ratio of $\balanceswor$ is at least $\Gamma$ because:
	\begin{align*}
		\textsc{ALG} & \geq \sum_{j \in J} \alpha_j + \sum_{i \in I} \beta_i\\
		& \geq \sum_{(i, j) \in M} (\alpha_j + \beta_i)\\
		& \geq \Gamma \cdot |M|.
	\end{align*}
\end{proof}

\subsection{Proof of Lemma~\ref{lem:balanceswor-condition}}

\begin{proof}   
	If $y \le 1$, we have $q(y) = e^{-y}$ and $q'(y) = - e^{-y}$.
	The right-hand-side is therefore $e^{-(1-z)}$.
	Further, the left-hand-side is increasing in $q\big((y-z)^+\big)$ and hence is maximized when $y \le z$ and $q\big((y-z)^+\big) = 1$.
	It reduces to a basic inequality:
	\[
	z \le e^{-(1-z)}
	~,
	\]
	which holds for any real number $z$, and takes equality when $z = 1$.
	
	Next consider $y > 1$.
	We have $q(y) = e^{-e^{y-1}}$ and $q'(y) = - q(y) e^{y-1}$.
	Hence, the right-hand-side equals $\exp\big(-(1-z) e^{y-1} \big)$.
	Further note that $y > 1$ implies $y - z > 0$.
	The inequality reduces to:
	\[
	\frac{z q(y-z)}{1-z+zq(y-z)} \le \exp \big( -(1-z) e^{y-1} \big)
	\quad
	\Leftrightarrow
	\quad
	\exp \big( (1-z) e^{y-1} \big) \le 1 + \frac{1-z}{z q(y-z)}
	~.
	\]
	
	We next change variables by letting $y' = y-1$ and $z' = 1-z$. 
	We have $y' > 0$, $0 \le z' \le 1$, and $y-z = y'+z'$.
	The inequality becomes:
	\[
	e^{z'e^{y'}} \le 1 + \frac{z'}{1-z'} \frac{1}{q(y'+z')}
	~.
	\]
	
	There are two subcases depending on the value $y'+z'$.
	\paragraph{Subcase 1.}
	If $y' + z' \le 1$, we have $q(y'+z') = e^{-(y'+z')}$.
	The inequality reduces to:
	\[
	e^{z'e^{y'}} \le 1 + \frac{z'}{1-z'} e^{y'+z'}
	~.
	\]
	
	Since the left-hand side is convex in $e^{y'}$ and the right-hand side is linear, it suffices to consider the two extremes when $y' = 0$ and $y' = 1-z'$.
	When $y' = 0$, it becomes:
	\[
	e^{z'} \le 1 + \frac{z'}{1-z'} e^{z'}
	~,
	\]
	which holds for any $0 \le z' \le 1$, and takes equality at $z' = 0$.
	To prove it, we rearrange terms to convert it into:
	\[
	\frac{1-2z'}{1-z'} \le e^{-z'}
	~,
	\]
	which follows by $e^{-z'} \ge 1-z'$.
	
	When $y' = 1-z'$, it becomes:
	\[
	e^{z'e^{1-z'}} \le 1 + \frac{ez'}{1-z'}
	~,
	\]
	which holds for any $0 \le z' \le 1$, and takes equality at $z' = 0$.
	To prove this, we first take the logarithm of both sides to convert it into:
	\[
	z' e^{1-z'} \le \log \frac{1+(e-1)z'}{1-z'}
	~.
	\]
	The derivative of the left-hand-side is:
	\[
	e^{1-z'}(1-z') \le e(1-z')
	~.
	\]
	The derivative of the right-hand-side is:
	\[
	\frac{e-1}{1+(e-1)z'} + \frac{1}{1-z'} = \frac{e}{(1+(e-1)z')(1-z')} \ge e(1-z')
	~,
	\]
	where the inequality follows by $\frac{1}{(1-z')^2} \ge 1+2z' \ge 1+(e-1)z'$.
	Hence, it suffices to verify the inequality at $z' = 0$, where it holds with equality.

	\paragraph{Subcase 2.}
	Finally, consider the case when $y' + z' > 1$, where we have $q(y'+z') = e^{-e^{y'+z'-1}}$.
	The inequality reduces to:
	\[
	e^{z'e^{y'}} \le 1 + \frac{z'}{1-z'} e^{e^{y'+z'-1}}
	~.
	\]
	
	We now change variables again by letting $u = e^{z' e^{y'}}-1$.
	The inequality becomes:
	\[
	\frac{1-z'}{z'} u \le (1+u)^{\frac{e^{z'-1}}{z'}}
	~.
	\]
	
	Let $\alpha = z' e^{1-z'}$.
	Note that $0 \le \alpha \le 1$.
	By Young's inequality:
	\[
	1+u = (1-\alpha) \cdot \frac{1}{1-\alpha} + \alpha \cdot \frac{u}{\alpha} \ge \frac{u^\alpha}{(1-\alpha)^{1-\alpha} \alpha^\alpha}
	~.
	\]
	
	We therefore only need to prove that:
	\[
	\frac{1-z'}{z'} \le \frac{1}{(1-\alpha)^{\frac{1-\alpha}{\alpha}} \alpha}
	~.
	\]
	
	We first claim that:
	\[
	\frac{1-z'}{z'} \alpha \le e - 2\alpha
	~,
	\]
	that is:
	\[
	(1-z')e^{1-z'}\le e - 2z' e^{1-z'}
	~,
	\]
	which holds by:
	\[
	1+z'\le e^{z'}
	~.
	\]
	
	Therefore it suffices to prove for any $\alpha \in [0, 1]$:
	\[
	(e - 2 \alpha) (1-\alpha)^{\frac{1-\alpha}{\alpha}} \le 1
	~.
	\]
	or equivalently:
	\[
	(e - 2 \alpha)^\alpha (1-\alpha)^{1-\alpha} \le 1
	~.
	\]
	
	We know that:
	\begin{align*}
		(e - 2 \alpha)^\alpha (1-\alpha)^{1-\alpha} & = e^{\alpha} (1 - \frac{2}{e} \alpha)^\alpha (1-\alpha)^{1-\alpha} \\
		& \le e^{\alpha} \cdot \bigg(\alpha(1 - \frac{2}{e}\alpha) + (1-\alpha)(1-\alpha)\bigg)\\
		& = e^{\alpha} \cdot (1 - \alpha + \frac{e-2}{e} \alpha^2),
	\end{align*}
	where the second inequality is due to Young's inequality, therefore we only need to prove:
	\[
	1 - \alpha + \frac{e-2}{e} \alpha^2 \le e^{-\alpha}
	~,
	\]
	which holds for all $\alpha \in [0, 1]$.
\end{proof}

	\section{Potential Function Analysis for Edge-Weighted Matching}
\label{sec:extension}

This section shows the adaptability of potential function analysis in online edge-weighted stochastic matching with free disposal. We present how to apply this analysis framework to derandomize the 0.706-competitive \tophalf originally developed by \citet{HuangSY:STOC:2022} while preserving its competitive ratio.

The edge-weighted problem with free disposal generalizes the unweighted problem in the following sense. For each edge $e=(i,j)\in E$, let $w_{ij}>0$ be the weight of matching an online vertex of type $i$ to the offline vertex $j$. The goal is then to maximize the expected total weight of the matched edges. Under the assumption of free disposal, each offline vertex can be matched more than once, but only the weight of its heaviest edge contributes to the objective. Thus, if an offline vertex $j$ has been matched to an online vertex of type $i$, a rematch between vertex $j$ and an online vertex of type $i'$ with edge weight $w_{i'j} > w_{ij}$ leads to an objective gain of $w_{i'j}-w_{ij}$.
Effectively, the algorithm disposes edge $(i,j)$ for free.
The problem is vertex-weighted if $w_{ij}=w_j$ for any edge $(i,j)$, and is unweighted if $w_j=1$ for any offline vertex $j$. 
In vertex-weighted and unweighted matching, free disposal is irrelevant because all incident edges of an offline vertex share the same weight so the algorithm has no reason to rematch any offline vertex.

We first review the definition of \tophalf. Suppose that an online vertex of type $i$ arrives at time $0\leq t \leq 1$. For each offline neighbor $j\in J_i$, let $w_j(t)$ be the maximum edge weight matched to $j$ right before time $t$. Then the marginal weight of edge $(i,j)$ is given by:
\[w_{ij}(t)\overset{\text{def}}{=}(w_{ij}-w_j(t))^+~.\]

Next, we specify a total order $\succ_{i,t}$ over offline neighbors $J_i$ of $i$ in descending order of the marginal weights, breaking ties arbitrarily. In other words, if $j\succ_{i,t} j'$, then $w_{ij}(t)\geq w_{ij'}(t)$. We write $j\succeq_{i,t} j'$ if either $j\succ_{i,t} j'$ or $j=j'$.

Towards the end, we define the mapping $\sigma_{i,t}\colon [0,\lambda_i)\to J_i\cup\{\perp\}$, where $\perp$ is a dummy offline vertex with zero marginal weight and $x_{i\perp}=\lambda_i - \sum_{j\in J_i} x_{ij}\geq 0$. For any $\tau \in [0,\lambda_i)$, we define $\sigma_{i,t}(\tau)$ such that:
\[
\sum_{j\in J_i\colon j\succ_{i,t} \sigma_{i,t}(\tau)} x_{ij} \leq \tau
< \sum_{j\in J_i\colon j\succeq_{i,t} \sigma_{i,t}(\tau)} x_{ij}~.
\]

\tophalf samples $\tau$ uniformly from $\big[0,\frac{\lambda_i}{2}\big)$ and subsequently matches the online vertex to $j=\sigma_{i,t}(\tau)$. 
\footnote{
	Following the definition of $p$ and $\rho_i(t)$ in Section \ref{subsec:analysis-framework}, if we let $\theta=\frac12$, then $\lambda_i p(\rho_i(t))$ equals \tophalf's match rate of online vertex type $i$ at time $t$. Note that the choice of sampling $\tau$ from the top half of $[0,\lambda_i)$ (and thus the choice of $\theta=\frac12$) is not optimal. However, since optimizing the parameter offers only a marginal improvement and requires a more intricate proof, we maintain the choice of $\theta = \frac12$, as previously considered in \citet{HuangSY:STOC:2022}.
}
Since the offline neighbors are sorted in descending order of the marginal weights, this sampling doubles the probability of matching $(i,j)$ suggested by the Natural LP solution for the more valuable neighbor(s) $j$ of $i$, while the matching probability of $(i,j')$ for the less valuable neighbor(s) $j'$ may get truncated if the total doubled probability exceeds $1$.

\begin{tcolorbox}[beforeafter skip=10pt]
	\textbf{Top Half Sampling} \cite{HuangSY:STOC:2022}\\[1ex]
	\emph{Input at the beginning:}
	\begin{itemize}[itemsep=0pt, topsep=4pt]
		\item Type graph $G = (I, J, E)$;
		\item Arrival rates $(\lambda_i)_{i \in I}$;
		\item Solution of the Natural LP $(x_{ij})_{(i,j) \in E}$.
	\end{itemize}
	\smallskip
	\emph{When an online vertex of type $i \in I$ arrives at time $0 \le t \le 1$:}
	\begin{itemize}[itemsep=0pt, topsep=4pt]
		\item Sample $\tau$ uniformly from $\big[0,\frac{\lambda_i}{2}\big)$;
		\item Match the online vertex to $j=\sigma_{i,t}(\tau)$.
	\end{itemize}
\end{tcolorbox}

For any time $0\leq t\leq 1$, let $A(t)$ denote the total weight of of edges matched by the online algorithm right before time $t$.
Following the potential function analysis framework and the design of \reggreedy in Section \ref{sec:stochasticSWOR}, we shall design the deterministic algorithm along with a potential function $\pot(t)$ that satisfy Conditions \ref{con:start}-\ref{con:monotone} under the extended definition of $A(t)$.

First, we keep track of the unmatched portion of fractional matching $(x_{ij})_{(i,j)\in E}$ using the following notations:
\begin{equation*}
	x_{ij}(t) =
	\begin{cases}
		x_{ij} & \mbox{if $j$ is unmatched} \\
		0 & \mbox{otherwise}
	\end{cases}
	~,\quad
	x_j(t) = \sum_{i \in I_j} x_{ij}(t)
	~,\quad
	x(t) = \big\{ x_{ij}(t) \big\}_{(i,j) \in E}
	~.
\end{equation*}

We also decompose the marginal contribution of each edge towards the rate of change of the objective based on continuous weight levels.
For this purpose, we keep track of the unmatched portion of fractional matching for any weight level $w\geq 0$ using the following notations:
\begin{equation*}
	x_{ij}(t,w) =
	\begin{cases}
		x_{ij} & \mbox{if $w\leq w_{ij}(t)$} \\
		0 & \mbox{otherwise}
	\end{cases}
	~,\quad
	x_i(t,w) = \sum_{j\in J_i} x_{ij}(t,w)
	~.
\end{equation*}

Conditioned on any marginal weight $w(t)=\{w_{ij}(t)\}_{(i,j)\in E}$ by time $t$, we express the rate $X(t)$ at which the objective would increase if we match online vertices according to the solution of the Natural LP:
\begin{equation*}
	X(t)
	=\sum_{(i,j)\in E}
	w_{ij}(t) x_{ij}
	=\int_0^\infty \sum_{(i,j)\in E} x_{ij}(t,w) dw~.
\end{equation*}

Second, for any edge $(i,j)$, we define $y_{ij}(t)$ to be the match rate of $(i,j)$ at time $t$ if we match online vertices according to \tophalf, and $y_j(t) = \sum_{i\in I_j} y_{ij}(t)$. We keep track of the rate of increase of the objective for any weight level $w>0$ if we match online vertices according to \tophalf using the following notations:
\begin{equation*}
	y_{ij}(t,w) =
	\begin{cases}
		y_{ij}(t) & \mbox{if $w\leq w_{ij}(t)$} \\
		0 & \mbox{otherwise}
	\end{cases}
	~,\quad
	y_j(t,w) = \sum_{i\in I_j} y_{ij}(t,w)
	~.
\end{equation*}

Conditioned on any marginal weight $w(t)$ by time $t$, we express the rate $Y(t)$ at which the objective would increase if we match online vertices according to \tophalf:
\begin{equation*}
	Y(t)
	=\sum_{(i,j)\in E} w_{ij}(t) y_{ij}(t)
	=\int_0^\infty \sum_{(i,j)\in E} y_{ij}(t,w) dw~.
\end{equation*}

Our potential function is:
\begin{equation}
	\label{eqn:potential-edge-weighted}
	\pot(t)
	=\alpha(t)X(t)+\beta(t)Y(t)~,
\end{equation}
where $\alpha,\beta\colon [0,1] \to \mathbb{R}_+$ are defined in the same way as in Section \ref{subsec:analysis-framework} with parameter $\theta=\frac12$, i.e.:
\begin{align}
	\alpha(t)&=1-\frac{1}{e(1-\ln 2)}(2e)^t+\frac{1+\ln 2}{e^2(1-\ln 2)}e^{2t}~, \label{eqn:alpha_top_half}
	\\
	\beta(t)&=\frac{1}{2e(1-\ln 2)}(2e)^t-\frac{1}{e^2(1-\ln 2)}e^{2t}~. \label{eqn:beta_top_half}
\end{align}

Lemma \ref{lem:alpha-beta_top_half} lists all the necessary properties of $\alpha(t)$ and $\beta(t)$ for the competitive analysis of our algorithm. The proofs are basic calculus and hence omitted. In fact, Properties \ref{prop: start_0.5}, \ref{prop: end_0.5}, \ref{prop: derivative of alpha_0.5} and \ref{prop: derivative of beta_0.5} are exactly the same as in Lemma \ref{lem:alpha-beta}.

\begin{lemma}
	\label{lem:alpha-beta_top_half}
	The functions $\alpha(t),\beta(t)$ defined in Eqn.~\eqref{eqn:alpha_top_half} and \eqref{eqn:beta_top_half} satisfy that:
	\begin{enumerate}
		\item $\alpha(0)+\beta(0)=1-\dfrac{1}{1-\ln 2} \Big(\dfrac{1}{2e}-\dfrac{\ln 2}{e^2}\Big) > 0.706 ~.$
		\label{prop: start_0.5}
		\item $\alpha(1)=\beta(1)=0~.$
		\label{prop: end_0.5}
		\item $\alpha(t), \beta(t)\geq 0~$ for any $0\leq t\leq 1~.$
		\label{prop: pos_0.5}
		\item $\frac{d}{dt}\alpha(t)=-2(1+\ln 2)\beta(t)$ for any $0\le t \le 1~.$
		\label{prop: derivative of alpha_0.5}
		\item $\frac{d}{dt}\beta(t)=\alpha(t)+(3+\ln 2)\beta(t)-1$ for any $0\le t \le 1~.$
		\label{prop: derivative of beta_0.5}
	\end{enumerate}
\end{lemma}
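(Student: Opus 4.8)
The plan is to verify the five properties by direct computation, using the observation that $\alpha(t)$ and $\beta(t)$ in Eqn.~\eqref{eqn:alpha_top_half} and~\eqref{eqn:beta_top_half} are precisely the functions $\alpha,\beta$ from \Sref{subsec:analysis-framework} specialized to $\theta=\tfrac12$. Concretely, with $\theta=\tfrac12$ one has $\tfrac{1}{\theta}=2$, $\ln(1-\theta)=-\ln 2$, $1-\ln(1-\theta)=1+\ln 2$, and $\tfrac{1}{\theta}-1+\ln(1-\theta)=1-\ln 2$; substituting the identities $e^{-(1+\ln 2)(1-t)}=\tfrac{(2e)^t}{2e}$ and $e^{-2(1-t)}=\tfrac{e^{2t}}{e^2}$ into the general formulas for $\alpha$ and $\beta$ reproduces Eqn.~\eqref{eqn:alpha_top_half} and~\eqref{eqn:beta_top_half}. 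Hence the closed form in Property~\ref{prop: start_0.5} and Properties~\ref{prop: end_0.5}, \ref{prop: derivative of alpha_0.5}, \ref{prop: derivative of beta_0.5} are the $\theta=\tfrac12$ instances of Properties~\ref{prop: start}, \ref{prop: end}, \ref{prop: derivative of alpha}, \ref{prop: derivative of beta} of Lemma~\ref{lem:alpha-beta}. Alternatively one may check them from scratch: plugging $t=1$ into Eqn.~\eqref{eqn:alpha_top_half} and~\eqref{eqn:beta_top_half} gives $\alpha(1)=1+\tfrac{\ln 2-1}{1-\ln 2}=0$ and $\beta(1)=\tfrac{1}{1-\ln 2}-\tfrac{1}{1-\ln 2}=0$; and differentiating, using $\ln(2e)=1+\ln 2$ and matching the coefficients of the two ``basis'' functions $(2e)^t$ and $e^{2t}$, yields Properties~\ref{prop: derivative of alpha_0.5} and~\ref{prop: derivative of beta_0.5}.

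The only genuinely new assertion is the nonnegativity Property~\ref{prop: pos_0.5}, which I would establish through the differential structure rather than brute force. First, $\beta(t)\ge 0$: writing $\beta(t)=\tfrac{1}{1-\ln 2}\big(\tfrac{(2e)^t}{2e}-\tfrac{e^{2t}}{e^2}\big)$ and using $1-\ln 2>0$, the claim is equivalent to $\tfrac{(2e)^t}{2e}\ge\tfrac{e^{2t}}{e^2}$, i.e.\ to $2^{t-1}e^{1-t}\ge 1$, i.e.\ to $(1-t)(1-\ln 2)\ge 0$, which holds for all $t\le 1$, with equality at $t=1$. Then $\alpha(t)\ge 0$ needs no further computation: Property~\ref{prop: derivative of alpha_0.5} gives $\alpha'(t)=-2(1+\ln 2)\beta(t)\le 0$ on $[0,1]$, so $\alpha$ is non-increasing there, and Property~\ref{prop: end_0.5} gives $\alpha(1)=0$; therefore $\alpha(t)\ge\alpha(1)=0$ for $0\le t\le 1$.

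Finally, for the numerical bound in Property~\ref{prop: start_0.5} I would evaluate at $t=0$, where $(2e)^0=1$ and $e^{0}=1$, and sum; after routine simplification this collapses to $\alpha(0)+\beta(0)=1-\tfrac{1}{1-\ln 2}\big(\tfrac{1}{2e}-\tfrac{\ln 2}{e^2}\big)$, which is the stated closed form. Substituting rigorous rational bounds such as $0.693<\ln 2<0.694$ and $2.718<e<2.719$ bounds $\tfrac{1}{2e}-\tfrac{\ln 2}{e^2}$ from above and $1-\ln 2$ from below, giving $\alpha(0)+\beta(0)>0.706$. I expect this last numerical step, together with keeping the $(2e)^t$-versus-$e^{2t}$ coefficient bookkeeping consistent in the differentiations, to be the only places demanding care; there is no conceptual obstacle here, and ``basic calculus'' is an accurate description.
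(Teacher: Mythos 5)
Your proposal is correct and matches the paper's intent exactly: the paper omits this proof as ``basic calculus,'' and your verification --- specializing Lemma~\ref{lem:alpha-beta} to $\theta=\tfrac12$ via $\ln(2e)=1+\ln 2$, reducing $\beta\ge 0$ to $(1-t)(1-\ln 2)\ge 0$, and deducing $\alpha\ge 0$ from $\alpha'\le 0$ and $\alpha(1)=0$ --- is precisely the routine check being waved away. The only nit is that your three-decimal bounds on $e$ and $\ln 2$ are slightly too loose to certify $\alpha(0)+\beta(0)>0.706$ (they only give about $0.705$); four-decimal bounds suffice, since the true value is $\approx 0.70627$.
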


For notational simplicity, we define auxiliary variables for match rates on each weight-level $w$, i.e., the values of $x_{ij}(t,w)$ divided by the arrival rate $\lambda_i$ of online vertex type $i$:
\begin{equation*}
	\rho_{ij}(t,w) = \frac{x_{ij}(t,w)}{\lambda_i}
	~,\quad
	\rho_i(t,w) = \sum_{j \in J_i} \rho_{ij}(t,w)
	~.
\end{equation*}

To complete the algorithm definition, we define a regularization term for any edge $(i,j)$ and by any time $0\leq t\leq 1$, depending on the marginal weights $w(t)$ at time $t$, and correspondingly how much $X(t)$ and $Y(t)$ would decrease should an online vertex of type $i$ be matched to offline vertex $j$ at time $t$:
\begin{equation}
	\label{eqn:regularization-top_half}
	\begin{multlined}
		r_{ij}(t)=\alpha(t)~ \underbrace{\sum_{i'\in I_j} \Big(w_{i'j}(t)-\big(w_{i'j}(t)-w_{ij}(t)\big)^+\Big) x_{i'j}}_{\mbox{\small decrease of $X(t)$}}
		\\[1ex]
		+
		\beta(t)~ \underbrace{\sum_{i'\in I_j} \int_{(w_{i'j}(t)-w_{ij}(t))^+}^{w_{i'j}(t)} \lambda_{i'}
			\Big( 
			p\big(\rho_{i'}(t,w)\big)
			-
			p\big(\rho_{i'}(t,w)-\rho_{i'j}(t,w)\big)
			\Big) dw}_{\mbox{\small decrease of $Y(t)$}}~,
	\end{multlined}
\end{equation}
where we define:
\[p(\rho) = \min \big\{ 2\rho, 1 \big\}~.\]

Note that in the unweighted setting, the regularization term simplifies from Eqn.~\eqref{eqn:regularization-top_half} to Eqn.~\eqref{eqn:regularization}.

We check that if $j$ is unmatched at time $t$, then for any $i'\in I_j$, $w_{i'j}(t)=1$ and $\rho_{i'j}(t,w)=\rho_{i'j}=\rho_{i'j}(t)$ for any $w\in [0,1]$, 
so the decrease of $X(t)$ equals $x_j = x_j(t)$,
while the decrease of $Y(t)$ becomes
$\sum_{i' \in I_j} \lambda_{i'} \Big( p \big(\rho_{i'}(t)\big) - p\big(\rho_{i'}(t) - \rho_{i'j}(t)\big)\Big)$ because the lower and the upper limits of each integral are $0$ and $1$ respectively.

Otherwise, $j$ is already matched, then every $w_{i'j}(t)$ and $\rho_{i'j}(t,w)$ are zero. Hence, both terms are zero, which again match the definition in Eqn.~\eqref{eqn:regularization}.

\begin{tcolorbox}[beforeafter skip=10pt]
	\textbf{\reggreedy (for Edge-Weighted Matching with Free Disposal)}\\[1ex]
	\emph{Input at the beginning:}
	\begin{itemize}[itemsep=0pt, topsep=4pt]
		\item Type graph $G = (I, J, E)$;
		\item Arrival rates $(\lambda_i)_{i \in I}$;
		\item Solution of the Natural LP $(x_{ij})_{(i,j) \in E}$.
	\end{itemize}
	\smallskip
	\emph{When an online vertex of type $i \in I$ arrives at time $0 \le t \le 1$:}
	\begin{itemize}[itemsep=0pt, topsep=4pt]
		\item Match it to an offline neighbor $j$
		that maximizes $w_{ij}(t)-r_{ij}(t)$, where $r_{ij}(t)$ is given in Eqn. \eqref{eqn:regularization-top_half}.
	\end{itemize}
\end{tcolorbox}

\begin{theorem}
	\label{thm:regularized-greedy-app}
	For any instance of edge-weighted online stochastic matching with free disposal, \reggreedy (for Edge-Weighted Matching with Free Disposal) is $0.706$-competitive.
\end{theorem}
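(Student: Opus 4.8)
The plan is to verify the three conditions of the potential function analysis framework (Section~\ref{subsec:analysis-framework}) for $\pot$ in Eqn.~\eqref{eqn:potential-edge-weighted} and for $A(t)$ equal to the total weight matched before time~$t$; then $\E\,A(1)\ge\pot(0)\ge\Gamma\cdot\opt$ with $\Gamma>0.706$ follows exactly as in Section~\ref{sec:stochasticSWOR}. Conditions~\ref{con:start} and~\ref{con:end} are algorithm independent. Condition~\ref{con:end} is immediate from $\alpha(1)=\beta(1)=0$ (Property~\ref{prop: end_0.5} of Lemma~\ref{lem:alpha-beta_top_half}). For Condition~\ref{con:start}, at $t=0$ nothing is matched, so $w_{ij}(0)=w_{ij}$ and $X(0)=\sum_{(i,j)\in E} w_{ij}x_{ij}$ is the Natural LP objective, which is at least $\opt$; moreover at every weight level $w$ we have $\sum_{(i,j)}y_{ij}(0,w)=\sum_i\lambda_i p(\rho_i(0,w))\ge\sum_i\lambda_i\rho_i(0,w)=\sum_{(i,j)}x_{ij}(0,w)$ since $p(\rho)\ge\rho$, and integrating over $w\ge 0$ gives $Y(0)\ge X(0)$. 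Hence $\pot(0)=\alpha(0)X(0)+\beta(0)Y(0)\ge(\alpha(0)+\beta(0))\opt>0.706\,\opt$ by Property~\ref{prop: start_0.5}.

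The work is in Condition~\ref{con:monotone}. Fixing a time $t$ and conditioning on the marginal weights $w(t)$, the derivative $\frac{d}{dt}\E[A(t)+\pot(t)\mid w(t)]$ splits into the algorithm-independent term $\alpha'(t)X(t)+\beta'(t)Y(t)$ and an algorithm-dependent term: when a type-$i$ online vertex arrives (Poisson rate $\lambda_i$) and is matched to $j$, the algorithm-dependent part changes by $w_{ij}(t)-r_{ij}(t)$, because $X(t)$ and $Y(t)$ depend only on $w(t)$ and their decreases upon this match are precisely the two underbraced quantities in Eqn.~\eqref{eqn:regularization-top_half}. Since \reggreedy matches $i$ to the neighbor maximizing $w_{ij}(t)-r_{ij}(t)$, and since matching to $\perp$ is always an available option with value $0$ while $w_{ij}(t)-r_{ij}(t)\ge 0$ (one checks $r_{ij}(t)\le w_{ij}(t)x_j(\alpha(t)+2\beta(t))\le w_{ij}(t)$ using $x_j\le 1$ and $\alpha(t)+2\beta(t)\le 1$), \reggreedy's algorithm-dependent part is at least that obtained from any randomized choice of offline vertex, in particular that of \tophalf, whose match rate of edge $(i,j)$ is $y_{ij}(t)$. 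As $\sum_{(i,j)}w_{ij}(t)y_{ij}(t)=Y(t)$, it therefore suffices to show $\E[A(t)+\pot(t)\mid w(t)]$ is non-decreasing when online vertices are matched by \tophalf.

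To establish this I would decompose over weight levels $w\in[0,\infty)$, so that for each fixed $w$ the quantities $x_{ij}(t,w)$, $\rho_{ij}(t,w)$, $\rho_i(t,w)$ and $y_{ij}(t,w)$ play the roles of $x_{ij}(t)$, $\rho_{ij}(t)$, $\rho_i(t)$ and \tophalf's match rates in the analysis of Theorem~\ref{thm:swor} with $\theta=\tfrac12$. The relaxed constraints required by Lemma~\ref{lem:SWOR-diff-ineq} hold level-wise: $\rho_i(t,w)\le 1$ and $x_j(t,w)\le 1$ from the Natural LP constraints, and the Converse-Jensen bound of Corollary~\ref{cor:converse-jensen-theta} since each $x_{ij}(t,w)$ is either $0$ or $x_{ij}$. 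Using the identity $\sum_j y_{ij}(t,w)=\lambda_i p(\rho_i(t,w))$ — which holds because the top-half rule selects exactly the $x_i(t,w)$-mass of highest-marginal-weight edges — one reproduces at each level $w$ the chain of inequalities from the proof of Theorem~\ref{thm:swor}; integrating over $w$ and invoking Properties~\ref{prop: derivative of alpha_0.5} and~\ref{prop: derivative of beta_0.5} of Lemma~\ref{lem:alpha-beta_top_half} (identical to Lemma~\ref{lem:alpha-beta} at $\theta=\tfrac12$) makes the leftover expression vanish.

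The main obstacle is the level-wise version of the quadratic estimate, i.e., the \tophalf counterpart of Lemma~\ref{lem:SWOR-diff-ineq}. Two points require care. First, free disposal couples weight levels: a rematch of $j$ with marginal weight $m$ zeroes out the level-$w$ mass of $j$'s edge $(i',j)$ only for $w$ in the window $\bigl((w_{i'j}(t)-m)^+,\,w_{i'j}(t)\bigr]$, so one must reorganize the double sum $\sum_{(i,j)}y_{ij}(t)\,(\text{dec }Y)_{ij}$ by offline vertex and weight level, check that at each level what emerges is the left-hand side of the relevant lemma, and use that every $w_{i'j}(t)$ is non-increasing in $t$ to keep the per-level dynamics monotone. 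Second, \tophalf's per-edge match rates are not proportional to the $x_{ij}(t,w)$ — the top-half rule over-weights high-marginal-weight neighbors — so Lemma~\ref{lem:SWOR-diff-ineq} cannot be invoked verbatim; instead one needs the \tophalf-tailored differential inequality of \citet{HuangSY:STOC:2022}, adapted so that the amortization over offline vertices used in the proof of Lemma~\ref{lem:SWOR-diff-ineq} still closes. I expect the bulk of the technical effort to lie in this adapted lemma.
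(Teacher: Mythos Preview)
Your proposal is correct and follows essentially the same approach as the paper: verify Conditions~\ref{con:start}--\ref{con:monotone}, reduce Condition~\ref{con:monotone} to the case where online vertices are matched by \tophalf (since \reggreedy maximizes the algorithm-dependent part of the derivative), and then close with the \citet{HuangSY:STOC:2022} differential inequality together with Properties~\ref{prop: derivative of alpha_0.5} and~\ref{prop: derivative of beta_0.5}. Your identification of the two obstacles---the weight-window coupling induced by free disposal and the fact that \tophalf's per-edge rates are not proportional to $x_{ij}(t,w)$---is exactly right, and the ``\tophalf-tailored differential inequality'' you point to is precisely what the paper states as Lemma~\ref{lem:HuangSY-STOC-2022-restate}.

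The only organizational difference is that the paper does not attempt to reproduce the Theorem~\ref{thm:swor} chain level-wise. Instead it bounds the three derivative pieces separately: $\frac{d}{dt}\E[A(t)\mid w(t)]=Y(t)$; $\frac{d}{dt}\E[X(t)\mid w(t)]\ge -Y(t)$ via the one-line estimate $\sum_{i'}(w_{i'j}(t)-(w_{i'j}(t)-w_{ij}(t))^+)x_{i'j}\le w_{ij}(t)$; and $\frac{d}{dt}\E[Y(t)\mid w(t)]\ge 2(1+\ln 2)X(t)-(3+\ln 2)Y(t)$ by Lemma~\ref{lem:HuangSY-STOC-2022-restate}. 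The weight-level decomposition you describe is then confined entirely to the proof of that lemma (where the coupling is handled by extending the integration window to $[0,\infty)$ on the nonnegative part and bounding the $\Delta_{ij}(w)$ part separately). This is cleaner than threading the level-wise argument through the whole derivative, and it avoids the need to ``adapt the amortization'' of Lemma~\ref{lem:SWOR-diff-ineq}---the HuangSY inequality is self-contained. Your aside that $r_{ij}(t)\le w_{ij}(t)$ is correct but unnecessary: the reduction to \tophalf only needs that the pointwise maximum dominates any convex combination.
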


The proof of this theorem is based on Lemma 5 from \citet{HuangSY:STOC:2022}. For the sake of completeness, we include its proof at the end of the section.

\begin{lemma}[c.f.\ \citet{HuangSY:STOC:2022}]
	\label{lem:HuangSY-STOC-2022-restate}
	For any $0\leq t\leq 1$:
	\begin{equation}
		\label{eq:restated-equation}
		\frac{d}{dt} \E\, \big[Y(t)\mid w(t)\big] \\
		\geq 2(1+\ln 2)X(t) - (3+\ln 2) Y(t)~.
	\end{equation}
\end{lemma}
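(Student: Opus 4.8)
The inequality \eqref{eq:restated-equation} is a reformulation of Lemma~5 of \citet{HuangSY:STOC:2022}, so the proof adapts their argument; here is the plan. The first move is to decompose over weight levels. Since $Y(t)=\int_0^\infty \sum_{(i,j)\in E} y_{ij}(t,w)\,dw$ and $X(t)=\int_0^\infty \sum_{(i,j)\in E} x_{ij}(t,w)\,dw$, linearity gives $\frac{d}{dt}\E[Y(t)\mid w(t)]=\int_0^\infty \frac{d}{dt}\E[\sum_{(i,j)}y_{ij}(t,w)\mid w(t)]\,dw$, so it suffices to prove, for almost every level $w\ge 0$, the sliced inequality
\[
\frac{d}{dt}\,\E\Big[\textstyle\sum_{(i,j)\in E} y_{ij}(t,w)\;\Big|\;w(t)\Big]\;\ge\;2(1+\ln 2)\textstyle\sum_{(i,j)} x_{ij}(t,w)\;-\;(3+\ln 2)\textstyle\sum_{(i,j)} y_{ij}(t,w),
\]
after which integrating over $w$ recovers \eqref{eq:restated-equation}.

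Next, fix $w$ and abbreviate $X_w=\sum_{(i,j)}x_{ij}(t,w)=\sum_i\lambda_i\rho_i(t,w)$ and $Y_w=\sum_{(i,j)}y_{ij}(t,w)$. Because \tophalf orders $J_i$ by descending marginal weight and samples the top $\tfrac{\lambda_i}{2}$ mass, the level-$w$ edges of type $i$ occupy the front of the sorted list, so $\sum_j y_{ij}(t,w)=\lambda_i\,p(\rho_i(t,w))$ and $Y_w=\sum_i\lambda_i p(\rho_i(t,w))$, a nonincreasing function of the state. Matching an online vertex to an offline vertex $j_0$ at level $\ge w$ raises $w_{j_0}(t)$ by at least $w$ (free disposal), which can only remove level-$w$ edges incident to $j_0$ and never creates new ones, and drops each $\rho_{i'}(t,w)$ with $i'\in I_{j_0}$ by at most $\rho_{i'j_0}(t,w)$. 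Since this occurs at rate $y_{j_0}(t,w)=\sum_{i'\in I_{j_0}}y_{i'j_0}(t,w)$, the concavity of $p$ yields
\[
\frac{d}{dt}\,\E\Big[\textstyle\sum_{(i,j)}y_{ij}(t,w)\Big]\;\ge\;-\sum_{j_0\in J}y_{j_0}(t,w)\sum_{i'\in I_{j_0}}\lambda_{i'}\big(p(\rho_{i'}(t,w))-p(\rho_{i'}(t,w)-\rho_{i'j_0}(t,w))\big).
\]
It therefore remains to bound this last sum by $(3+\ln 2)Y_w-2(1+\ln 2)X_w$. This has the same shape as Eqn.~\eqref{eqn:SWOR-diff-ineq-restate-with-asm1} with $\theta=\tfrac12$, but with the honest \tophalf match rates $y_{i'j_0}(t,w)$ in place of the artificially scaled rates appearing in Lemma~\ref{lem:SWOR-diff-ineq}, so it is not a corollary of that lemma and has to be argued directly as in \citet{HuangSY:STOC:2022}: reduce to the case $\rho_i(t,w)\ge\tfrac12$ for all $i$ by the mass-rescaling of Assumption~\ref{asm:rho<theta} (which preserves $x_{ij}$, $\lambda_i p(\rho_i)$, and $\lambda_i(p(\rho_i)-p(\rho_i-\rho_{ij}))$), amortize $\sum_j(y_j-x_j)=\sum_i\lambda_i\big(p(\rho_i(t,w))-\rho_i(t,w)\big)$ by attributing each online type to the offline vertex it mainly matches, and then verify a per-offline-vertex inequality using $y_{ij}(t,w)\le 2x_{ij}$, $\sum_j y_{ij}(t,w)=\lambda_i p(\rho_i(t,w))$, and the offline constraint $\tfrac1\theta\sum_i\big(x_{ij}(t,w)-(1-\theta)\lambda_i\big)^+\le 1+\tfrac{1-\theta}{\theta}\ln(1-\theta)$ from Corollary~\ref{cor:converse-jensen-theta} with $\theta=\tfrac12$, which survives the restriction to any level-$w$ subgraph (the left-hand side only shrinks). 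Combining the two displays above gives the sliced inequality, and integrating over $w$ proves \eqref{eq:restated-equation}.

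The main obstacle is this last step: establishing the per-slice inequality for the true match rates of \tophalf. Because one cannot pre-scale the match rates by a $p(\rho_i)$ factor as in the \swor analysis, the per-offline-vertex amortization has to be carried out with the case split $\rho_i(t,w)\lessgtr\tfrac12$ — the edge whose probability \tophalf truncates when $\rho_i(t,w)>\tfrac12$ needs separate handling — exactly as in \citet{HuangSY:STOC:2022}. A secondary, bookkeeping-heavy point is the claim used above that raising a marginal weight only removes mass from lower weight levels and reshuffles \tophalf's descending order without increasing $Y_w$.
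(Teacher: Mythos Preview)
Your per-level decomposition of the derivative contains a genuine gap. You claim that the rate at which level-$w$ mass at an offline vertex $j_0$ is destroyed equals $y_{j_0}(t,w)=\sum_{i}y_{ij_0}(t,w)$, i.e., only matches via edges whose marginal weight is at least $w$ can touch $Y_w$. This is false in the free-disposal model: a match along $(i,j_0)$ with $w_{ij_0}(t)<w$ still raises $w_{j_0}(t)$, and hence can push an edge $(i',j_0)$ with $w\le w_{i'j_0}(t)<w+w_{ij_0}(t)$ below the level-$w$ threshold. Concretely, suppose $i''$ has $x_{i''j_0}=\lambda_{i''}$ and $w_{i''j_0}(t)=0.3$, while $i$ has $w_{ij_0}(t)=0.5$; then a type-$i''$ arrival (a ``level-$0.3$'' match) kills $(i,j_0)$ at level $w=0.4$, yet contributes nothing to your $y_{j_0}(t,0.4)$. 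Since these cross-level hits make $\frac{d}{dt}Y_w$ strictly more negative than your display allows, the inequality
\[
\frac{d}{dt}\,\E\Big[\sum_{(i,j)}y_{ij}(t,w)\Big]\;\ge\;-\sum_{j_0}y_{j_0}(t,w)\sum_{i'}\lambda_{i'}\big(p(\rho_{i'}(t,w))-p(\rho_{i'}(t,w)-\rho_{i'j_0}(t,w))\big)
\]
is not a valid lower bound, and the rest of your plan (amortize per level as in the \swor analysis) is built on it.

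The paper avoids exactly this trap: it does \emph{not} slice the derivative level by level. It writes $\frac{d}{dt}\E[Y(t)\mid w(t)]$ as $-\sum_{(i,j)}y_{ij}(t)\sum_{i'}\int_{(w_{i'j}(t)-w_{ij}(t))^+}^{w_{i'j}(t)}\lambda_{i'}\big(p(\rho_{i'}(w))-p(\rho_{i'}(w)-\rho_{i'j}(w))\big)\,dw$, with the \emph{global} rate $y_{ij}(t)$ and the integral recording precisely which levels are hit. It then introduces $\Delta_{ij}=(2x_{ij}-\lambda_i)^+$ and $\Delta_j=\sum_i\Delta_{ij}$, splits the integrand as $(\cdots-\Delta_{i'j}(w))+\Delta_{i'j}(w)$, and uses the global bound $y_j(t)\le 2-\Delta_j$ on the first piece while converting the second piece back to $\int y_{ij}(t,w)\Delta_j\,dw$. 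Only after this maneuver does a per-level inequality emerge, and it mixes the level-$w$ quantities $x_{ij}(w),y_{ij}(w)$ with the global $\Delta_j$; the finish uses an auxiliary $(z_{ij})$ lying between $\lambda_i p(\rho_i)-\lambda_i p(\rho_i-\rho_{ij})-\Delta_{ij}$ and $y_{ij}$, together with Lemma~\ref{lem:property-of-p-for-theta=1/2} and Corollary~\ref{cor:converse-jensen-theta}. Your attempt to transplant the \swor reductions (Assumptions~\ref{asm:rho<theta}--\ref{asm:rho>1-theta}) directly to each slice does not substitute for this $\Delta$-splitting, because the match rates that actually drive the level-$w$ loss are not level-$w$ objects.
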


\begin{proof}[Proof of Theorem \ref{thm:regularized-greedy-app}]
	Following the potential function analysis framework in Section \ref{subsec:analysis-framework}, it suffices to show that our potential function $\pot(t)$ given in Eqn.~\eqref{eqn:potential-edge-weighted} and the algorithm's weight of matching $A(t)$ satisfies Conditions \ref{con:start}, and \ref{con:end} and \ref{con:monotone} with $\Gamma = 0.706$.
	
	\paragraph{Condition \ref{con:start}.}
	Since \tophalf doubles the match rates of the most valuable half of neighbors of each online vertex type, we have:
	\[
	Y(0)
	=\sum_{(i,j)\in E}
	w_{ij}(0) y_{ij}(0)
	\geq \sum_{(i,j)\in E} w_{ij}(0) x_{ij}
	= X(0) = \opt
	~.
	\]
	Therefore, we get that:
	\begin{align*}
		\pot(0) 
		= \alpha(0) X(0) + \beta(0) Y(0)
		\ge \big(\alpha(0) + \beta(0)\big) \cdot \opt ~.
	\end{align*}
	Further, Property \ref{prop: start_0.5} of Lemma \ref{lem:alpha-beta_top_half} suggests that $\alpha(0) + \beta(0) > 0.706$, so the condition follows.

	\paragraph{Condition \ref{con:end}.}
	It follows by Property \ref{prop: end_0.5} of Lemma \ref{lem:alpha-beta_top_half}.
	
	\paragraph{Condition \ref{con:monotone}.}
	For any time $0\le t \le 1$, and conditioned on any marginal weights $w(t)$ at time $t$, we next show that the derivative of $\E\, [ \alg(t) + \pot(t) \mid w(t) ]$ is nonnegative.
	The derivative can be expressed as:
	\begin{equation}
		\label{eqn:SWOR-derivative-of-A-plus-Phi_top_half}
		\begin{multlined}
			\overbrace{\frac{d}{dt} \E\, \big[ \alg(t) \mid w(t) \big]
				+
				\alpha(t) \frac{d}{dt} \E\, \big[X(t)\mid w(t) \big]
				+
				\beta(t) \frac{d}{dt} \E\, \big[Y(t)\mid w(t)\big]}^{\mbox{\small maximized by Regularized Greedy}}
			\\[0.5ex]
			+\underbrace{\Big( \frac{d}{dt} \alpha(t) \Big) X(t)
				+\Big( \frac{d}{dt} \beta(t) \Big) Y(t)}_{\mbox{\small independent of algorithm's decisions}}~.
		\end{multlined}
	\end{equation}
	
	To formalize the remark above the first line of Eqn. \eqref{eqn:SWOR-derivative-of-A-plus-Phi_top_half}, let $\mu_{ij}^*(t)$ be the indicator of whether \reggreedy would match an online vertex $i$, arrived at time $t$, to offline vertex $j$. For concreteness, we assume that \reggreedy breaks ties by the lexicographical order of offline vertices, although the argument would work for an arbitrary tie-breaking rule. That is:
	\[\mu_{ij}^*(t)=
	\begin{cases}
		1 & \text{if }j\text{ is the first offline vertex that maximizes }w_{ij}(t)-r_{ij}(t)~;\\
		0 & \text{otherwise.}
	\end{cases}
	\]
	
	Correspondingly, denote the match rate of edge $(i,j)$ by \reggreedy as:
	\[y_{ij}^*(t) = \lambda_i \mu_{ij}^*(t)~.	\]
	
	Then, the first line of Eqn. \eqref{eqn:SWOR-derivative-of-A-plus-Phi_top_half} equals:
	\begin{equation*}
		\begin{multlined}
			\sum_{(i,j)\in E} w_{ij}(t) y_{ij}^*(t)
			- \alpha(t)
			\sum_{(i,j)\in E} y_{ij}^*(t) \sum_{i' \in I_j} \Big(w_{i'j}(t)-\big(w_{i'j}(t)-w_{ij}(t)\big)^+\Big) x_{i'j}\\
			- \beta(t)
			\sum_{(i,j)\in E} y_{ij}^*(t)
			\sum_{i'\in I_j} \int_{(w_{i'j}(t)-w_{ij}(t))^+}^{w_{i'j}(t)} \lambda_{i'}
			\Big( 
			p\big(\rho_{i'}(t,w)\big)
			-
			p\big(\rho_{i'}(t,w)-\rho_{i'j}(t,w)\big)
			\Big) dw\\
			=\sum_{(i,j)\in E} \big(w_{ij}(t)-r_{ij}(t)\big) y_{ij}^*(t)~,
		\end{multlined}
	\end{equation*}
	which is maximized by \reggreedy among all possible online decisions at time $t$.
	
	Hence, to prove that the derivative in Eqn. \eqref{eqn:SWOR-derivative-of-A-plus-Phi_top_half} is nonnegative, it suffices to prove it \emph{for some} online decisions at time $t$. In particular, we consider the online decisions that would have been made by \tophalf. The three terms on the first line of Eqn. \eqref{eqn:SWOR-derivative-of-A-plus-Phi_top_half} are:
	\begin{equation*}
		\begin{aligned}
			\frac{d}{dt}\E\, \big[ \alg(t) \mid w(t) \big]
			&= \sum_{(i,j)\in E} w_{ij}(t) y_{ij}(t) = Y(t)~,\\
			\frac{d}{dt} \E\, \big[X(t)\mid w(t) \big]
			&= -\sum_{(i,j)\in E} y_{ij}(t) \sum_{i' \in I_j} \Big(w_{i'j}(t)-\big(w_{i'j}(t)-w_{ij}(t)\big)^+\Big) x_{i'j}\\
			&\geq - \sum_{(i,j)\in E} y_{ij}(t) w_{ij}(t) \sum_{i' \in I_j}  x_{i'j}\\
			&\geq - \sum_{(i,j)\in E} w_{ij}(t) y_{ij}(t)\\
			&= - Y(t)~,
		\end{aligned}
	\end{equation*}
	and by Lemma \ref{lem:HuangSY-STOC-2022-restate}:
	\begin{equation*}
		\frac{d}{dt} \E\, \big[Y(t)\mid w(t)\big] \\
		\geq 2(1+\ln 2)X(t) - (3+\ln 2) Y(t)~.
	\end{equation*}
	
	Combining with $\alpha(t)\geq 0$ and $\beta(t)\geq 0$ due to Property \ref{prop: pos_0.5} of Lemma \ref{lem:alpha-beta_top_half} and merging terms, we get that:
	\begin{equation*}
		\begin{aligned}
			\eqref{eqn:SWOR-derivative-of-A-plus-Phi_top_half}
			\geq \Big(2(1+\ln 2)\beta(t)+\frac{d}{dt}\alpha(t)\Big) X(t)
			+ \Big(1-\alpha(t)-(3+\ln 2)\beta(t)+\frac{d}{dt} \beta(t) \Big) Y(t)~,
		\end{aligned}
	\end{equation*}
	where right-hand-side equals zero by Properties \ref{prop: derivative of alpha_0.5} and \ref{prop: derivative of beta_0.5} of Lemma \ref{lem:alpha-beta_top_half}. This finishes the proof.
\end{proof}

The proof of Lemma \ref{lem:HuangSY-STOC-2022-restate} requires the following lemma about function $p$.
\begin{lemma}
	\label{lem:property-of-p-for-theta=1/2}
	For any $0\leq \rho_i \leq 1$ and any $\rho_{ij}\geq 0$ such that $\sum_{j\in J_i} \rho_{ij} = \rho_i$:
	\[
	p(\rho_i) - \rho_i \geq 
	\frac12 \sum_{j\in J_i} 
	\Big(p(\rho_i)-p(\rho_i-\rho_{ij})-(2\rho_{ij}-1)^+\Big)~.
	\]
\end{lemma}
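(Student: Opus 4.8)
The plan is to reduce the inequality to a statement about a single summand
\[
g_j \;\defeq\; p(\rho_i) - p\bigl(\rho_i-\rho_{ij}\bigr) - \bigl(2\rho_{ij}-1\bigr)^{+}
\]
viewed as a function of $\rho_{ij}$, and then to bound $\sum_{j\in J_i} g_j$ over all feasible $(\rho_{ij})$ with $\rho_{ij}\ge 0$ and $\sum_j \rho_{ij}=\rho_i$. Recall $p(\rho)=\min\{2\rho,1\}$, so $p$ has slope $2$ on $[0,\tfrac12]$ and is flat on $[\tfrac12,1]$. First I would dispatch the case $\rho_i\le\tfrac12$: there $\rho_i-\rho_{ij}\le\rho_i\le\tfrac12$ and $\rho_{ij}\le\tfrac12$, so $g_j=2\rho_i-2(\rho_i-\rho_{ij})-0=2\rho_{ij}$ and the right-hand side is $\tfrac12\sum_j 2\rho_{ij}=\rho_i=p(\rho_i)-\rho_i$; the inequality holds with equality.

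The substance is the case $\rho_i>\tfrac12$, where $p(\rho_i)=1$ and the target becomes $\tfrac12\sum_j g_j\le 1-\rho_i$. A short case check over the three ranges $\rho_{ij}\in[0,\rho_i-\tfrac12]$, $(\rho_i-\tfrac12,\tfrac12]$, $(\tfrac12,\rho_i]$ (using $0\le\rho_i-\rho_{ij}\le\rho_i\le 1$) gives the closed form
\[
g_j \;=\; 2\Bigl(\min\{\rho_{ij},\tfrac12\}-\bigl(\rho_i-\tfrac12\bigr)\Bigr)^{+}\in\;\bigl[\,0,\;2(1-\rho_i)\,\bigr].
\]
So every summand is nonnegative and at most $2(1-\rho_i)$, and $g_j=0$ unless $\rho_{ij}>\rho_i-\tfrac12$. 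Let $S\defeq\{\,j\in J_i:\rho_{ij}>\rho_i-\tfrac12\,\}$; only these neighbors contribute. If $|S|\le 1$, then $\sum_j g_j$ is either $0$ or a single term $\le 2(1-\rho_i)$, and we are done. If $|S|\ge 2$, I claim no neighbor has $\rho_{ij}>\tfrac12$: such a neighbor would leave residual mass $\rho_i-\rho_{ij}<\rho_i-\tfrac12$ available to the others, contradicting the existence of a second member of $S$ (here $\rho_i-\tfrac12>0$ is used). Hence every $j\in S$ has $\rho_{ij}\le\tfrac12$, so $g_j=2(\rho_{ij}-\rho_i+\tfrac12)$ and
\[
\sum_j g_j \;=\; 2\Bigl(\sum_{j\in S}\rho_{ij}-|S|\bigl(\rho_i-\tfrac12\bigr)\Bigr)\;\le\;2\bigl(\rho_i-2(\rho_i-\tfrac12)\bigr)\;=\;2(1-\rho_i),
\]
using $\sum_{j\in S}\rho_{ij}\le\rho_i$, $|S|\ge 2$, and $\rho_i-\tfrac12\ge 0$; dividing by $2$ finishes the case. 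The boundary value $\rho_i=1$ is harmless, since then $S$ cannot contain two elements (two neighbors each exceeding $\tfrac12$ would violate $\sum_j\rho_{ij}=1$), so it falls under $|S|\le 1$.

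The one point that needs care is that $g_j$, as a function of $\rho_{ij}$, is \emph{not} concave — its slope pattern is $0\to 2\to 0$ — so a single application of Jensen's inequality will not close the argument. This is exactly why the proof splits on $|S|$ and invokes the ``at most one neighbor can exceed $\tfrac12$'' fact, which is the only place where the normalization $\sum_j\rho_{ij}=\rho_i\le 1$ is essential; everything else is the routine range-by-range evaluation of $p$.
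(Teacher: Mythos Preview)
Your proof is correct and follows essentially the same route as the paper: the $\rho_i\le\tfrac12$ case is handled identically (both sides equal $\rho_i$), and for $\rho_i>\tfrac12$ both arguments rest on the observation that at most one $\rho_{ij}$ can exceed $\tfrac12$. The only difference is packaging --- the paper sub-cases on whether some $\rho_{ij}>\tfrac12$ and then, in the ``all small'' sub-case, uses convexity of $1-p(\rho_i-\rho_{ij})$ to locate the extremal split $(\tfrac12,\rho_i-\tfrac12)$, whereas you derive the closed form $g_j=2\bigl(\min\{\rho_{ij},\tfrac12\}-(\rho_i-\tfrac12)\bigr)^+$ and split on $|S|$; these are equivalent reorganizations of the same elementary case analysis.
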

\begin{proof}
	If $\rho_i\leq \frac12$, then $p(\rho_i)=2\rho_i$ on both sides, and $(2\rho_{ij}-1)^+=0$ for all $j\in J_i$. Thus, both sides equal $\rho_i$.
	
	Otherwise, we have $\rho_i > \frac12$. The left-hand-side equals $1-\rho_i$.
	
	If $\rho_{ij} > \frac12$ for some $j$, then $\rho_i - \rho_{ij'} > \frac12$ for all other $j'\ne j$. Also, $p(\rho_i) = p(\rho_i-\rho_{ij'}) = 1$.
	Hence, the right-hand-side equals $\frac12(1-2(\rho_i-\rho_{ij})-(2\rho_{ij}-1)) = 1-\rho_i$.
	
	Otherwise, the right-hand-side is $\sum_{j\in J_i} (1-p(\rho_i-\rho_{ij}))$. By the convexity in $\rho_{ij}$, $1-p(\rho_i-\rho_{ij})$ is maximized when there are $j_1, j_2$ with $\rho_{ij_1}=\frac12$ and $\rho_{ij_2}=\rho_i - \frac12$, with maximum value $1-\rho_i$.
\end{proof}

Finally, we give the proof of Lemma \ref{lem:HuangSY-STOC-2022-restate}.

\begin{proof}[Proof of Lemma \ref{lem:HuangSY-STOC-2022-restate}]
	Recall that our goal is to prove Eqn.~\eqref{eq:restated-equation}.
	For notational simplicity, we omit $t$ from all variables in this proof except for the marginal weights.
	
	Define $\Delta_{ij} = \big(2x_{ij} - \lambda_i\big)^+$ and
	$\Delta_j = \sum_{i\in J_i} \Delta_{ij}$.
	Further for any weight-level $w\geq 0$ 
	let $\Delta_{ij}(w) = \big(2x_{ij}(w) - \lambda_i\big)^+ \leq \Delta_{ij}$.
	The definition of function $p$ implies:
	\begin{equation}
		\label{eq:relationship-beteween-p-and-Delta}
		\lambda_i p\big(\rho_i(w)\big) 
		- \lambda_i p\big(\rho_i(w)-\rho_{ij}(w)\big) - \Delta_{ij}(w) \geq 0~.
	\end{equation}
	
	By the definition of \tophalf, we have, for any weight-level $w\geq 0$:
	\begin{equation}
		\label{eq:match-rate-of-i}
		\sum_{j\in J_i} y_{ij}(w) = \lambda_i p(\rho_i(w))~,
	\end{equation}
	and
	\begin{equation}
		\label{eq:match-rate-of-(i,j)}
		\underbrace{\lambda_i p(\rho_i(w))}_{\text{match rate of}~i}
		-
		\underbrace{\lambda_i p(\rho_i(w)-\rho_{ij}(w))}_{\text{max match rate of}~(i,j')~\text{for}~j'\ne j}
		\leq 
		\underbrace{y_{ij}(w)}_{\text{match rate of}~(i,j)}
		\leq 
		\underbrace{\min\{2x_{ij}(w),\lambda_i\}}_{\text{doubled LP match rate, capped by}~\lambda_i}~.
	\end{equation}
	
	By Eqn.~\eqref{eq:match-rate-of-(i,j)}, we have:
	\begin{equation}
		\label{eq:upper-bound-of-y_j}
		y_j \leq \sum_{i\in I_j} \min\{2x_{ij},\lambda_i\} 
		= 2\sum_{i\in I_j} x_{ij} - \sum_{i\in I_j} \Delta_{ij}
		\leq 2-\Delta_j~.
	\end{equation}
	
	The third term $\frac{d}{dt} \E\, \big[Y(t)\mid w(t)\big]$ of Eqn.~\eqref{eq:restated-equation} equals:
	\begin{equation}
		\label{eq:derivative-term}
		-\sum_{(i,j)\in E} y_{ij}
		\sum_{i'\in I_j} \int_{(w_{i'j}(t)-w_{ij}(t))^+}^{w_{i'j}(t)}
		\Big( 
		\lambda_{i'} p\big(\rho_{i'}(w)\big)
		-
		\lambda_{i'} p\big(\rho_{i'}(w)-\rho_{i'j}(w)\big)
		\Big) dw~.
	\end{equation}
	
	Ignoring the leading negative sign therein for now, we bound Eqn.~\eqref{eq:derivative-term} in two parts below:
	\begin{align*}
		&\sum_{(i,j)\in E} y_{ij}
		\sum_{i'\in I_j} \int_{(w_{i'j}(t)-w_{ij}(t))^+}^{w_{i'j}(t)} 
		\Big( 
		\lambda_{i'}p\big(\rho_{i'}(w)\big)
		-
		\lambda_{i'} p\big(\rho_{i'}(w)-\rho_{i'j}(w)\big)
		- \Delta_{i'j}(w)
		\Big) dw\\
		&\leq \int_0^\infty \sum_{(i,j)\in E} y_{ij}
		\sum_{i'\in I_j} 
		\Big( 
		\lambda_{i'}p\big(\rho_{i'}(w)\big)
		-
		\lambda_{i'}p\big(\rho_{i'}(w)-\rho_{i'j}(w)\big)
		- \Delta_{i'j}(w)
		\Big) dw &\tag{Eqn.~\eqref{eq:relationship-beteween-p-and-Delta}}\\
		&\leq \int_0^\infty \sum_{j\in J} (2-\Delta_j)
		\sum_{i'\in I_j}
		\Big( 
		\lambda_{i'} p\big(\rho_{i'}(w)\big)
		-
		\lambda_{i'}p\big(\rho_{i'}(w)-\rho_{i'j}(w)\big)
		- \Delta_{i'j}(w)
		\Big) dw ~,&\tag{Eqn.~\eqref{eq:upper-bound-of-y_j}}
	\end{align*}
	and:
	\begin{align*}
		\sum_{(i,j)\in E} y_{ij}
		\sum_{i'\in I_j} \int_{(w_{i'j}(t)-w_{ij}(t))^+}^{w_{i'j}(t)} 
		\Delta_{i'j}(w) dw
		&\leq \sum_{(i,j)\in E} y_{ij}
		\sum_{i'\in I_j} w_{ij}(t)\Delta_{i'j} 
		&\tag{$\Delta_{i'j}(w)\leq \Delta_{i'j}$}\\
		&= \sum_{(i,j)\in E} y_{ij} w_{ij}(t) \Delta_j
		= \int_0^\infty \sum_{(i,j)\in E} y_{ij}(w) \Delta_j dw~.
	\end{align*}
	
	Together we have that $\frac{d}{dt} \E\, \big[Y(t)\mid w(t)\big]$ is at least:
	\begin{equation*}
		- \int_0^\infty \sum_{j\in J} \bigg((2-\Delta_j)
		\sum_{i\in I_j}
		\Big( 
		\lambda_{i} p\big(\rho_i(w)\big)
		-
		\lambda_{i}p\big(\rho_i(w)-\rho_{ij}(w)\big)
		- \Delta_{ij}(w)
		\Big) \bigg)
		- \sum_{(i,j)\in E} y_{ij}(w) \Delta_j\ dw~.
	\end{equation*}
	
	Since the other two terms $X(t)$ and $Y(t)$ of Eqn.~\eqref{eq:restated-equation}:
	\[X(t)=\int_0^\infty \sum_{(i,j)\in E} x_{ij}(w) dw~,\quad
	Y(t)=\int_0^\infty \sum_{(i,j)\in E} y_{ij}(w) dw\]
	are also integrals over weight-levels from $0$ to $\infty$, it suffices to prove for any weight-level $w\geq 0$ (rearranging terms):
	\begin{equation}
		\label{eq:diff-ineq-unweighted}
		\begin{multlined}
			\sum_{j\in J} (2-\Delta_j)
			\sum_{i\in I_j}
			\Big( 
			\lambda_{i} p\big(\rho_i(w)\big)
			-
			\lambda_{i}p\big(\rho_i(w)-\rho_{ij}(w)\big)
			- \Delta_{ij}(w)
			\Big) 
			+ \sum_{j\in J} y_{j}(w) (\Delta_j-1+\ln 2)\\
			\leq
			(2+2\ln 2) \sum_{(i,j)\in E} \big(y_{ij}(w) - x_{ij}(w)\big)~.
		\end{multlined}
	\end{equation}
	
	From now on, we also omit $w$ from all variables.
	Consider any $(z_{ij})_{(i,j)\in E}$ that satisfies 
	$\sum_{j\in J_i} z_{ij} = 2\sum_{j\in J_i} (y_{ij}-x_{ij})$ for all $i\in I$ and:
	\begin{equation}
		\label{eq:existence-of-z}
		\lambda_i p(\rho_i)
		- \lambda_i p(\rho_i-\rho_{ij})
		- \Delta_{ij}
		\leq z_{ij} \leq y_{ij}~.
	\end{equation}
	
	It exists because (i) by the first part of Eqn.~\eqref{eq:match-rate-of-(i,j)}, the lower bound of $z_{ij}$ is smaller than or equal to the upper bound; (ii) by Lemma \ref{lem:property-of-p-for-theta=1/2} and Eqn.~\eqref{eq:match-rate-of-i} the lower bound sums to at most $2\sum_{j\in J_i} (y_{ij}-x_{ij})$; and (iii) by the second part of Eqn.~\eqref{eq:match-rate-of-(i,j)} the upper bound sums to at least $2\sum_{j\in J_i} (y_{ij}-x_{ij})$.
	
	Eqn.~\eqref{eq:diff-ineq-unweighted} then reduces to:
	\[\sum_{j\in J} (2-\Delta_j) \sum_{i\in I_j} z_{ij} + \sum_{j\in J} y_j (\Delta_j -1+\ln 2)
	\leq (1+\ln 2) \sum_{(i,j)\in E} z_{ij}~.\]
	
	Rearrange terms and this becomes:
	\[
	\sum_{j\in J} (1-\ln 2-\Delta_j) \sum_{i\in I_j} (y_{ij} - z_{ij}) \geq 0~,
	\]
	which follows by Corollary \ref{cor:converse-jensen-theta} that $\Delta_j \leq 1-\ln 2$, as well as Eqn.~\eqref{eq:existence-of-z}.
\end{proof}

\end{document}